\documentclass[pdflatex,sn-mathphys-num]{sn-jnl}

\usepackage{graphicx,tikz-cd}%
\usetikzlibrary{backgrounds}
\usetikzlibrary{decorations, decorations.pathmorphing, decorations.markings}
\usepackage{multirow}%
\usepackage{amsmath,amssymb,amsfonts}%
\usepackage{amsthm,mathtools}%
\usepackage{mathrsfs}%
\usepackage[title]{appendix}%
\usepackage{mathrsfs} 
\usepackage{xcolor}%
\usepackage{textcomp}%
\usepackage{manyfoot}%
\usepackage{booktabs}%
\usepackage{algorithm}%
\usepackage{algorithmicx}%
\usepackage{algpseudocode}%
\usepackage{listings}%
\usepackage{mathpazo}
\usepackage{dutchcal}

\theoremstyle{thmstyleone}%
\hypersetup{
	colorlinks=ture,
	linkcolor=blue,
	filecolor=gray,
	urlcolor=blue,
	citecolor=blue}
\numberwithin{equation}{section}
\newtheorem{definition}{Definition}[section]
\newtheorem{example}{Example}[section]
\newtheorem{theorem}{Theorem}[section]
\newtheorem{lemma}{Lemma}[section]	
\newtheorem{proposition}{Proposition}[section]

\allowdisplaybreaks[4] 

\begin{document}

\title[Higher Chern--Simons Theory]{Higher Chern--Simons Theory in $2n+2$ Dimensions for Balanced 2-term $L_\infty$-Algebras}


\author[1]{\fnm{Danhua} \sur{Song}}\email{songdh@pku.edu.cn}

\author*[2]{\fnm{Yibo} \sur{Wang}}\email{wangyb@amss.ac.cn}


\affil[1]{\orgdiv{Beijing International Center for Mathematical Research}, \orgname{Peking University}, \orgaddress{ \city{Beijing}, \postcode{100871}, \country{China}}}


\affil*[2]{\orgdiv{Academy of Mathematics and Systems Science}, \orgname{Chinese Academy of Sciences}, \orgaddress{\city{Beijing}, \postcode{100190}, \country{China}}}


\abstract{We construct a semistrict higher Chern--Simons (HCS) gauge theory in $2n+2$ dimensions associated with balanced 2-term $L_\infty$-algebras. Starting from the homotopy Maurer--Cartan theory, we first introduce 2-term $L_\infty$-algebra gauge theory, and show that there is a four-dimensional HCS construction. Then we extend invariant bilinear pairings to invariant multilinear forms of the appropriate degree, and define a $(2n+3)$-dimensional higher Pontryagin--Chern form, which is closed and invariant under infinitesimal gauge transformations. Its transgression yields an explicit $(2n+2)$-dimensional HCS form. We further establish a higher Chern--Weil theorem that generates higher transgression forms, and prove that the HCS theory is a distinguished instance of the higher transgression gauge theory. Finally, we apply the extended Cartan homotopy formula in this semistrict setting, and show that it is a common origin of both the higher Chern--Weil theorem and the associated triangle equation.  }

\keywords{higher Chern--Simons theory, higher gauge theory, balanced 2-term $L_\infty$-algebras, homotopy Maurer--Cartan theory,  higher Chern--Weil theorem, extended Cartan homotopy formula}



\maketitle
\tableofcontents

\section{Introduction}
Chern--Simons (CS) theory provides a distinguished example of a topological gauge theory. Higher Chern--Simons (HCS) theory extends this framework to higher gauge theory, in which the gauge algebra is replaced by a higher algebraic structure, such as an  $L_\infty$-algebra \cite{Fiorenza-Rogers-Schreiber,PRCS-2016,RZ-AKSZ}. Homotopy Maurer--Cartan (MC) theory yields a natural $L_\infty$-algebraic formulation of the HCS theory and, in particular, associates a four-dimensional HCS model with a 2-term $L_\infty$-algebra \cite{GR,BLCM}. While higher-dimensional extensions have been constructed in the strict case \cite{DHS-4,DHS-5}, a corresponding construction for general semistrict 2-term $L_\infty$-algebras has remained unavailable.
In this work, we construct a family of semistrict HCS theories in even spacetime dimensions $2n+2$, associated with balanced 2-term $L_\infty$-algebras. Our construction is based on the homotopy MC framework and on a transgression mechanism inspired by extended CS theory \cite{IAGS,FIP,FIPSPSS}. It extends the known four-dimensional semistrict HCS theories to arbitrary even dimensions and reduces, in the strict limit, to the higher-dimensional strict HCS theories of \cite{DHS-4,DHS-5}.

Higher gauge theory generalizes ordinary gauge theory by replacing Lie groups, Lie algebras, and principal bundles with higher Lie groups, $L_\infty$-algebras, and principal higher bundles \cite{Girelli-Pfeiffer,Baez-2007, Baez.2010}. Locally, a higher connection consists of differential forms of several degrees, together with corresponding higher curvatures. Such structures naturally couple to extended objects and arise in a variety of contexts in mathematical physics, including string and brane theories, supergravity, and higher-form gauge theories \cite{Cagnacci-19,Polchinski,Becker-2007,Zeitlin-09,Baez-543, CR-455}. Since the ordinary CS action is formulated in terms of a connection and an invariant polynomial on a Lie algebra, it is natural to seek its extension within this higher gauge-theoretic setting.
The relevant algebraic framework is provided by $L_\infty$-algebras, namely, graded vector spaces equipped with multilinear brackets satisfying the homotopy Jacobi identities \cite{MSJS,TLJS,MP,Kraft-Schnitzer}. Homotopy MC theories associated with cyclic $L_\infty$-algebras provide a broad class of gauge theories with action functionals \cite{B. Zwiebach-1993}. In the case of a Lie algebra tensored with the de Rham complex of an oriented three-manifold, the corresponding homotopy MC action reproduces ordinary CS theory. More generally, the standard homotopy MC construction gives rise to HCS theories associated with $k$-term $L_\infty$-algebras in dimension $k+ 2$ \cite{GR,BLCM}. In particular, a 2-term $L_\infty$-algebra, equivalently a semistrict Lie 2-algebra \cite{JCB-ASC}, yields a four-dimensional HCS theory.

Several four-dimensional semistrict HCS models have been developed. Zucchini formulated a semistrict HCS theory using the Alexandrov--Kontsevich--Schwartz--Zaboronsky formalism \cite{RZ-AKSZ}, while Soncini and Zucchini constructed a model based on a semistrict Lie 2-algebra equipped with an invariant non-degenerate bilinear form \cite{Zucchini-2014-1}. A further distinguished example, associated with a skeletal semistrict Lie 2-algebra, was identified in \cite{Zucchini-2016}. In parallel, strict models based on crossed modules and strict Lie 2-groups have been studied extensively, including their holonomy invariants and applications to integrable field theory \cite{Zucchini-2021,Zucchini-I,Zucchini-II,Zucchini-2019,Schenkel-Vicedo,HC-2024}. Higher-dimensional constructions have also been obtained in the strict setting, including a five-dimensional theory associated with a strict Lie 3-group \cite{DHS-JHEP}. These developments leave open the problem of constructing higher-dimensional HCS theories for general semistrict 2-term $L_\infty$-algebras.

A useful starting point is provided by extended CS theory and its formulation in terms of free differential algebras (FDA) \cite{Salgado, SS, PSSS}. In this setting, one considers closed gauge-invariant forms $\Gamma_{2n+p}$, with $p=3,4,6,8$, generalizing the ordinary Pontryagin--Chern forms \cite{GS,GS1,SKGS}. The corresponding extended CS forms $\mathfrak{C}^{(2n+p-1)}$ are defined locally  by
\begin{equation*}
	d\mathfrak{C}^{(2n+p-1)}=\Gamma_{2n+p}.
\end{equation*}
A particularly relevant example is
\begin{equation}\label{ei}
	\Gamma_{2n+3}=\langle F^n H\rangle,
\end{equation}
where $\langle\cdots\rangle$ is a symmetric invariant $(n+1)$-linear polynomial on the Lie algebra $\mathcal{g}$ of the underlying gauge group:
\begin{equation*}
	\langle\cdots\rangle:\mathcal{g}^{\,n+1}\to\mathbb{R}.
\end{equation*}
Here $F=dA+A\wedge A$ is the curvature of a $\mathcal{g}$-valued 1-form $A$, and $H=dB+[A,B]$ is the curvature of a $\mathcal{g}$-valued 2-form.  The associated  $(2n+2)$-form  $\mathfrak{C}^{2n+2}$ satisfies
$ d\mathfrak{C}^{2n+2}= \Gamma_{2n+3}$.
The generalized Chern--Weil theorem and the associated triangle equation can be derived by means of the extended Cartan homotopy formula (ECHF) \cite{Zumino,PRRJ,FEP}.
Song \textit{et al.}\ extended this construction to strict Lie 2-algebras \cite{DHS-4}. In their formulation, the pair $(A, B)$ is promoted to a Lie 2-algebra-valued 2-connection, and $(F, H)$ is replaced by the corresponding 2-curvature. The invariant polynomial in \eqref{ei} is replaced by a multilinear invariant form on the Lie 2-algebra, leading to a higher Pontryagin--Chern form and an associated $(2n+2)$-dimensional strict HCS form. Higher analogues of the Chern--Weil and transgression formulas, together with the corresponding triangle equation, were subsequently obtained by applying the ECHF \cite{DHS-5}. The construction developed in the present paper extends this strict framework to balanced 2-term $L_\infty$-algebras.

More precisely, we introduce a closed $(2n+3)$-form $\Gamma_{2n+3}$  associated with a balanced  $2$-term $L_{\infty}$-algebra, and construct a local $(2n+2)$-dimensional semistrict HCS form $\mathfrak{C}^{2n+2}$ satisfying the corresponding transgression relation. We derive higher Chern--Weil transgression formula and study its local gauge-theoretic properties. By applying the ECHF, we establish semistrict analogues of the generalized Chern--Weil theorem and the triangle equation. The higher-dimensional theories considered here should be distinguished from the standard $(k+2)$-dimensional homotopy MC construction: they arise from a transgression construction involving invariant multilinear pairings and the 2-curvature valued in a balanced $2$-term $L_{\infty}$-algebra. In the strict limit, our results recover the $(2n+2)$-dimensional  strict HCS theory of \cite{DHS-4,DHS-5}; for $n=1$, they reproduce the four-dimensional semistrict HCS models of \cite{RZ-AKSZ,Zucchini-2014-1}. Throughout this paper, we work locally and do not address global aspects of the resulting higher gauge theories.

The paper is organized as follows. Section~\ref{s1} reviews the theory of $L_{\infty}$-algebras equipped with cyclic invariant pairings, which constitutes the central algebraic framework for the development of HCS theory. Within this setting, we introduce the homotopy MC formalism, which provides a natural $L_{\infty}$-algebra extension of the ordinary CS theory.
In Section~\ref{sec:2term}, we specialize the homotopy MC  theory to balanced 2-term $L_{\infty}$-algebras. We explicitly describe the associated higher gauge fields, curvature forms, and their infinitesimal gauge transformations. As a direct application, we recover the four-dimensional HCS theory within this framework.        
Section~\ref{sec:2n+2-hCS} develops the $(2n+2)$-dimensional HCS theory. We first extend the cyclic invariant pairing to an $(n+1)$-linear form and extend it to the space of algebra-valued differential forms, establishing that the extended pairing inherits the original graded symmetry and cyclicity. We then construct the higher Pontryagin--Chern form in $2n+3$ dimensions, proving its closure  and gauge invariance, and we give the explicit formula for $(2n+2)$-dimensional HCS form by the infinitesimal transgression formula. Finally, we establish the higher Chern--Weil theorem, introduce the associated higher transgression form, clarify its precise relation to the HCS form, and study the higher gauge theory it defines.        
In Section~\ref{6-s}, we prove that the ECHF is compatible with semistrict HCS theory, and moreover constitutes the common origin of both the higher Chern--Weil theorem and the higher triangle equation.        
We finish in Section~\ref{7-s} with final conclusions and some considerations on future possible developments.

\section{$L_\infty$-Algebras and Homotopy Maurer--Cartan Theory}\label{s1}
This section recalls the $L_\infty$-algebraic framework for homotopy MC theory and its application to HCS theory. Further background on these constructions can be found in \cite{BLCM,BTLCM}.

\subsection{Cyclic $L_{\infty}$-algebras}
Throughout, all vector spaces are assumed to be $\mathbb Z$-graded, and all signs are determined by the Koszul rule  \cite{ASCFS,NW} . We first recall the definition of an $L_\infty$-algebra and then introduce cyclic invariant pairings.
\begin{definition}\label{infty}
	An \emph{$L_{\infty}$-algebra} is a $\mathbb{Z}$-graded vector space $\mathfrak{L}=\bigoplus_{k\in\mathbb{Z}}\mathfrak{L}_k$ equipped with a collection of graded, totally antisymmetric multilinear maps
	\[
	\mu_i:\underbrace{\mathfrak{L}\times\cdots\times\mathfrak{L}}_{i\text{ copies}}\longrightarrow\mathfrak{L},\qquad i\in\mathbb{N}^+,
	\]
	of degree $\lvert\mu_i\rvert=2-i$, which is called the \emph{higher products}. These maps satisfy the \emph{higher (homotopy) Jacobi identities}: for every $i\geq 1$ and all homogeneous elements $l_1,\dots,l_i\in\mathfrak{L}$,
	\begin{equation}\label{HJI}
		\sum_{r+s=i}\ \sum_{\sigma\in\mathrm{Sh}(r,s)}
		(-1)^{s}\,\chi(\sigma; l_1,\dots,l_{i})\,
		\mu_{s+1}\!\big(\mu_r(l_{\sigma(1)},\dots,l_{\sigma(r)}),
		l_{\sigma(r+1)},\dots,l_{\sigma(i)}\big)=0.
	\end{equation}
	Here $\mathrm{Sh}(r,s)$ denotes the set of \emph{$(r,s)$-unshuffles}, in which permutations $\sigma$ of $\{1,\dots,i\}$ satisfy $\sigma(1)<\cdots<\sigma(r)$ and $\sigma(r+1)<\cdots<\sigma(i)$, and $\chi(\sigma;l_1,\dots,l_i)$ is the Koszul sign determined by 
\begin{equation*}
	l_1\wedge\cdots\wedge l_i
	=\chi(\sigma;l_1,\dots,l_i)\,
	l_{\sigma(1)}\wedge\cdots\wedge l_{\sigma(i)}.
	\end{equation*}
\end{definition}

For a homogeneous element $l\in\mathfrak L$, we denote its degree by $|l|$. The graded antisymmetry of the higher brackets is expressed by
\begin{equation*}
	\mu_i(\cdots, l_a, l_{a+1},\cdots)=-(-1)^{|l_a| |l_{a+1}|} 	\mu_i(\cdots, l_{a+1}, l_a\cdots)
\end{equation*}
for all homogeneous elements $l_a,l_{a+1}\in\mathfrak L$.
The first three homotopy Jacobi identities take the following form. Let $l_1,l_2,l_3\in\mathfrak L$ be homogeneous:
\begin{itemize}
	\item For $i=1$, the identity \eqref{HJI} reduces to 
	\begin{equation*}
		\mu_1(\mu_1(l_1))=0.
	\end{equation*}
	Equivalently, $\mu_1^2=0$. Hence $(\mathfrak L,\mu_1)$ is a cochain complex, with $\mu_1$ a differential of degree $1$.
	\item For $i=2$, the identity \eqref{HJI} becomes
	\begin{equation*}
		\mu_1(\mu_2(l_1,l_2))
		-\mu_2(\mu_1(l_1),l_2)-(-1)^{|l_1|}\,\mu_2(l_1,\mu_1(l_2))=0.
	\end{equation*}
	Thus, $\mu_1$ is a graded derivation of degree $1$ with respect to the binary bracket $\mu_2$. 
	\item For $i=3$,  the identity \eqref{HJI} reads 
	\begin{align*}
		&\mu_2(\mu_2(l_1,l_2),l_3)
		+(-1)^{(|l_1|+|l_2|)\,|l_3|}\,\mu_2(\mu_2(l_3,l_1),l_2)
		+(-1)^{(|l_2|+|l_3|)\,|l_1|}\,\mu_2(\mu_2(l_2,l_3),l_1) \\
		&+\,\mu_1(\mu_3(l_1,l_2,l_3))
		+\mu_3(\mu_1(l_1),l_2,l_3)
		+(-1)^{|l_1|}\,\mu_3(l_1,\mu_1(l_2),l_3)\\
		&
		+(-1)^{|l_1|+|l_2|}\,\mu_3(l_1,l_2,\mu_1(l_3))=0.
	\end{align*}
	Thus, in an $L_\infty$-algebra the Jacobi identity for  $\mu_2$ holds only up to homotopy, with the failure controlled by the differential $\mu_1$ and the trilinear bracket $\mu_3$.
\end{itemize}

An $L_\infty$-algebra is called \emph{minimal} if $\mu_1=0$, \emph{strict} if $\mu_i=0$ for all $i>2$, and \emph{abelian} if $\mu_i=0$ for all $i>1$. 
An $L_\infty$-algebra is called a \emph{$k$-term $L_\infty$-algebra}, or equivalently a \emph{Lie $k$-algebra}, if it is concentrated in degrees $-k+1,\dots,0$, namely,
\begin{equation*}
	\mathfrak L=\mathfrak L_{-k+1}\oplus\cdots\oplus \mathfrak L_0.
\end{equation*}
In particular, suppose that $\mathfrak L$ is concentrated in degree $0$, so that $\mathfrak L=\mathfrak L_0$.
Since $\mu_i$ has degree $2-i$, degree considerations imply that $\mu_i$ can be non-vanishing only for $i=2$. Hence $\mu_1=0$ and $\mu_i=0$ for all $i\geq 3$. The graded antisymmetry of $\mu_2$  then reduces to ordinary antisymmetry, while the $L_\infty$-identity for $i=3$ becomes the ordinary Jacobi identity. Therefore,  $1$-term $L_\infty$-algebras are precisely ordinary Lie algebras.

Cyclic $L_\infty$-algebras provide the algebraic input for the construction of HCS theories \cite{OZ}.
In this setting, an invariant pairing plays a role analogous to that of an invariant trace in ordinary
gauge theory. We therefore recall the notion of a cyclic structure on an $L_\infty$-algebra. 
\begin{definition}
	Let $(\mathfrak{L},\{\mu_i\}_{i\ge 1})$ be an $L_\infty$-algebra. 
	A \emph{pairing of degree $q$} on $\mathfrak{L}$ is a non-degenerate graded symmetric bilinear form 
	\begin{equation*}
		\langle -,-\rangle:\mathfrak{L}\times\mathfrak{L}\to\mathbb{R},
	\end{equation*}
	where $\mathbb{R}$ is regarded as a graded vector space concentrated in degree 0. It satisfies,
	\begin{equation*}
		\langle l, l'\rangle=0, \qquad \text{unless} \qquad |l|+|l'|+q=0
	\end{equation*}
	for homogeneous elements $l, l'\in\mathfrak{L}$.
	The pairing is called \emph{invariant} if, for every $i\geq 1$ and all homogeneous elements $l_1,\dots,l_{i+1}\in\mathfrak{L}$,
	\begin{equation}\label{cyc}
		\langle l_1, \mu_i(l_2, \dots, l_{i+1})\rangle=(-1)^{i+i(|l_1|+|l_{i+1}|)+
			|l_{i+1}|(|l_1|+\cdots+|l_i|)}\langle l_{i+1}, \mu_i(l_1, \dots, l_i)\rangle.
	\end{equation}
	An $L_\infty$-algebra equipped with an invariant pairing is called a  \emph{cyclic $L_\infty$-algebra}.
\end{definition}

The graded symmetry of $\langle -,-\rangle$ means that
\begin{equation*}
	\langle l_1, l_2\rangle=(-1)^{|l_1| |l_2|}\langle l_2, l_1\rangle
\end{equation*}
for all homogeneous elements $l_1, l_2\in\mathfrak{L}$.
Condition~\eqref{cyc} expresses the graded cyclic invariance of the
pairing with respect to the higher brackets. Such cyclic structures were introduced in the foundational works of Penkava and Kontsevich; see, for example, \cite{MP,Kontsevich-92}. If $\mathfrak L=\mathfrak g$ is an ordinary Lie algebra concentrated in degree $0$, with $\mu_2=[-,-]$, then \eqref{cyc} reduces to the usual invariance condition
\begin{equation*}
	\langle [X_1,X_2],X_3\rangle=-\langle X_2,[X_1,X_3]\rangle,
	\qquad \forall X_1,X_2,X_3\in\mathfrak{g}.
\end{equation*}

We next recall the $L_\infty$-algebra of $\mathfrak{L}$-valued differential forms. Let $M$ be a smooth manifold. Since the de Rham complex $(\Omega^\bullet(M), d, \wedge)$ is a differential graded commutative algebra, the graded vector space
\begin{equation*}
	\Omega^\bullet(M, \mathfrak{L})\coloneqq \bigoplus_{k\in \mathbb{Z}}\Omega^\bullet_k(M, \mathfrak{L}) 
	\quad 
	\text{with}
	\quad 
	\Omega^\bullet_k(M, \mathfrak{L}) \coloneqq \bigoplus_{i+j=k}\Omega^i(M)\otimes \mathfrak{L}_j
\end{equation*}
carries a natural $L_\infty$-algebra structure \cite{BJCSMW,BLCM}.
For homogeneous elements $\omega\in\Omega^\bullet(M)$ and $ l \in\mathfrak{L}$, we write
\begin{equation*}
	|\omega\otimes l |=|\omega|+| l |,
\end{equation*}
where $|\omega|$ denotes the form degree of $\omega$.
The induced brackets $\{\mu_i^*\}_{i\ge 1}$ have degree $2-i$.  Their actions on homogeneous elements are given by
\begin{align}
	\mu_1^\ast(\omega\otimes l)
	&=
	d\omega\otimes l
	+
	(-1)^{|\omega|}\omega\otimes\mu_1(l),
	\label{mu1}\\
	\mu_i^\ast(\omega_1\otimes l_1,\ldots,\omega_i\otimes l_i)
	&=
	(-1)^{\varepsilon_i}
	(\omega_1\wedge\cdots\wedge\omega_i)
	\otimes\mu_i(l_1,\ldots,l_i),
	\qquad i\geq2,
	\label{mun}
\end{align}
where
\begin{equation*}
	\varepsilon_i=i\sum_{j=1}^i|\omega_j|+\sum_{j=0}^{i-2}|\omega_{i-j}|\sum_{k=1}^{i-j-1}| l _k|.
\end{equation*}
Equivalently, for $i\geq 2$,  the bracket $\mu_i^*$ is induced by $m_i\otimes \mu_i$, where
\begin{equation*}
	m_i(\omega_1,\dots,\omega_i)=\omega_1\wedge\cdots\wedge \omega_i,
\end{equation*}
with the Koszul signs prescribed by the graded tensor-product
convention. The higher Jacobi identities for $\{\mu_i^*\}_{i\ge 1}$ follow from those of $\{\mu_i\}_{i\ge 1}$ together with the differential
graded-commutative algebra structure of $\Omega^\bullet(M)$; see
\cite{BLCM}.

Suppose now that $\mathfrak L$ is cyclic and that  $M$ is a compact,
oriented manifold without boundary. The invariant pairing on $\mathfrak L$ induces a natural pairing on $\Omega^\bullet(M, \mathfrak{L})$ by
\begin{equation}\label{pair}
	\ll \omega_1\otimes l_1,\;\omega_2\otimes l_2\gg
	\coloneqq (-1)^{|\omega_2|\,|l_1|}\int_M \omega_1\wedge \omega_2\,\langle l_1,l_2\rangle,
\end{equation}
for homogeneous $\omega_{i}\in\Omega^\bullet(M)$ and $l_{i}\in\mathfrak L$, $i=1, 2$, and by linear extension otherwise. The
pairing in \eqref{pair} vanishes unless the form-degree component of $\omega_1\wedge \omega_2$ is equal to $\text{dim} (M)$. With the degree
convention above, it has degree $q-\text{dim} (M)$.

The following result is the generalization of \eqref{cyc} to the case of $L_{\infty}$-valued differential forms.
\begin{proposition}\label{lem:cyclic-induced}
	Let $(\mathfrak L,\{\mu_i\}_{i\ge 1},\langle-,-\rangle)$ be a cyclic $L_\infty$-algebra, and $M$ be a compact, oriented manifold without boundary. Then the pairing \eqref{pair} on
	$\Omega^\bullet(M,\mathfrak L)$  is cyclic with respect to the induced
	brackets $\{\mu_i^*\}_{i\ge 1}$.  More precisely, for every $i\ge 1$ and all homogeneous elements $a_1,\dots,a_{i+1}\in\Omega^\bullet(M,\mathfrak L)$,
	\begin{equation}\label{eq:cyc-induced}
		\ll a_1,\mu_i^*(a_2,\dots,a_{i+1})\gg
		=
		(-1)^{\xi_i(a_1,\dots,a_{i+1})}\,
		\ll a_{i+1},\mu_i^*(a_1,\dots,a_i)\gg,
	\end{equation}
	where 
	\begin{equation*}\label{eq:Sigmai}
		\xi_i(a_1,\dots,a_{i+1})
		=
		i+i\bigl(|a_1|+|a_{i+1}|\bigr)
		+|a_{i+1}|\bigl(|a_1|+\cdots+|a_i|\bigr).
	\end{equation*}
\end{proposition}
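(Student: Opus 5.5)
By multilinearity of both sides of \eqref{eq:cyc-induced} it is enough to take decomposable elements $a_j=\omega_j\otimes l_j$ with $\omega_j$ and $l_j$ homogeneous. Then $|a_j|=|\omega_j|+|l_j|$. The proof splits into the cases $i\ge 2$, where everything is algebraic, and $i=1$, where Stokes' theorem enters. The common strategy is to reduce both sides to an integral of a wedge product of the $\omega_j$ times a pairing on $\mathfrak L$, and then invoke \eqref{cyc}.

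For $i\ge 2$ I will insert \eqref{mun} into \eqref{pair}. This gives
\[
\ll a_1,\mu_i^*(a_2,\dots,a_{i+1})\gg=(-1)^{\varepsilon_i(a_2,\dots,a_{i+1})+(|\omega_2|+\cdots+|\omega_{i+1}|)|l_1|}\int_M\omega_1\wedge\cdots\wedge\omega_{i+1}\,\langle l_1,\mu_i(l_2,\dots,l_{i+1})\rangle ,
\]
where $\varepsilon_i(a_2,\dots,a_{i+1})$ denotes $\varepsilon_i$ evaluated on the forms $\omega_2,\dots,\omega_{i+1}$ and the elements $l_2,\dots,l_{i+1}$. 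Similarly, the right-hand side becomes an integral of $\omega_{i+1}\wedge\omega_1\wedge\cdots\wedge\omega_i$ against $\langle l_{i+1},\mu_i(l_1,\dots,l_i)\rangle$. Next I will move $\omega_{i+1}$ to the end, which costs the sign $(-1)^{|\omega_{i+1}|(|\omega_1|+\cdots+|\omega_i|)}$, and apply \eqref{cyc} to the $\mathfrak L$-pairing. What remains is a parity check: the total sign must equal $\xi_i$ evaluated on the total degrees $|\omega_j|+|l_j|$. I will split $\xi_i$ into its pure form-degree part, its pure $\mathfrak L$-degree part (which matches the sign in \eqref{cyc}), and its mixed terms. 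The two $\varepsilon_i$ contributions, the Koszul factors from \eqref{pair}, and the reordering sign should together account for the form and mixed parts.

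For the degree bookkeeping I will use two vanishing facts. The integral is nonzero only if $\sum_j|\omega_j|=\dim M$, and the $\mathfrak L$-pairing is nonzero only if $\sum_j|l_j|+(2-i)+q=0$. These allow terms such as $i\sum_j|\omega_j|$ to be rewritten or cancelled modulo $2$. If a residual sign depending on $\dim M$ or $q$ survives, this is where I expect to find it, and I would absorb it using these constraints.

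For $i=1$, \eqref{mu1} gives two terms. The $\omega\otimes\mu_1(l)$ term is handled exactly as above, using \eqref{cyc} with $i=1$. For the $d\omega$ term I will integrate by parts. Since $\partial M=\emptyset$, we have $\int_M d(\omega_1\wedge\omega_2)=0$, hence $\int_M\omega_1\wedge d\omega_2=-(-1)^{|\omega_1|}\int_M d\omega_1\wedge\omega_2$. Combining this with graded symmetry of $\langle-,-\rangle$ and the commutation sign $\omega_1\wedge\omega_2=(-1)^{|\omega_1||\omega_2|}\omega_2\wedge\omega_1$ converts the result into the $d\omega_1$ term of $\ll a_2,\mu_1^*(a_1)\gg$. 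The required sign is $\xi_1=1+|a_1|+|a_2|+|a_1||a_2|$, and the degree constraint $|a_1|+|a_2|+1=q-\dim M$ must again be used.

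The main obstacle is purely the sign bookkeeping for general $i$, in particular the double sum in $\varepsilon_i$. I would organize it by computing $\varepsilon_i(a_2,\dots,a_{i+1})-\varepsilon_i(a_1,\dots,a_i)$ modulo $2$ directly. The hope is that this difference collapses to $|\omega_{i+1}|\sum_{k\le i}|l_k|+|l_{i+1}|\sum_{k\le i}|\omega_k|$ plus terms linear in $|a_1|+|a_{i+1}|$. Before writing the general argument I would verify the bookkeeping in the cases $i=2,3$.
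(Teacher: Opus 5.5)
Your proposal is correct and follows essentially the same route as the paper: reduce to decomposable elements $\omega_j\otimes l_j$, treat $i=1$ via Stokes' theorem, graded symmetry and the $i=1$ case of \eqref{cyc}, and treat $i\ge 2$ by inserting \eqref{mun} into \eqref{pair}, applying \eqref{cyc}, reordering $\omega_{i+1}$, and comparing exponents. The exponents in fact agree identically modulo $2$, so you will never need the constraints involving $\dim M$ or $q$. Note that the $\varepsilon_i$-difference alone is $i(|\omega_1|+|\omega_{i+1}|)+|\omega_{i+1}|\sum_{k=2}^{i}|l_k|+|l_1|\sum_{m=2}^{i}|\omega_m|$, and the remaining cross terms come from the Koszul factors in \eqref{pair}.
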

\begin{proof}
	By linearity, it suffices to consider homogeneous elements
	$a_j=\omega_j\otimes l_j$, where $\omega_j\in\Omega^\bullet(M)$ and $l_j\in\mathfrak L$.
	All sign exponents below are understood modulo 2.
	We first consider the case $i=1$. Since $M$ has no boundary, Stokes'
	theorem gives
	\begin{equation*}
		\int_M \omega_1 \wedge d\omega_2=(-1)^{|\omega_1| +1}\int_M d\omega_1 \wedge \omega_2.
	\end{equation*}
	Applying \eqref{mu1}, \eqref{pair}, graded symmetry of $\langle-,-\rangle$, and the $i=1$ instance of \eqref{cyc}, we obtain
	\begin{align*}
		\ll a_1, \mu^*_1(a_2)\gg
		=&	\ll \omega_1 \otimes l_1, d \omega_2\otimes l_2 +(-1)^{|\omega_2|}\omega_2 \otimes \mu_1(l_2)\gg\\
		=&(-1)^{|d\omega_2| |l_1|}\int_M \omega_1 \wedge d\omega_2 \langle l_1, l_2\rangle + (-1)^{|\omega_2| +|\omega_2| |l_1|}\int_M \omega_1 \wedge \omega_2 \langle l_1, \mu_1(l_2)\rangle\\
		=&(-1)^{(|\omega_2|+1)|l_1| +|l_1| |l_2| +1 +|\omega_1| +|\omega_2|(|\omega_1|+1)}\int_M \omega_2\wedge d\omega_1 \langle l_2, l_1\rangle \\
		&+(-1)^{|\omega_2|+ |\omega_2| |l_1| +1 +|l_1| +|l_2| +|l_1| |l_2| +|\omega_1| |\omega_2|}\int_M \omega_2\wedge \omega_1 \langle l_2, \mu_1(l_1)\rangle.
	\end{align*}
	On the other hand, an analogous calculation gives
	\begin{align*}
		\ll a_2, \mu^*_1(a_1)\gg 
		=& \ll \omega_2\otimes l_2, d\omega_1 \otimes l_1+(-1)^{|\omega_1|}\omega_1 \otimes \mu_1(l_1)\gg\\
		=&(-1)^{|d\omega_1| |l_2|}\int_M \omega_2 \wedge d\omega_1 \langle l_2, l_1\rangle + (-1)^{|\omega_1| +|\omega_1| |l_2|}\int_M \omega_2 \wedge \omega_1 \langle l_2, \mu_1(l_1)\rangle.
	\end{align*}
	The difference between the corresponding exponents is
	\begin{align*}
		\xi_1(a_1,a_2)
		&=
		1+|\omega_1| +|l_1| +|\omega_2| +|l_2|+(|\omega_1| +|l_1| )(|\omega_2| +|l_2| )
		\\
		&=
		1+|a_1|+|a_2|+|a_1||a_2|.
	\end{align*}
	Therefore,
	\begin{align*}
		\ll a_1, \mu^*_1(a_2)\gg=(-1)^{\xi_1(a_1,a_{2})}	\ll a_2, \mu^*_1(a_1)\gg.
	\end{align*}
	
	Now let $i\geq 2$. By \eqref{cyc},  \eqref{mun}, and \eqref{pair}, we then have
	\begin{align*}
		&\ll a_1,\mu_i^*(a_2,\dots,a_{i+1})\gg\\
		=&
		(-1)^{i\sum\limits_{j=2}^{i+1}|\omega_j| + \sum\limits_{j=-1}^{i-3}|\omega_{i-j}|\sum\limits_{k=2}^{i-j-1}|l_k|+|l_1|\sum\limits_{j=2}^{i+1}|\omega_j|}
		\int_M \omega_1\wedge\cdots\wedge\omega_{i+1}\;
		\big\langle l_1,\mu_i(l_2,\dots,l_{i+1})\big\rangle\\
		=&
		(-1)^{i\sum\limits_{j=2}^{i+1}|\omega_j| + \sum\limits_{j=-1}^{i-3}|\omega_{i-j}|\sum\limits_{k=2}^{i-j-1}|l_k|+|l_1|\sum\limits_{j=2}^{i+1}|\omega_j|+i+i(|l_1|+|l_{i+1}|)+|l_{i+1}|\sum\limits_{j=1}^{i}|l_j|}\\
		&
		\int_M \omega_1\wedge\cdots\wedge\omega_{i+1}\;
		\big\langle l_{i+1},\mu_i(l_1,\dots,l_i)\big\rangle\\
		=&
		(-1)^{i\sum\limits_{j=2}^{i+1}|\omega_j| + \sum\limits_{j=-1}^{i-3}|\omega_{i-j}|\sum\limits_{k=2}^{i-j-1}|l_k|+|l_1|\sum\limits_{j=2}^{i+1}|\omega_j|+i+i(|l_1|+|l_{i+1}|)+|l_{i+1}|\sum\limits_{j=1}^{i}|l_j|+|\omega_{i+1}|\sum\limits_{j=1}^{i}|\omega_j|}\\
		&
		\int_M \omega_{i+1}\wedge\omega_1\wedge\cdots\wedge\omega_{i}\;
		\big\langle l_{i+1},\mu_i(l_1,\dots,l_i)\big\rangle,
	\end{align*}
	and 
	\begin{align*}
		\ll a_{i+1},\mu_i^*(a_1,\dots,a_i)\gg=&\ll \omega_{i+1}\otimes l_{i+1},\mu_i^*(\omega_1\otimes l_1,\dots,\omega_{i}\otimes l_{i})\gg \\
		=&(-1)^{i\sum\limits_{j=1}^{i}|\omega_j| +\sum\limits_{j=0}^{i-2}|\omega_{i-j}|\sum\limits_{k=1}^{i-j-1}|l_k| +|l_{i+1}|\sum\limits_{j=1}^{i}|\omega_j|}	\\
		&\int_M \omega_{i+1}\wedge\omega_1\wedge\cdots\wedge\omega_{i}\;
		\big\langle l_{i+1},\mu_i(l_1,\dots,l_i)\big\rangle.
	\end{align*}
	A straightforward rearrangement of the exponents gives
	\begin{align*}
		&i\sum_{j=2}^{i+1}|\omega_j| + \sum_{j=-1}^{i-3}|\omega_{i-j}|\sum_{k=2}^{i-j-1}|l_k|+|l_1|\sum_{j=2}^{i+1}|\omega_j|+i+i(|l_1|+|l_{i+1}|)+|l_{i+1}|\sum_{j=1}^{i}|l_j|+|\omega_{i+1}|\sum_{j=1}^{i}|\omega_j|\\
		=&\xi_i(a_1,\dots,a_{i+1}) +i\sum_{j=1}^{i}|\omega_j| +\sum_{j=0}^{i-2}|\omega_{i-j}|\sum_{k=1}^{i-j-1}|l_k| +|l_{i+1}|\sum_{j=1}^{i}|\omega_j|.
	\end{align*}
	Substituting this identity into the preceding expressions proves
	\eqref{eq:cyc-induced}.
\end{proof}

\subsection{Homotopy Maurer--Cartan theory}\label{se-HMC}
Homotopy MC theory provides an $L_\infty$-algebraic extension
of CS theory,  with HCS theories arising as distinguished examples. Following \cite{BLCM}, we recall below the relevant definitions and conventions. For further background on
homotopy MC theory and higher gauge theory, see
\cite{GR,BTLCM}.

Let $(\mathfrak{L},\{\mu_i\}_{i\ge 1})$ be an $L_\infty$-algebra. For
every smooth manifold $M$, the de Rham complex $\Omega^\bullet(M,\mathfrak{L})$ carries the induced $L_\infty$-structure $(\Omega^\bullet(M,\mathfrak L),
\{\mu_i^*\}_{i\geq1})$. A homogeneous element
$\mathcal A\in \Omega^\bullet_1(M,\mathfrak L)$
is called a \emph{gauge potential}. Its \emph{curvature} is the element $\mathcal F\in\Omega^\bullet_2(M,\mathfrak L)$ defined by
\begin{equation}\label{cur}
	\mathcal F
	\coloneqq \sum_{i\ge1}\frac{1}{i!}\,\mu_i^*(\mathcal A,\dots,\mathcal A)
	=\mu_1^*(\mathcal A)+\frac12\,\mu_2^*(\mathcal A,\mathcal A)+\cdots.
\end{equation}
The gauge potential $\mathcal{A}$ is called an \emph{MC element} if its curvature vanishes. Equivalently, it satisfies the \emph{homotopy MC equation}
\begin{equation*}
	\sum_{i\ge1}\frac{1}{i!}\,\mu_i^*(\mathcal A,\dots,\mathcal A)=0.
\end{equation*}
The higher homotopy Jacobi identities \eqref{HJI} imply the Bianchi identity
\begin{equation}\label{BI}
	\sum_{i\ge0}\frac{(-1)^i}{i!}\,
	\mu_{i+1}^*\bigl(\mathcal F,\underbrace{\mathcal A,\dots,\mathcal A}_{i\ \mathrm{copies}}\bigr)=0.
\end{equation}
For an ordinary Lie algebra, these definitions reduce to the usual
notions of connection, curvature, and the Bianchi identity. More
generally, for Lie $n$-algebras they reproduce their higher
gauge-theoretic analogues.

An infinitesimal gauge transformation is parametrized by a homogeneous
element $\lambda_0\in \Omega^\bullet_0(M,\mathfrak L)$, and its action on 
$\mathcal{A}$ is defined by
\begin{equation}\label{gtraA}
	\delta_{\lambda_0}\mathcal A
	\coloneqq \sum_{i\ge0}\frac{1}{i!}\,
	\mu^*_{i+1}\bigl(\underbrace{\mathcal A,\dots,\mathcal A}_{i\ \mathrm{copies}},\lambda_0\bigr).
\end{equation}
The corresponding variation of the curvature is given by
\begin{equation}\label{gtraf}
	\delta_{\lambda_0}\mathcal F
	=\sum_{i\ge0}\frac{1}{i!}\,
	\mu^*_{i+2}\bigl(\underbrace{\mathcal A,\dots,\mathcal A}_{i\ \mathrm{copies}},\mathcal F,\lambda_0\bigr).
\end{equation}
Applying the higher homotopy Jacobi identities, one obtains
\begin{equation*}
	[\delta_{\lambda_0}, \delta_{\lambda'_0}]\mathcal{A}=\delta_{\lambda''_0}\mathcal{A}+\sum_{i\geq0}\dfrac{1}{i!}\mu^*_{i+3}(\mathcal{A}, \cdots, \mathcal{A}, \mathcal F, \lambda_0, \lambda'_0),
\end{equation*}
where the composite parameter is 
\begin{equation*}
	\lambda''_0\coloneqq \sum_{i\geq 0}\dfrac{1}{i!}\mu^*_{i+2}(\mathcal{A}, \cdots, \mathcal{A}, \lambda_0, \lambda'_0).
\end{equation*}
Thus, the infinitesimal gauge algebra is generally open: it closes on
MC elements, i.e.,\ on configurations satisfying $\mathcal F=0$. For 1-term $L_\infty$-algebras, where $\mu_i=0$ for all $i\ge 3$, the closure is exact.
Besides, one may introduce higher gauge parameters $\lambda_{-k}\in\Omega^\bullet_{-k}(M,\mathfrak L)$ and define the corresponding gauge-for-gauge transformations recursively by
\begin{equation*}
	\delta_{\lambda_{-k-1}}\lambda_{-k}
	\coloneqq \sum_{i\ge0}\frac{1}{i!}\,
	\mu^*_{i+1}\bigl(\underbrace{\mathcal A,\dots,\mathcal A}_{i\ \mathrm{copies}},\lambda_{-k-1}\bigr),
	\qquad k\ge0.
\end{equation*}
These higher transformations are also closed when $\mathcal F=0$. In the present work,
we only consider the lowest-level transformation \eqref{gtraA}. 
A systematic discussion of the higher gauge transformations can be in found in
\cite{BLCM}.

We next introduce the homotopy MC action.
Let $M$ be a compact oriented $n$-dimensional manifold without boundary with $n\geq 3$, and let $(\mathfrak L,\{\mu_i\}_{i\ge 1},\langle-,-\rangle)$ be a cyclic $(n-2)$-term $L_\infty$-algebra equipped with an invariant pairing of degree $n-3$. By \eqref{pair} and
Proposition~\ref{lem:cyclic-induced}, the induced pairing $\ll-,-\gg$ endows $\Omega^\bullet(M,\mathfrak L)$ with a cyclic $L_\infty$-structure
$$
\bigl(\Omega^\bullet(M,\mathfrak L),\{\mu_i^*\}_{i\ge 1},\ll-,-\gg\bigr),
$$
whose pairing has degree $-3$.
For a gauge potential $\mathcal A\in\Omega^\bullet_1(M,\mathfrak L)$,  define the homotopy
MC action by
\begin{equation}\label{HCS}
	S_n(\mathcal A)
	\coloneqq \sum_{i\ge 1}\frac{1}{(i+1)!}\,
	\ll \mathcal A,\mu_i^*(\mathcal A,\dots,\mathcal A)\gg.
\end{equation}
We refer to \eqref{HCS} as the HCS action associated
with $\mathfrak L$; see \cite{GR}.
The cyclicity of $\ll-,-\gg$ implies that
\begin{equation*}
	\delta S_n(\mathcal A)
	=
	\ll\delta\mathcal A,\mathcal F\gg.
\end{equation*}
Consequently, the equation of motion of \eqref{HCS}  is precisely the homotopy
MC equation $\mathcal F=0$.
Moreover, $S_n(\mathcal A)$ is invariant off shell under the infinitesimal gauge transformation \eqref{gtraA}. Indeed, 
\begin{align*}
	\delta_{\lambda_0}S_n(\mathcal A)
	=-\sum_{i\ge 0}\frac{(-1)^i}{i!}\,
	\ll \lambda_0,\mu^*_{i+1}(\mathcal F,\mathcal A,\dots,\mathcal A)\gg
	=\ll \mathcal F,\delta_{\lambda_0}\mathcal A\gg
	=0,
\end{align*}
where the second equality follows from the graded symmetry and cyclicity
of the induced pairing, and the last equality is a consequence of the Bianchi identity \eqref{BI}. Notice that this off-shell invariance of the
action is compatible with the fact that the corresponding gauge algebra
is generally open and closes only on solutions of the field equation.

The following example shows that the three-dimensional CS functional is recovered as a special case of \eqref{HCS}. Thus, ordinary CS theory arises from the homotopy
MC construction when the underlying cyclic $L_\infty$-algebra is an ordinary Lie algebra. Higher examples, including the 2-term case, will be discussed in
Section~\ref{sec:2term}.
\begin{example}\label{ex:3d-CS}
	Let $n=3$, and let $\mathfrak L$ be concentrated in degree zero, i.e., $	\mathfrak L=\mathfrak L_0$.
	Degree considerations imply that $\mu_1=0$ and $\mu_i=0$ for $i\geq 3$. Hence $\mathfrak L$ is an ordinary Lie algebra with Lie bracket  $\mu_2=[-,-]$.
	A cyclic structure on $\mathfrak L$ is specified by a non-degenerate
	symmetric bilinear form $\langle-,-\rangle$ on $\mathfrak L_0$ satisfying
	\begin{equation*}
		\langle [X_1,X_2],X_3\rangle=-\langle X_2,[X_1,X_3]\rangle, \qquad \forall  X_1,X_2,X_3\in\mathfrak L_0.
	\end{equation*}
	A gauge potential is an $\mathfrak L_0$-valued one-form  $A\in\Omega^1(M)\otimes\mathfrak L_0$, and its  curvature \eqref{cur} reduces to
	\begin{equation*}
		F=dA+\frac12[A,A]\in\Omega^2(M)\otimes\mathfrak L_0,
	\end{equation*}
	where $[-,-]$ is the induced bracket on $\Omega^{\bullet}(M)\otimes\mathfrak L_0$ given by
	\begin{equation*}
		[A_1\otimes X_1,A_2\otimes X_2]=(A_1\wedge A_2)\otimes [X_1,X_2]
	\end{equation*}
	for homogeneous forms $A_1, A_2$.
	In this case, \eqref{gtraA} becomes the usual infinitesimal gauge
	transformation
	\begin{equation}\label{eq:3d-gauge-tra-A}
		\delta_{\lambda_0}A=d\lambda_0+[A,\lambda_0],
		\qquad
		\lambda_0\in\Omega^0(M)\otimes\mathfrak L_0,
	\end{equation}
	and the curvature transforms as
	\begin{equation*}
		\delta_{\lambda_0}F=[F,\lambda_0].
	\end{equation*}
	Accodingly, the HCS action \eqref{HCS} reduces to the action of standard CS theory
	\begin{equation}\label{eq:3d-CS-action}
		S_{3}(A)
		=
		\int_M\Bigl(
		\tfrac12\langle A,dA\rangle
		+\tfrac1{3!}\langle A,[A,A]\rangle
		\Bigr),
	\end{equation}
	where the cyclic pairing is extended to $\Omega^\bullet(M)\otimes\mathfrak L_0$ by
	\begin{equation*}
		\langle A_1\otimes X_1, A_2\otimes X_2\rangle= A_1\wedge A_2\langle X_1, X_2\rangle.
	\end{equation*}
	Being independent of any metric on $M$, the action is topological. The equation of motion is readily identified with the flatness condition $F=0$. In addition, the functional \eqref{eq:3d-CS-action} is invariant under the infinitesimal gauge transformations \eqref{eq:3d-gauge-tra-A}.
\end{example}

\section{2-Term $L_\infty$-Algebra Gauge Theory}\label{sec:2term}
In this section, we recall gauge theories associated with 2-term $L_\infty$-algebras, which are the main objects of this work.  Starting from the homotopy MC formalism introduced in the previous section, we describe the
corresponding higher gauge fields, curvature forms, and infinitesimal
gauge transformations. We then specialize the construction to the
four-dimensional HCS theory induced by a 2-term $L_\infty$-algebra. For a detailed account of semistrict higher gauge
theory, see \cite{BLCM}.

\subsection{Balanced 2-term $L_\infty$-algebras and invariant pairings}
A 2-term $L_\infty$-algebra is an $L_\infty$-algebra, whose
underlying graded vector space is concentrated in degrees $-1$ and $0$:
\begin{equation*}
	\mathfrak L=\mathfrak L_{-1}\oplus \mathfrak L_0.
\end{equation*}
Since $\text{deg}(\mu_i)=2-i$, the only  non-vanishing structure
maps are, up to graded antisymmetry,
\begin{equation*}
	\begin{aligned}
		\mu_1&\colon \mathfrak L_{-1}\to \mathfrak L_0,\\
		\mu_2&\colon \mathfrak L_0\times \mathfrak L_0\to \mathfrak L_0,\\
		\mu_2&\colon \mathfrak L_0\times \mathfrak L_{-1}\to \mathfrak L_{-1},\\
		\mu_3&\colon \mathfrak L_0\times \mathfrak L_0 \times \mathfrak L_0\to \mathfrak L_{-1}.
	\end{aligned}
\end{equation*}
In particular, $\mu_i=0$ for all $i\ge 4$. For $X,X_i\in\mathfrak L_{0}$ and $Y\in\mathfrak L_{-1}$, graded antisymmetry implies
\begin{subequations}\label{q51}
	\begin{align}
		\mu_2(X_1,X_2)&=-\mu_2(X_2,X_1),\label{eq:antisym-00}\\
		\mu_2(X,Y)&=-\mu_2(Y,X),\label{eq:antisym-0-1}\\
		\mu_3(X_1,X_2,X_3)&=-\mu_3(X_2,X_1,X_3)=\mu_3(X_2,X_3,X_1).\label{eq:antisym-3}
	\end{align}
\end{subequations}
The higher Jacobi identities \eqref{HJI} reduce to a finite collection
of non-trivial relations. For $X,X_i\in\mathfrak L_{0}$ ($i=1,\dots,4$) and $Y,Y_j\in\mathfrak L_{-1}$ ($j=1,2$), these relations are
\begin{subequations}\label{q61}
	\begin{align}
		&\mu_2(X, \mu_1(Y))-\mu_1(\mu_2(X, Y))=0,\label{eq1}\\
		&\mu_2(\mu_1(Y_1), Y_2)+\mu_2(\mu_1(Y_2), Y_1)=0,\label{eq2}\\
		&\mu_2(X_1, \mu_2(X_2, X_3))+\mu_2(X_2, \mu_2(X_3, X_1))+\mu_2(X_3, \mu_2(X_1, X_2))+\mu_1(\mu_3(X_1, X_2, X_3))=0,\label{eq3}\\
		&\mu_2(X_1, \mu_2(X_2, Y))-\mu_2(X_2, \mu_2(X_1, Y))-\mu_2(\mu_2(X_1, X_2), Y)+\mu_3(X_1, X_2, \mu_1(Y))=0,\label{eq4}\\
		&\mu_3(X_1, X_2, \mu_2(X_3, X_4))+\mu_3(X_1, X_3, \mu_2(X_4, X_2))+\mu_3(X_1, X_4, \mu_2(X_2, X_3))
		-\mu_3(X_2, X_3, \mu_2(X_4, X_1))\nonumber\\
		&-\mu_3(X_3, X_4, \mu_2(X_2, X_1))-\mu_3(X_4, X_2, \mu_2(X_3, X_1))
		-\mu_2(X_1, \mu_3(X_2, X_3, X_4))\nonumber\\
		&+\mu_2(X_2, \mu_3(X_3, X_4, X_1))	-\mu_2(X_3, \mu_3(X_4, X_1, X_2))
		+\mu_2(X_4, \mu_3(X_1, X_2, X_3))=0.	\label{eq5}
	\end{align}
\end{subequations}

In \cite{JCB-ASC}, it is shown that the $2$-term $L_\infty$-algebras are categorically equivalent to semistrict Lie $2$-algebras. A $2$-term $L_\infty$-algebra is called \emph{strict} if its trilinear bracket vanishes, namely, if $\mu_3=0$. In this case, it is equivalent to a crossed module of Lie algebras. The correspondence is illustrated by the following example.
\begin{example}\label{strict}
	Let $\mathcal{h}$ and $\mathcal{g}$ be Lie algebras, 
	$\alpha:\mathcal{h}\to\mathcal{g}$
	be a Lie algebra homomorphism, and  $\vartriangleright:\mathcal{g}\times\mathcal{h}\to\mathcal{h}$
	be an action of $\mathcal{g}$ on $\mathcal{h}$ by derivations. The quadruple $(\mathcal{h},\mathcal{g},\alpha,\vartriangleright)$ is a \emph{crossed module of Lie algebras} if 
	\begin{equation*}
		\alpha(X\vartriangleright Y)=[X,\alpha(Y)],
		\qquad
		\alpha(Y_1)\vartriangleright Y_2=[Y_1,Y_2],\qquad \forall X\in\mathcal{g}, Y,Y_1,Y_2\in\mathcal{h}.
	\end{equation*}
	These identities are referred to as the equivariance condition and the Peiffer identity, respectively.
	Set $\mathfrak L_{-1}\coloneqq \mathcal{h}$ and $ \mathfrak L_0\coloneqq \mathcal{g}$,
	and then the corresponding strict 2-term $L_\infty$-structure is given by
	\begin{align*}
		\mu_1(Y)&\coloneqq \alpha(Y), \qquad \mu_1(X)\coloneqq 0,\\
		\mu_2(X_1,X_2)&\coloneqq [X_1,X_2],\qquad \mu_2(Y_1,Y_2)\coloneqq 0,\\
		\mu_2(X,Y)&\coloneqq X\vartriangleright Y,\qquad  \mu_2(Y,X)\coloneqq -\,\mu_2(X,Y),
	\end{align*}
	for $X,X_1,X_2\in\mathfrak{L}_0$ and $Y,Y_1,Y_2\in\mathfrak{L}_{-1}$. 
	The crossed module identities are precisely the non-trivial  $L_\infty$-identities in the strict case. Conversely, every strict $2$-term $L_\infty$-algebra arises from a crossed module of Lie algebras in this way.
\end{example}

We now specialize the homotopy MC action \eqref{HCS} to $n=4$.
In this case, the invariant pairing entering the action has
degree $1$. Throughout this section, $\mathfrak L$  is assumed
to be finite-dimensional.
\begin{proposition}\label{lem:2term-degree1-pairing}
	Let $\langle-,-\rangle$ be a graded symmetric bilinear pairing of degree $1$ on a $2$-term $L_\infty$-algebra $\mathfrak L$. Then, 
	\begin{equation*}
		\langle X, X'\rangle=\langle Y, Y' \rangle=0, \qquad \forall X, X' \in \mathfrak{L}_0, Y, Y'\in \mathfrak{L}_{-1}.
	\end{equation*}
	Hence  $\langle-,-\rangle$  is  a mixed pairing between $\mathfrak L_0$ and $\mathfrak L_{-1}$, and it induces a non-degenerate pairing
	\begin{equation*}
		\langle-,-\rangle:\mathfrak L_0\times \mathfrak L_{-1}\to\mathbb R,
	\end{equation*}
	and
	\begin{equation*}
		\mathfrak{L}_0\cong\mathfrak{L}_{-1}^*,\qquad \dim\mathfrak{L}_0=\dim\mathfrak{L}_{-1},
	\end{equation*}
	if it is non-degenerate.
\end{proposition}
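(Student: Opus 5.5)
The argument is a degree count followed by linear algebra. By the definition of a pairing of degree $q$, $\langle l,l'\rangle=0$ unless $|l|+|l'|+q=0$. Here $q=1$, so only pairs with $|l|+|l'|=-1$ can contribute. Since $\mathfrak L$ is concentrated in degrees $-1$ and $0$, we check the three possible degree combinations. For $X,X'\in\mathfrak L_0$ we get $|X|+|X'|+1=1\neq0$, so $\langle X,X'\rangle=0$. For $Y,Y'\in\mathfrak L_{-1}$ we get $-2+1=-1\neq 0$, so $\langle Y,Y'\rangle=0$. Only the mixed combination $X\in\mathfrak L_0$, $Y\in\mathfrak L_{-1}$ gives $0-1+1=0$. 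By bilinearity, decomposing arbitrary elements into homogeneous components, the pairing is therefore determined entirely by its restriction to $\mathfrak L_0\times\mathfrak L_{-1}$. Graded symmetry then gives $\langle Y,X\rangle=(-1)^{0\cdot(-1)}\langle X,Y\rangle=\langle X,Y\rangle$, so the two mixed orderings carry the same information and $\langle-,-\rangle$ is a mixed pairing.

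Next assume the pairing is non-degenerate on $\mathfrak L$. We show the restricted pairing $\mathfrak L_0\times\mathfrak L_{-1}\to\mathbb R$ is non-degenerate in each slot. Take $X\in\mathfrak L_0$ with $\langle X,Y\rangle=0$ for all $Y\in\mathfrak L_{-1}$. Since $\langle X,X'\rangle=0$ automatically, $X$ pairs trivially with all of $\mathfrak L=\mathfrak L_{-1}\oplus\mathfrak L_0$, so $X=0$. The same argument applied to $Y\in\mathfrak L_{-1}$ gives the other slot.

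Consequently the map $X\mapsto\langle X,-\rangle|_{\mathfrak L_{-1}}$ is an injective linear map $\mathfrak L_0\to\mathfrak L_{-1}^*$, and symmetrically $\mathfrak L_{-1}\hookrightarrow\mathfrak L_0^*$. This is where the standing assumption that $\mathfrak L$ is finite-dimensional enters. The two injections give $\dim\mathfrak L_0\le\dim\mathfrak L_{-1}$ and $\dim\mathfrak L_{-1}\le\dim\mathfrak L_0$, so the dimensions agree and the first injection is an isomorphism $\mathfrak L_0\cong\mathfrak L_{-1}^*$.

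There is no real obstacle here. The only points requiring care are the sign bookkeeping in the degree condition and the observation that non-degeneracy on all of $\mathfrak L$ passes to the mixed block precisely because the diagonal blocks vanish.
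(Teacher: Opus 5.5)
Your proposal is correct and follows the same route as the paper: the degree-$1$ condition $|l|+|l'|=-1$ kills the diagonal blocks $\langle\mathfrak L_0,\mathfrak L_0\rangle$ and $\langle\mathfrak L_{-1},\mathfrak L_{-1}\rangle$, and non-degeneracy of the surviving mixed block gives $\mathfrak L_0\cong\mathfrak L_{-1}^*$. You also make explicit two steps the paper leaves implicit: why non-degeneracy on all of $\mathfrak L$ passes to the mixed block, and the two injections plus dimension count, which rely on finite-dimensionality.
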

\begin{proof}
	Since $\langle-,-\rangle$ has degree $1$ and $\mathbb{R}$ is concentrated in degree $0$, the pairing $\langle X,Y\rangle$ can be non-zero only if $$|X|+|Y|=-1.$$
	As $\mathfrak L$ is concentrated in degrees $-1$ and 0, this
	excludes the components  $\langle\mathfrak{L}_0,\mathfrak{L}_0\rangle$ and $\langle\mathfrak{L}_{-1},\mathfrak{L}_{-1}\rangle$. Thus only the
	mixed pairing between $\mathfrak L_0$ and $\mathfrak L_{-1}$ can be
	non-vanishing. Its non-degeneracy yields $\mathfrak{L}_0\cong\mathfrak{L}_{-1}^*$, and hence $\dim\mathfrak{L}_0=\dim\mathfrak{L}_{-1}$.
\end{proof}

\begin{definition}\label{def:balanced-2term}
	A $2$-term $L_\infty$-algebra $\mathfrak L=\mathfrak L_{-1}\oplus\mathfrak L_0$ is called \emph{balanced} if 
	$\dim\mathfrak L_0=\dim\mathfrak L_{-1}$.
\end{definition}

\begin{definition}\label{def:2term-invariant-pairing}
	Let $\mathfrak L=\mathfrak L_{-1}\oplus\mathfrak L_0$ be a balanced $2$-term $L_\infty$-algebra.
	An \emph{invariant pairing} on $\mathfrak L$ is a non-degenerate bilinear map
	\begin{equation*}
		\langle-,-\rangle:\mathfrak L_0\times\mathfrak L_{-1}\to\mathbb R
	\end{equation*}
	such that, for all $X,X',X_1,X_2\in\mathfrak L_0$ and $Y,Y_1,Y_2\in\mathfrak L_{-1}$, 
	\begin{subequations}\label{q14}
		\begin{align}
			\langle \mu_1(Y_1),Y_2\rangle-\langle \mu_1(Y_2),Y_1\rangle&=0, \label{eq:invpair-mu1}\\
			\langle \mu_2(X_1,X_2),Y\rangle+\langle X_2,\mu_2(X_1,Y)\rangle&=0, \label{eq:invpair-mu2}\\
			\langle X_1,\mu_3(X,X',X_2)\rangle+\langle X_2,\mu_3(X,X',X_1)\rangle&=0. \label{eq:invpair-mu3}
		\end{align}
	\end{subequations}
\end{definition}
The identities \eqref{eq:invpair-mu1}--\eqref{eq:invpair-mu3} are the
component form of the cyclicity condition \eqref{cyc} for a $2$-term $L_\infty$-algebra. By Proposition~\ref{lem:2term-degree1-pairing},
balancedness is necessary for the existence of a non-degenerate
invariant pairing of degree 1,  although it is not sufficient.  A non-balanced $2$-term $L_\infty$-algebra may be embedded into a
minimal balanced extension; see \cite{OZ}. In the remainder of this
paper, we restrict attention to balanced $2$-term $L_\infty$-algebras 
equipped with a non-degenerate invariant pairing
of degree 1.

\subsection{Semistrict higher gauge theory}\label{subsec:higher-gauge-theory}
Throughout this subsection, let $M$ be a compact oriented manifold without boundary, and let $(\mathfrak L,\{\mu_i\}_{i\ge 1},\langle-,-\rangle)$ be a cyclic $2$-term $L_\infty$-algebra. Equivalently, $	\mathfrak L=\mathfrak L_{-1}\oplus \mathfrak L_0$  is a balanced 2-term $L_\infty$-algebra equipped with a non-degenerate
invariant pairing of degree 1 in the sense of
Definition~\ref{def:2term-invariant-pairing}.
We extend the de Rham differential to $\Omega^\bullet(M,\mathfrak L)$ by 
\begin{equation*}
	d(\omega\otimes  l )\coloneqq (d\omega)\otimes  l 
\end{equation*}
for homogeneous $\omega\in\Omega^\bullet(M)$ and $ l \in\mathfrak L$,  and then by linearity. The pairing on $\mathfrak L$ induces a natural pairing on $\Omega^\bullet(M,\mathfrak L)$,
\begin{equation*}
	\ll \omega_1\otimes  l _1,\omega_2\otimes  l _2\gg
	=\int_M \omega_1\wedge \omega_2\,\langle  l _1, l _2\rangle
	=\int_M \big\langle \omega_1\otimes  l _1,\omega_2\otimes  l _2\big\rangle,
\end{equation*}
where the associated integrated pairing is defined by
\begin{equation}\label{omel}
	\big\langle \omega_1\otimes  l _1,\omega_2\otimes  l _2\big\rangle
	\coloneqq \omega_1\wedge \omega_2\,\langle  l _1, l _2\rangle.
\end{equation}
Similarly, the higher brackets extend to $\Omega^\bullet(M,\mathfrak L)$ by
\begin{equation}\label{eq:mu-extension-convention-forms}
	\mu_i\bigl(\omega_1\otimes  l _1,\dots,\omega_i\otimes  l _i\bigr)
	\coloneqq 
	(\omega_1\wedge\cdots\wedge \omega_i)\otimes \mu_i( l _1,\dots, l _i)
\end{equation}
for homogeneous elements, and then by multilinearity. In the
following, we use the same notations $\langle-,-\rangle$ and $\mu_i$ for these extended operations.

With the tensor-product sign convention adopted in
\eqref{mu1} and \eqref{mun}, the induced operations $\mu_i^*$ on $\Omega^\bullet(M,\mathfrak L)$ take the forms
\begin{align}
	\mu^*_1(\omega_1\otimes l_1)&=d(\omega_1 \otimes l_1) +(-1)^{|\omega_1| } \mu_1(\omega_1\otimes l_1),\label{mustar1}\\
	\mu^*_i(\omega_1\otimes l_1, \dots, \omega_i\otimes l_i)&=(-1)^{i\sum\limits_{j=1}^{i}|\omega_j|  +\sum\limits_{j=0}^{i-2}|\omega_{i-j}| \sum\limits_{k=1}^{i-j-1}|l_k|}	\mu_i\bigl(\omega_1\otimes l_1,\dots,\omega_i\otimes l_i\bigr).\label{mustar2}
\end{align}


\begin{proposition}\label{lem:graded-antisym-forms}
	Let $\mathfrak L=\mathfrak L_{-1}\oplus\mathfrak L_0$ be a $2$-term $L_\infty$-algebra, and let
	$\omega,\omega_i,\eta\in\Omega^\bullet(M)$ be homogeneous.
	For $X,X_i\in\mathfrak L_0$ and $Y\in\mathfrak L_{-1}$, the extended brackets satisfy 
	\begin{align*}
		\mu_2(\omega_1\otimes X_1,\omega_2\otimes X_2)
		&=-(-1)^{|\omega_1||\omega_2|}\,
		\mu_2(\omega_2\otimes X_2,\omega_1\otimes X_1),\\
		\mu_2(\omega\otimes X,\eta\otimes Y)
		&=-(-1)^{|\omega||\eta|}\,
		\mu_2(\eta\otimes Y,\omega\otimes X),\\
		\mu_3(\omega_1\otimes X_1,\omega_2\otimes X_2,\omega_3\otimes X_3)
		&=-(-1)^{|\omega_1||\omega_2|}\,
		\mu_3(\omega_2\otimes X_2,\omega_1\otimes X_1,\omega_3\otimes X_3)\nonumber\\
		&=(-1)^{|\omega_1|(|\omega_2|+|\omega_3|)}\,
		\mu_3(\omega_2\otimes X_2,\omega_3\otimes X_3,\omega_1\otimes X_1).
	\end{align*}
\end{proposition}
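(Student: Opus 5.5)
The plan is to reduce each identity to the ordinary graded commutativity of the wedge product together with the antisymmetry relations \eqref{q51} of the underlying brackets. The extended brackets \eqref{eq:mu-extension-convention-forms} carry no Koszul signs of their own. Hence swapping two arguments only requires reordering the form factors inside $\omega_1\wedge\cdots\wedge\omega_i$ and permuting the algebra arguments inside $\mu_i$. The point to keep in mind is that graded antisymmetry of $\mu_i$ on $\mathfrak L$ involves the sign $(-1)^{|l_a||l_{a+1}|}$. For the degree combinations appearing here, this sign is trivial: $|X|=0$, so $|X||X'|=0$ and $|X||Y|=0$. Consequently \eqref{eq:antisym-00}, \eqref{eq:antisym-0-1} and \eqref{eq:antisym-3} are plain (ungraded) antisymmetry or cyclicity relations.

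For the first identity, I will write
\[
\mu_2(\omega_1\otimes X_1,\omega_2\otimes X_2)=(\omega_1\wedge\omega_2)\otimes\mu_2(X_1,X_2).
\]
Then I use $\omega_1\wedge\omega_2=(-1)^{|\omega_1||\omega_2|}\omega_2\wedge\omega_1$ and $\mu_2(X_1,X_2)=-\mu_2(X_2,X_1)$ from \eqref{eq:antisym-00}. Reassembling gives $-(-1)^{|\omega_1||\omega_2|}(\omega_2\wedge\omega_1)\otimes\mu_2(X_2,X_1)$, which is the right-hand side by definition. The second identity is identical, using \eqref{eq:antisym-0-1} in place of \eqref{eq:antisym-00}.

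For the $\mu_3$ identities I argue the same way. The transposition of the first two slots uses $\omega_1\wedge\omega_2\wedge\omega_3=(-1)^{|\omega_1||\omega_2|}\omega_2\wedge\omega_1\wedge\omega_3$ together with the first equality of \eqref{eq:antisym-3}. The cyclic move uses
\[
\omega_1\wedge\omega_2\wedge\omega_3=(-1)^{|\omega_1|(|\omega_2|+|\omega_3|)}\,\omega_2\wedge\omega_3\wedge\omega_1,
\]
obtained by moving $\omega_1$ past $\omega_2\wedge\omega_3$, together with $\mu_3(X_1,X_2,X_3)=\mu_3(X_2,X_3,X_1)$.

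I do not expect a real obstacle here. The only care needed is to use the extended operations \eqref{eq:mu-extension-convention-forms}, which are the ones appearing in the statement, and not the starred brackets $\mu_i^*$ of \eqref{mustar2}, whose extra Koszul signs would alter the formulas. Along the way I will also check that all the grading signs from $\mathfrak L$ vanish because every $X$ has degree $0$.
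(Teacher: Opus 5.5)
Your proof is correct and follows essentially the paper's argument, which also derives all three identities directly from the unsigned extension convention \eqref{eq:mu-extension-convention-forms}, the antisymmetry relations \eqref{q51}, and graded commutativity of the wedge product. Your explicit sign bookkeeping for the transposition and the cyclic move in $\mu_3$, and your caution not to confuse these brackets with $\mu_i^*$, spell out what the paper leaves implicit.
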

\begin{proof}
	The results follow directly from
	\eqref{eq:mu-extension-convention-forms}, the graded antisymmetry
	relations \eqref{q51}, and the graded
	commutativity of the wedge product.
\end{proof}

\begin{proposition}\label{lem:cyclic-pairing-forms}
	Let $\mathfrak L=\mathfrak L_{-1}\oplus\mathfrak L_0$ be a cyclic $2$-term $L_\infty$-algebra.
	For homogeneous $\omega,\omega',\omega_i,\eta\in\Omega^\bullet(M)$, $X,X',X_i\in\mathfrak L_0$, and
	$Y,Y_i\in\mathfrak L_{-1}$ ($i=1, 2$),  the extended pairing satisfies
	\begin{subequations}\label{eq:invpair-forms}
		\begin{align}
			\big\langle \mu_1(\omega_1\otimes Y_1),\,\omega_2\otimes Y_2\big\rangle
			-(-1)^{|\omega_1||\omega_2|}\,
			\big\langle \mu_1(\omega_2\otimes Y_2),\,\omega_1\otimes Y_1\big\rangle
			&=0,\label{eq:invpair-mu1-forms}\\
			\big\langle \mu_2(\omega_1\otimes X_1,\,\omega_2\otimes X_2),\,\eta\otimes Y\big\rangle
			+(-1)^{|\omega_1||\omega_2|}\,
			\big\langle \omega_2\otimes X_2,\,\mu_2(\omega_1\otimes X_1,\,\eta\otimes Y)\big\rangle
			&=0,\label{eq:invpair-mu2-forms}\\
			\big\langle \omega_1\otimes X_1,\,\mu_3(\omega\otimes X,\,\omega'\otimes X',\,\omega_2\otimes X_2)\big\rangle
			+(-1)^{|\omega_1||\omega_2|+(|\omega_1|+|\omega_2|)(|\omega|+|\omega'|)}&\, \nonumber\\
			\big\langle \omega_2\otimes X_2,\,\mu_3(\omega\otimes X,\,\omega'\otimes X',\,\omega_1\otimes X_1)\big\rangle
			&=0.\label{eq:invpair-mu3-forms}
		\end{align}
	\end{subequations}
\end{proposition}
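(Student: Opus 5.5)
The plan is a direct reduction to the algebraic identities \eqref{eq:invpair-mu1}--\eqref{eq:invpair-mu3}. By the definitions \eqref{omel} and \eqref{eq:mu-extension-convention-forms}, each bracket or pairing of form-valued elements factors as a wedge product of the form parts times the corresponding algebraic expression in $\mathfrak L$. Note that here the extended $\mu_i$ (not $\mu_i^*$) is used, so no Koszul signs from the tensor-product convention enter. I will therefore write each term of \eqref{eq:invpair-forms} as (wedge product of forms)$\,\times\,$(algebraic pairing). I will then reorder the wedge products to a common order using graded commutativity $\omega\wedge\eta=(-1)^{|\omega||\eta|}\eta\wedge\omega$, and invoke the corresponding identity in \eqref{q14}.

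For \eqref{eq:invpair-mu1-forms}: the first term equals $\omega_1\wedge\omega_2\langle\mu_1(Y_1),Y_2\rangle$ and the second equals $\omega_2\wedge\omega_1\langle\mu_1(Y_2),Y_1\rangle$. Reordering $\omega_2\wedge\omega_1=(-1)^{|\omega_1||\omega_2|}\omega_1\wedge\omega_2$ cancels the prefactor, and the difference is $\omega_1\wedge\omega_2$ times the left side of \eqref{eq:invpair-mu1}, which vanishes.

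For \eqref{eq:invpair-mu2-forms}: the first term is $\omega_1\wedge\omega_2\wedge\eta\,\langle\mu_2(X_1,X_2),Y\rangle$. The second is $(-1)^{|\omega_1||\omega_2|}\omega_2\wedge\omega_1\wedge\eta\,\langle X_2,\mu_2(X_1,Y)\rangle$, which equals $\omega_1\wedge\omega_2\wedge\eta\,\langle X_2,\mu_2(X_1,Y)\rangle$. Summing and applying \eqref{eq:invpair-mu2} gives zero.

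For \eqref{eq:invpair-mu3-forms}: the first term is $\omega_1\wedge\omega\wedge\omega'\wedge\omega_2\,\langle X_1,\mu_3(X,X',X_2)\rangle$, and the second carries $\omega_2\wedge\omega\wedge\omega'\wedge\omega_1$. Moving $\omega_1$ to the front past $\omega_2,\omega,\omega'$ and then $\omega_2$ to the back past $\omega,\omega'$ produces the sign $(-1)^{|\omega_1|(|\omega_2|+|\omega|+|\omega'|)+|\omega_2|(|\omega|+|\omega'|)}$. This is exactly the prescribed prefactor, so the two signs cancel. The sum is then the common form factor times the left side of \eqref{eq:invpair-mu3}, which is zero.

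The only real obstacle is this sign bookkeeping in the $\mu_3$ case. I will handle it by counting transpositions explicitly as above, using that only parities matter. No integration or Stokes argument is needed, since these are pointwise identities of forms.
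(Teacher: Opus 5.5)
Your proof is correct and follows essentially the same route as the paper, which likewise unfolds \eqref{omel} and \eqref{eq:mu-extension-convention-forms}, reorders the wedge factors by graded commutativity, and invokes \eqref{q14}. Your explicit transposition count in the $\mu_3$ case correctly reproduces the stated sign $(-1)^{|\omega_1||\omega_2|+(|\omega_1|+|\omega_2|)(|\omega|+|\omega'|)}$, which is the only step the paper leaves implicit.
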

\begin{proof}
	The identities are direct consequences of
	\eqref{omel}, \eqref{eq:mu-extension-convention-forms}, the invariance
	relations \eqref{q14}, and the
	graded commutativity of the wedge product.
\end{proof}

\subsubsection{Higher gauge fields and infinitesimal gauge transformations}
We now specialize the homotopy MC formalism of higher gauge theory in
Section~\ref{se-HMC} to the $2$-term $L_\infty$-algebra $\mathfrak{L}$. The
total-degree-one component of $\Omega^{\bullet}(M, \mathfrak{L})$ is
$\Omega^{\bullet}_1(M, \mathfrak{L})=(\Omega^1(M)\otimes \mathfrak L_0)\oplus (\Omega^2(M)\otimes \mathfrak L_{-1})$.
Accordingly, 
a gauge potential on $M$ with values in $\mathfrak{L}$ is  a pair
\begin{equation*}
	\mathcal A=A+B\in\Omega^\bullet_1(M,\mathfrak L),
	\qquad
	A\in\Omega^1(M)\otimes \mathfrak L_0,\qquad
	B\in\Omega^2(M)\otimes \mathfrak L_{-1}.
\end{equation*}
We refer to $(A, B)$ as a \emph{$2$-connection}.
Its curvature is the total-degree-two element
\begin{equation}\label{cur-d}
	\mathcal F
	=\mu^*_1(\mathcal A)+\frac12\,\mu^*_2(\mathcal A,\mathcal A)+\frac16\,\mu^*_3(\mathcal A,\mathcal A,\mathcal A)
	\in\Omega^\bullet_2(M,\mathfrak L).
\end{equation}
Using \eqref{mustar1} and \eqref{mustar2}, we obtain
\begin{align*}
	\mu_1^*(\mathcal A)&=dA+dB+\mu_1(B),\\
	\mu_2^*(\mathcal A,\mathcal A)&=\mu_2(A,A)+\mu_2(A,B)-\mu_2(B,A),\\
	\mu_3^*(\mathcal A,\mathcal A,\mathcal A)&=-\,\mu_3(A,A,A).\
\end{align*}
Here $\mu_2(B,B)=0$, since every term in $\mu_3^*(\mathcal A,\mathcal A,\mathcal A)$ containing at least one
factor of $B$ vanishes for degree reasons. Hence, the curvature \eqref{cur-d} is reduced to
\begin{equation*}
	\mathcal F=F+H,\qquad
	F\in\Omega^2(M)\otimes\mathfrak L_0,\qquad
	H\in\Omega^3(M)\otimes\mathfrak L_{-1},
\end{equation*}
where
\begin{align}\label{eq:2term-curvature-components-FH}
	F= dA+\frac12\,\mu_2(A,A)+\mu_1(B),\qquad
	H= dB+\mu_2(A,B)-\frac16\,\mu_3(A,A,A).
\end{align}
The pair $(F,H)$ is called the \emph{$2$-curvature} of  $(A,B)$, with $F$ being the fake curvature of the 2-connection. The 2-connection  is said to be fake-flat if
if $F=0$, and flat if in addition $H=0$.

In this case, the general Bianchi identity \eqref{BI} accordingly reduces to
\begin{equation*}
	\mu^*_1(\mathcal F)-\mu^*_2(\mathcal F,\mathcal A)+\frac12\,\mu^*_3(\mathcal F,\mathcal A,\mathcal A)=0.
\end{equation*}
The two homogeneous components of this identity can be obtained directly from the induced operations $,\mu^*_i$ using \eqref{mustar1} and \eqref{mustar2}. In what follows, we verify them directly from the defining expressions~\eqref{eq:2term-curvature-components-FH}.
\begin{theorem}\label{thm:2-bianchi}
	Let $(A,B)$ be a $2$-connection with values in a $2$-term $L_\infty$-algebra, and let $(F,H)$ be its $2$-curvature defined by~\eqref{eq:2term-curvature-components-FH}. Then
	\begin{subequations}\label{eq:2-bianchi}
		\begin{align}
			dF+\mu_2(A,F)-\mu_1(H)&=0,\label{BI1}\\
			dH+\mu_2(A,H)-\mu_2(F,B)+\frac12\,\mu_3(A,A,F)&=0,\label{BI2}
		\end{align}
	\end{subequations}
	which are called the \emph{$2$-Bianchi identities}.
\end{theorem}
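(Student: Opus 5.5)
The plan is to verify both identities by direct differentiation of the defining expressions \eqref{eq:2term-curvature-components-FH}, using the extended (sign-free) brackets \eqref{eq:mu-extension-convention-forms}. Two facts are needed first: a Leibniz rule for $d$ acting on the extended brackets, and the graded antisymmetry of Proposition~\ref{lem:graded-antisym-forms}. Since $d(\omega_1\wedge\cdots\wedge\omega_i)$ obeys the graded Leibniz rule and the brackets act trivially on the form factors, we get
\[
d\mu_i(a_1,\dots,a_i)=\sum_j(-1)^{|\omega_1|+\cdots+|\omega_{j-1}|}\mu_i(a_1,\dots,da_j,\dots,a_i),
\]
with signs governed only by form degrees. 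The same reasoning gives $d\mu_1(\omega\otimes Y)=\mu_1(d\omega\otimes Y)$. For the $1$-form $A$ this yields $d\mu_2(A,A)=\mu_2(dA,A)-\mu_2(A,dA)=2\mu_2(dA,A)$, and $d\mu_3(A,A,A)=3\mu_3(dA,A,A)$ after cyclically moving the $2$-form $dA$ to the front using Proposition~\ref{lem:graded-antisym-forms}.

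For \eqref{BI1}, compute
\[
dF=\mu_2(dA,A)+\mu_1(dB).
\]
Expand $\mu_2(A,F)=\mu_2(A,dA)+\tfrac12\mu_2(A,\mu_2(A,A))+\mu_2(A,\mu_1(B))$ and $\mu_1(H)=\mu_1(dB)+\mu_1(\mu_2(A,B))-\tfrac16\mu_1(\mu_3(A,A,A))$. The terms match up as follows:
\begin{itemize}
\item $\mu_2(dA,A)+\mu_2(A,dA)=0$ by antisymmetry for a $2$-form against a $1$-form.
\item $\mu_1(dB)$ cancels between $dF$ and $\mu_1(H)$.
\item $\mu_2(A,\mu_1(B))-\mu_1(\mu_2(A,B))=0$ by \eqref{eq1}.
\item $\tfrac12\mu_2(A,\mu_2(A,A))+\tfrac16\mu_1(\mu_3(A,A,A))=0$ comes from \eqref{eq3}. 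For $1$-forms the three cyclic terms of the Jacobiator coincide, so \eqref{eq3} reads $3\mu_2(A,\mu_2(A,A))+\mu_1(\mu_3(A,A,A))=0$.
\end{itemize}
Each identity in \eqref{q61} first has to be transported to form-valued arguments. Writing $A=\sum_a\omega^a\otimes X_a$ and pulling out $\omega^a\wedge\omega^b\wedge\omega^c$, the antisymmetry of the $1$-forms turns a sum of cyclic terms into a multiple of one term.

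For \eqref{BI2}, compute
\[
dH=\mu_2(dA,B)-\mu_2(A,dB)-\tfrac12\mu_3(dA,A,A),
\]
and expand $\mu_2(A,H)$, $\mu_2(F,B)$ and $\tfrac12\mu_3(A,A,F)$ using the formulas for $F$ and $H$. The terms group by structure.
\begin{itemize}
\item The $dA$ and $dB$ terms cancel pairwise after rewriting $\mu_3(A,A,dA)=\mu_3(dA,A,A)$ by the cyclic symmetry in Proposition~\ref{lem:graded-antisym-forms}.
\item The terms $\mu_2(A,\mu_2(A,B))$, $-\tfrac12\mu_2(\mu_2(A,A),B)$ and $\tfrac12\mu_3(A,A,\mu_1(B))$ combine into the form version of \eqref{eq4}. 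With $X_1,X_2$ both carried by the same $1$-form $A$, the two terms $\mu_2(X_1,\mu_2(X_2,Y))-\mu_2(X_2,\mu_2(X_1,Y))$ collapse to $2\mu_2(A,\mu_2(A,B))$, so \eqref{eq4} becomes $2\mu_2(A,\mu_2(A,B))-\mu_2(\mu_2(A,A),B)+\mu_3(A,A,\mu_1(B))=0$, which is twice the combination.
\item $\mu_2(\mu_1(B),B)$ vanishes by \eqref{eq2} applied to the $2$-form $B$.
\item The quartic terms $-\tfrac16\mu_2(A,\mu_3(A,A,A))+\tfrac14\mu_3(A,A,\mu_2(A,A))$ vanish by \eqref{eq5} specialized to four copies of $A$.
\end{itemize}

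I expect the main obstacle to be this last specialization of \eqref{eq5}, and more generally the bookkeeping of relative signs when all arguments are equal odd forms. The plan for that step is to substitute $X_i\mapsto\omega^{a_i}\otimes X_{a_i}$ into \eqref{eq5}, contract with $\omega^{a_1}\wedge\cdots\wedge\omega^{a_4}$, and use total antisymmetry of this wedge to reindex each of the ten terms into either $\mu_3(A,A,\mu_2(A,A))$ or $\mu_2(A,\mu_3(A,A,A))$, each with a sign. The six $\mu_3$ terms should give $6\,\mu_3(A,A,\mu_2(A,A))$ and the four $\mu_2$ terms $-4\,\mu_2(A,\mu_3(A,A,A))$. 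The result is proportional to $3\mu_3(A,A,\mu_2(A,A))-2\mu_2(A,\mu_3(A,A,A))$, which is $12$ times the required combination. I would check the signs of each reindexing individually, and as a consistency check compare against the expansion of the general Bianchi identity \eqref{BI} via \eqref{mustar1}--\eqref{mustar2}.
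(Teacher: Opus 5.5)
Your proposal is correct and follows the paper's own route: differentiate the defining expressions for $F$ and $H$, then cancel the leftover terms using the component identities \eqref{q61} transported to form-valued arguments. Your bookkeeping checks out, including the factor $3$ from \eqref{eq3}, the factors $2$ from \eqref{eq2} and \eqref{eq4}, and $6\,\mu_3(A,A,\mu_2(A,A))-4\,\mu_2(A,\mu_3(A,A,A))=0$ from \eqref{eq5}; this supplies the details the paper's proof leaves implicit.
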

\begin{proof}
	We start with the structure equations
	\begin{align*}
		dA=F-\frac12\,\mu_2(A,A)-\mu_1(B),\qquad dB=H-\mu_2(A,B)+\frac16\,\mu_3(A,A,A).
	\end{align*}
	Applying the exterior derivative $d$ and using the identity $d^2=0$ together with the graded Leibniz rule immediately yields
	\begin{equation}\label{2BI2}
		dF=\mu_2(dA,A)+\mu_1(dB),\qquad
		dH=\mu_2(dA,B)-\mu_2(A,dB)-\frac12\,\mu_3(dA,A,A).
	\end{equation}
	Substituting the preceding expressions for $dA$ and $dB$, and then
	using the $L_\infty$-identities \eqref{q61}, gives
	\begin{equation*}
		dF=\mu_2(F,A)+\mu_1(H),\qquad
		dH=\mu_2(F,B)-\mu_2(A,H)-\frac12\,\mu_3(F,A,A).
	\end{equation*}
	By Proposition~\ref{lem:graded-antisym-forms},
	\begin{equation*}
		\mu_2(F,A)=-\mu_2(A,F), \qquad \mu_3(F,A,A)=\mu_3(A,A,F).
	\end{equation*}
	Therefore, the identities \eqref{2BI2} are  equivalent to
	\eqref{eq:2-bianchi}.
\end{proof}

Infinitesimal gauge transformations are generated by total-degree-zero
parameters 
$\lambda_0\in\Omega^\bullet_0(M,\mathfrak L)$.
In the $2$-term case,  such parameter can be decomposed as
\begin{equation*}
	\lambda_0=a+b,\qquad
	a\in\Omega^0(M)\otimes\mathfrak L_0,\qquad
	b\in\Omega^1(M)\otimes\mathfrak L_{-1},
\end{equation*}
where $a$ is the ordinary gauge parameter and $b$ is the higher gauge parameter.
The general transformation law
\eqref{gtraA} becomes
\begin{equation*}
	\delta_{\lambda_0}\mathcal A
	=\mu_1^*(\lambda_0)+\mu_2^*(\mathcal A,\lambda_0)+\frac12\,\mu_3^*(\mathcal A,\mathcal A,\lambda_0)
	= \delta_{a,b}A+\delta_{a,b}B,
\end{equation*}
where
\begin{align}\label{eq:gauge-tra-comp}
	\delta_{a,b}A
	=da+\mu_2(A,a)-\mu_1(b),\qquad
	\delta_{a,b}B
	=db+\mu_2(A,b)+\mu_2(B,a)+\frac12\,\mu_3(A,A,a).
\end{align}
Similarly, the curvature transformation \eqref{gtraf} becomes
\begin{equation*}
	\delta_{\lambda_0}\mathcal F
	=\mu_2^*(\mathcal F,\lambda_0)+\mu_3^*(\mathcal A,\mathcal F,\lambda_0)
	=\delta_{a,b}F+\delta_{a,b}H,
\end{equation*}
where
\begin{align}\label{gauge-tra-cur}
	\delta_{a,b}F=\mu_2(F,a),\qquad
	\delta_{a,b}H=\mu_2(H,a)+\mu_2(F,b)+\mu_3(A,F,a).
\end{align}

The commutator of two infinitesimal gauge transformations satisfies
\begin{equation*}
	[\delta_{\lambda_0},\delta_{\lambda_0'}]\mathcal A
	=\delta_{\lambda_0''}\mathcal A+\mu_3^*(\mathcal F,\lambda_0,\lambda_0'),
	\qquad
	\lambda_0''=\mu_2^*(\lambda_0,\lambda_0')+\mu_3^*(\mathcal A,\lambda_0,\lambda_0').
\end{equation*}
Let $\lambda_0=a+b$ and $\lambda_0'=a'+b'$ with $a,a'\in\Omega^0(M)\otimes\mathfrak L_0$ and $b,b'\in\Omega^1(M)\otimes\mathfrak L_{-1}$. Writing $\lambda_0''=a''+b''$, a straightforward computation yields
\begin{subequations}
	\begin{align*}
		[\delta_{a,b},\delta_{a',b'}]A=\delta_{a'',b''}A,\qquad
		[\delta_{a,b},\delta_{a',b'}]B=\delta_{a'',b''}B+\mu_3^*(F,a,a').
	\end{align*}
\end{subequations}
Consequently, the gauge algebra closes off shell on the  $A$-component, while
its closure on the $B$-component is obstructed by the fake curvature $F$.   In the special case $F=0$, the algebra closes on-shell. Further details can be found in \cite{BLCM}.

\subsubsection{Four-dimensional higher Chern--Simons theory}\label{subsubsec:4dHCS}
The four-dimensional HCS theory is obtained by
specializing the homotopy MC construction to a cyclic $2$-term $L_\infty$-algebra \cite{GR,BLCM}. Let $\mathcal{A}=(A,B)$ be a $2$-connection. In four dimensions, the homotopy MC functional takes the form
\begin{equation*}
	S_4(\mathcal A)
	=\dfrac{1}{2}\,\ll \mathcal A,\mu_1^*(\mathcal A)\gg
	+\dfrac{1}{6}\,\ll \mathcal A,\mu_2^*(\mathcal A,\mathcal A)\gg
	+\dfrac{1}{24}\,\ll \mathcal A,\mu_3^*(\mathcal A,\mathcal A,\mathcal A)\gg.
\end{equation*}
We refer to  $S_4$ as the \emph{four-dimensional 2-Chern--Simons action}.

In terms of the component fields  $A$ and $B$, the action reads
\begin{align*}
	S_4(A,B)=&\int_M \Big(
	\frac12\,\langle A,dB\rangle
	+\frac12\,\langle B,dA\rangle
	+\frac12\,\langle B,\mu_1(B)\rangle +\frac16\,\langle A,\mu_2(A,B)\rangle
	-\frac16\,\langle A,\mu_2(B,A)\rangle\nonumber\\
	& 
	+\frac16\,\langle B,\mu_2(A,A)\rangle
	-\frac{1}{24}\,\langle A,\mu_3(A,A,A)\rangle
	\Big)	\nonumber \\
	=&\int_M \Big(
	\big\langle F-\dfrac{1}{2}\mu_1(B), B\big\rangle
	-\frac{1}{24}\,\langle A,\,\mu_3(A,A,A)\rangle\Big).
\end{align*}
The second expression follows from the cyclicity of the pairing and
integration by parts on the closed manifold  $M$. 
The corresponding
Lagrangian 4-form is
\begin{equation}\label{2CS-4d}
	C^4(A,B)\coloneqq \langle F-\dfrac{1}{2}\mu_1(B), B\rangle-\frac{1}{24}\,\langle A,\,\mu_3(A,A,A)\rangle,
\end{equation}
which is the \emph{four-dimensional $2$-Chern--Simons form}.

For an arbitrary variation $(\delta A,\delta B)$, the variation of the action is
\begin{equation}\label{eq:4d-variation}
	\delta S_4(A,B)=\int_M\Big(\langle F, \delta B\rangle+\langle \delta A,H\rangle\Big).
\end{equation}
It follows that the equations of motion are
\begin{equation*}
	F=0,\qquad H=0.
\end{equation*}
As expected, solutions to these equations correspond to flat 
2-connections on the trivial higher bundle over $M$, in analogy with the standard CS theory.

For each pair of gauge parameters
$
a\in\Omega^0(M)\otimes\mathfrak L_0
$
and
$
b\in\Omega^1(M)\otimes\mathfrak L_{-1}
$,
the transformation $\delta_{a,b}$  defines an infinitesimal gauge
transformations on the space of 2-connections. Substituting
\eqref{eq:gauge-tra-comp} into \eqref{eq:4d-variation}, and using the
cyclicity identities \eqref{eq:invpair-forms} together with integration
by parts, gives
\begin{equation}\label{eq:4dHCS-gauge-var-expanded}
	\begin{aligned}
		\delta_{a,b}S_4(A,B)
		=&-\int_M\Big(
		\big\langle dF+\mu_2(A,F)-\mu_1(H),\,b\big\rangle \\
		&\qquad
		+\big\langle a,\,dH+\mu_2(A,H)+\mu_2(B,F)+\tfrac12\,\mu_3(A,A,F)\big\rangle
		\Big).
	\end{aligned}
\end{equation}
By Proposition~\ref{lem:graded-antisym-forms}, we have
$\mu_2(B,F)=-\mu_2(F,B)$.
The two expressions in \eqref{eq:4dHCS-gauge-var-expanded} are
therefore precisely the 2-Bianchi identities \eqref{eq:2-bianchi}. Consequently, we obtain
\begin{equation*}
	\delta_{a,b}S_4(A,B)=0.
\end{equation*}
Thus, the four-dimensional 
2-Chern--Simons action is invariant
under infinitesimal gauge transformations off shell.

The preceding discussion concerns infinitesimal gauge transformations.
The behavior of  $S_4$ under finite higher gauge transformations is
more subtle. On manifolds with boundary, boundary contributions arise
in general, while in the semistrict case further global terms may
appear~\cite{OZ,Zucchini-2014-1}. In the strict case, the finite gauge
variation reduces to a boundary term~\cite{HC-2024}. We do not address
these global issues here and restrict attention to the infinitesimal
case.

\section{$(2n+2)$-Dimensional Higher Chern--Simons Theory}
\label{sec:2n+2-hCS}
In this section, we extend the four-dimensional HCS theory associated with cyclic 
$2$-term $L_\infty$-algebras to dimension 
$2n+2$. The corresponding construction in the strict case, where 
$\mu_3=0$, was given in~\cite{DHS-4}; here we treat the general semistrict case. Motivated by the transgression mechanism underlying ordinary CS theory in odd dimensions, we generalize the HCS theory to even dimensions and present the corresponding transgression formula and infinitesimal gauge symmetry.

\subsection{Cyclic multilinear forms on $2$-term $L_\infty$-algebras}\label{subsec:extended-pairings}
To formulate the higher-dimensional theory, we extend the notion of an invariant bilinear pairing to suitable multilinear forms on a $2$-term $L_\infty$-algebra $\mathfrak L$. Such forms have appeared previously in~\cite{WS}; we recall them here in conventions adapted to the present work. Then, we define the corresponding extended multilinear forms on $\mathfrak L$-valued differential forms.
This construction is also naturally motivated by the duality between $L_\infty$-algebra and FDA~\cite{Salgado}. Under this correspondence, invariant tensors on the FDA side are related to cyclic multilinear forms on the $L_\infty$-algebra side. In what follows, we consider cyclic forms with one $\mathfrak L_{-1}$-entry and $n$ $\mathfrak L_0$-entries.

\begin{definition}\label{def:invpoly-Pversion}
	Let $\mathfrak L=\mathfrak L_{-1}\oplus \mathfrak L_0$ be a $2$-term $L_\infty$-algebra with structure maps $\mu_1,\mu_2,$ and $\mu_3$.
	A \emph{cyclic $(n+1)$-linear form} on $\mathfrak L$ is a multilinear map
	\begin{equation*}
		\langle -,\ldots,-;\,-\rangle:\mathfrak L_0^{\times n}\times \mathfrak L_{-1}\to \mathbb{R},
	\end{equation*}
	which is symmetric in its $\mathfrak L_0$-arguments and satisfies the
	following conditions: for all $	X_1,\dots,X_n,X,X'$, $X''\in\mathfrak L_0$ and $Y,Y'\in\mathfrak L_{-1}$,
	\begin{subequations}\label{q41}
		\begin{align}
			\langle X_1,\dots,\mu_1(Y),\dots,X_n;\,Y'\rangle
			&=\langle X_1,\dots,\mu_1(Y'),\dots,X_n;\,Y\rangle,
			\label{eq:bracket-inv-mu1}\\[1mm]
			\langle X_1,\dots,X_n;\,\mu_2(X,Y)\rangle
			&=-\sum_{i=1}^n \langle X_1,\dots,\mu_2(X,X_i),\dots,X_n;\,Y\rangle,
			\label{eq:bracket-inv-mu2}\\[1mm]
			\langle X_1,\dots,X_n;\,\mu_3(X,X',X'')\rangle
			&=-\sum_{i=1}^n \langle X_1,\dots,X_{i-1},X'',X_{i+1},\dots,X_n;\,\mu_3(X,X',X_i)\rangle.
			\label{eq:bracket-inv-mu3}
		\end{align}
	\end{subequations}
\end{definition}

The identities \eqref{q41} generalize the cyclicity conditions for an invariant bilinear pairing \eqref{q14} in the special case $n=1$.  In the strict case $\mu_3=0$, condition \eqref{eq:bracket-inv-mu3} is vacuous, and the above notion reduces to the invariant multilinear forms considered in~\cite{DHS-4,DHS-5}. 

We next extend a cyclic multilinear form on  $\mathfrak L$ to $\mathfrak L$-valued differential forms. Let $M$ be a closed oriented smooth manifold.
\begin{definition}\label{def:induced-bracket-forms}
	Let $\mathfrak L=\mathfrak L_{-1}\oplus\mathfrak L_0$ be a $2$-term $L_\infty$-algebra  equipped with a cyclic $(n+1)$-linear form. For homogeneous forms
	$\omega_i\in\Omega^\bullet(M)$, $ \eta\in\Omega^\bullet(M)$
	and elements
	$X_i\in\mathfrak L_0 (1\le i\le n)$, $Y\in\mathfrak L_{-1}$, the extended multilinear form on $\Omega^\bullet(M,\mathfrak L)$  is defined by
	\begin{equation*}
		\bigl\langle \omega_1\otimes X_1,\dots,\omega_n\otimes X_n;\ \eta\otimes Y\bigr\rangle
		\coloneqq 
		\omega_1\wedge\cdots\wedge\omega_n\wedge\eta\,\langle X_1,\dots,X_n;Y\rangle.
	\end{equation*}
\end{definition}

The extended multilinear form can be viewed as a natural generalization of the extended pairing \eqref{omel}. Correspondingly, it inherits the graded symmetry and cyclicity properties of the underlying multilinear form.
\begin{proposition}\label{lem:invpoly-sym-forms}
	The extended multilinear form $\big\langle-\, ,\ldots,-;\,-\big\rangle$  on $\Omega^\bullet(M,\mathfrak L)$ is graded symmetric in its $\Omega^\bullet(M,\mathfrak L_0)$-arguments.
	More precisely, for homogeneous 
	$\omega_i,\eta\in\Omega^\bullet(M)$, $X_i\in\mathfrak L_0$ $(i=1,\dots,n)$, and
	$Y\in\mathfrak L_{-1}$,  one has
	\begin{equation*}
		\big\langle \omega_{\sigma(1)}\!\otimes\!X_{\sigma(1)}, \cdots, \omega_{\sigma(n)}\!\otimes\!X_{\sigma(n)};\ \eta\!\otimes\!Y\big\rangle
		=
		\chi(\sigma;\omega_1,\dots,\omega_n)\,
		\big\langle \omega_1\!\otimes\!X_1, \cdots, \omega_n\!\otimes\!X_n;\ \eta\!\otimes\!Y\big\rangle
	\end{equation*}
	for every $\sigma\in\mathrm{Sym}(n)$, where  $\chi(\sigma;\omega_1,\dots,\omega_n)$ is determined by
	\begin{equation*}
		\omega_{\sigma(1)}\wedge\cdots\wedge\omega_{\sigma(n)}
		=
		\chi(\sigma;\omega_1,\dots,\omega_n)\,\omega_1\wedge\cdots\wedge\omega_n.
	\end{equation*}
\end{proposition}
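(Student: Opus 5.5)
The plan is to reduce the claim to the definition of the extended multilinear form together with the symmetry of the underlying cyclic $(n+1)$-linear form in its $\mathfrak L_0$-slots. First I evaluate the left-hand side. By the definition of the extended multilinear form (Definition~\ref{def:induced-bracket-forms}) applied to the permuted list of arguments,
\begin{equation*}
	\big\langle \omega_{\sigma(1)}\otimes X_{\sigma(1)},\dots,\omega_{\sigma(n)}\otimes X_{\sigma(n)};\ \eta\otimes Y\big\rangle
	=\omega_{\sigma(1)}\wedge\cdots\wedge\omega_{\sigma(n)}\wedge\eta\,\langle X_{\sigma(1)},\dots,X_{\sigma(n)};Y\rangle .
\end{equation*}

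Next I treat the two factors separately. The algebraic factor is a real number, and by Definition~\ref{def:invpoly-Pversion} the cyclic form is (ungraded) symmetric in its $\mathfrak L_0$-arguments. Hence $\langle X_{\sigma(1)},\dots,X_{\sigma(n)};Y\rangle=\langle X_1,\dots,X_n;Y\rangle$. No Koszul sign arises from the elements $X_i$, since they all sit in degree $0$. For the form factor, I use the defining relation of $\chi(\sigma;\omega_1,\dots,\omega_n)$ and wedge both sides on the right with $\eta$:
\begin{equation*}
	\omega_{\sigma(1)}\wedge\cdots\wedge\omega_{\sigma(n)}\wedge\eta=\chi(\sigma;\omega_1,\dots,\omega_n)\,\omega_1\wedge\cdots\wedge\omega_n\wedge\eta .
\end{equation*}
Because $\eta$ stays in the last position, no additional sign is produced.

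Finally I recombine the two factors and again apply Definition~\ref{def:induced-bracket-forms}, now to the unpermuted arguments. This gives $\chi(\sigma;\omega_1,\dots,\omega_n)\,\langle \omega_1\otimes X_1,\dots,\omega_n\otimes X_n;\eta\otimes Y\rangle$, which is the claimed identity.

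The only step needing care is a convention check. Definition~\ref{def:induced-bracket-forms} attaches the algebraic value without any Koszul sign from moving forms past Lie algebra elements. This is consistent here precisely because $|X_i|=0$, so the total degree of $\omega_i\otimes X_i$ equals $|\omega_i|$ and the graded symmetry in the statement is governed purely by form degrees. I do not anticipate any genuine obstacle. If desired, one may note that it suffices to verify the identity for adjacent transpositions, where $\chi=(-1)^{|\omega_i||\omega_{i+1}|}$, and then extend multiplicatively. The direct argument above, however, already handles arbitrary $\sigma$.
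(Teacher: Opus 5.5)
Your proof is correct and follows the same route as the paper, which also derives the identity directly from the definition of the extended form, the symmetry of the underlying cyclic form in its $\mathfrak L_0$-slots, and graded commutativity of the wedge product. You simply spell out the steps that the paper compresses into one sentence, including the observation that $\eta$ stays in the last position, so no extra sign appears.
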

\begin{proof}
	The claim follows from the graded commutativity of the wedge product
	and the symmetry of the underlying multilinear form in its $\mathfrak L_0$-arguments.
\end{proof}

In the following proposition, $\mu_k$ denotes the pointwise extension of the $k$-ary structure map to $\mathfrak L$-valued differential forms, with the convention fixed in Section~\ref{subsec:higher-gauge-theory}. In particular, these extended operations are distinct from the induced operations $\mu_k^*$.
\begin{proposition}\label{lem:invpoly-forms}
	The extended multilinear form $\big\langle-\, ,\ldots,-;\,-\big\rangle$ on $\Omega^\bullet(M,\mathfrak L)$ is invariant in the following sense: for all $\omega_1,\dots,\omega_n,\eta,\omega,\omega',\omega'',\eta'\in\Omega^\bullet(M)$ and $X_1,\dots,X_n,X,X',X''\in\mathfrak L_0$, $Y,Y'\in\mathfrak L_{-1}$, 
	\begin{subequations}\label{eq:invP}
		\begin{align}
			&\big\langle \omega_1\!\otimes\!X_1, \cdots, \omega_{i-1}\!\otimes\!X_{i-1}, \mu_1(\eta\!\otimes\!Y), \omega_{i+1}\!\otimes\!X_{i+1},\cdots,\omega_n\!\otimes\!X_n; \eta'\!\otimes\!Y'\big\rangle\nonumber\\
			=&
			(-1)^{|\eta|(|\omega_{i+1}|+\cdots+|\omega_{n}| )+|\eta'|(|\eta|+|\omega_{i+1}|+\cdots+|\omega_{n}| )}
			\big\langle \omega_1\!\otimes\!X_1,\cdots,\mu_1(\eta'\!\otimes\!Y'),\cdots, \omega_n\!\otimes\!X_n;\eta\!\otimes\!Y\big\rangle,
			\label{eq:invP-mu1-forms}\\[1mm]
			&	\big\langle \omega_1\!\otimes\!X_1,\cdots, \omega_n\!\otimes\!X_n;\mu_2(\omega\!\otimes\!X,\eta\!\otimes\!Y)\big\rangle\nonumber\\
			=&
			-\sum_{i=1}^n
			(-1)^{|\omega|\,(|\omega_i|+\cdots+|\omega_{n}|)}
			\big\langle \omega_1\!\otimes\!X_1,\cdots,\mu_2(\omega\!\otimes\!X,\omega_i\!\otimes\!X_i), \cdots,\omega_n\!\otimes\!X_n;\eta\!\otimes\!Y\big\rangle,
			\label{eq:invP-mu2-forms}\\[1mm]
			&\big\langle \omega_1\!\otimes\!X_1,\cdots, \omega_n\!\otimes\!X_n; \mu_3(\omega\!\otimes\!X,\omega'\!\otimes\!X',\omega''\!\otimes\!X'')\big\rangle\nonumber\\
			=&
			-\sum_{i=1}^n
			(-1)^{|\omega''|\,(|\omega_i|+\cdots+|\omega_{n}|+|\omega|+|\omega'|)+ |\omega_i|\,(|\omega_{i+1}|+\cdots+|\omega_{n}|+|\omega|+|\omega'|)}\,
			\big\langle \omega_1\!\otimes\!X_1,\cdots,\omega_{i-1}\!\otimes\!X_{i-1}, \omega''\!\otimes\!X'',\nonumber\\
			& \omega_{i+1}\!\otimes\!X_{i+1},\cdots, \omega_n\!\otimes\!X_n\mu_3(\omega\!\otimes\!X,\omega'\!\otimes\!X',\omega_i\!\otimes\!X_i)\big\rangle.
			\label{eq:invP-mu3-forms}
		\end{align}
	\end{subequations}
\end{proposition}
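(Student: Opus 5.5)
The plan is to reduce each of the three identities to its algebraic counterpart in \eqref{q41}, keeping careful track of the Koszul sign produced when the wedge product of scalar forms is reordered. By Definition~\ref{def:induced-bracket-forms} and \eqref{eq:mu-extension-convention-forms}, both sides of each identity have the form
\[
(\text{wedge product of } \omega_1,\dots,\omega_n,\omega,\omega',\omega'',\eta,\eta' \text{ in some order})\cdot \langle\cdots;\cdots\rangle ,
\]
with the algebraic multilinear form evaluated on elements of $\mathfrak L$. The proof therefore splits into two parts. First, apply the corresponding relation \eqref{eq:bracket-inv-mu1}, \eqref{eq:bracket-inv-mu2} or \eqref{eq:bracket-inv-mu3} to the algebraic factor. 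Second, rewrite each resulting wedge product in the order appearing on the left-hand side, using graded commutativity $\alpha\wedge\beta=(-1)^{|\alpha||\beta|}\beta\wedge\alpha$.

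For \eqref{eq:invP-mu1-forms}, the left side equals $\omega_1\wedge\cdots\wedge\omega_{i-1}\wedge\eta\wedge\omega_{i+1}\wedge\cdots\wedge\omega_n\wedge\eta'\,\langle\dots,\mu_1(Y),\dots;Y'\rangle$. By \eqref{eq:bracket-inv-mu1}, the algebraic factor equals $\langle\dots,\mu_1(Y'),\dots;Y\rangle$. The right side carries the wedge $\omega_1\wedge\cdots\wedge\eta'\wedge\omega_{i+1}\wedge\cdots\wedge\omega_n\wedge\eta$. To compare the two wedges, move $\eta$ past $\omega_{i+1},\dots,\omega_n$ to the end, then move $\eta'$ leftward past $\eta,\omega_n,\dots,\omega_{i+1}$. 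This gives exactly the stated exponent $|\eta|\sum_{j>i}|\omega_j|+|\eta'|(|\eta|+\sum_{j>i}|\omega_j|)$.

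For \eqref{eq:invP-mu2-forms}, the left wedge is $\omega_1\cdots\omega_n\,\omega\,\eta$. In the $i$-th term on the right, the wedge is $\omega_1\cdots\omega_{i-1}\,\omega\,\omega_i\cdots\omega_n\,\eta$. Here $\omega$ has been moved past $\omega_i,\dots,\omega_n$, which produces the sign $(-1)^{|\omega|(|\omega_i|+\cdots+|\omega_n|)}$. The identity then follows from \eqref{eq:bracket-inv-mu2}.

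For \eqref{eq:invP-mu3-forms}, the left wedge is $\omega_1\cdots\omega_n\,\omega\,\omega'\,\omega''$. In the $i$-th term on the right, the wedge is $\omega_1\cdots\omega_{i-1}\,\omega''\,\omega_{i+1}\cdots\omega_n\,\omega\,\omega'\,\omega_i$. To pass from the left wedge to this one, first move $\omega''$ leftward past $\omega',\omega,\omega_n,\dots,\omega_i$ into position $i$. This costs $|\omega''|(|\omega_i|+\cdots+|\omega_n|+|\omega|+|\omega'|)$. Then move $\omega_i$, which now sits just after $\omega''$, rightward past $\omega_{i+1},\dots,\omega_n,\omega,\omega'$ to the end. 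This costs $|\omega_i|(|\omega_{i+1}|+\cdots+|\omega_n|+|\omega|+|\omega'|)$. These two contributions together match the stated exponent. Applying \eqref{eq:bracket-inv-mu3} then finishes the argument.

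No step is conceptually hard. The main obstacle is the sign bookkeeping in \eqref{eq:invP-mu3-forms}. The order of the two transpositions matters: moving $\omega''$ first means it does not pass $\omega_i$ twice. I would verify this sign by checking a small case, say $n=1$, against \eqref{eq:invpair-mu3-forms} from Proposition~\ref{lem:cyclic-pairing-forms}.
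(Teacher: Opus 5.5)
Your proposal is correct and follows the paper's proof exactly: apply the algebraic relations \eqref{q41} to the $\mathfrak L$-components, then reorder the wedge factors. Your explicit sign bookkeeping, which the paper leaves implicit, checks out in all three cases. One minor point: in the $\mu_3$ case the order of the two moves does not actually matter, since the Koszul sign of a reordering is independent of how it is decomposed into transpositions. Your chosen order simply reproduces the displayed exponent term by term.
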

\begin{proof}
	Each identity follows by applying the corresponding cyclicity condition \eqref{q41}  to the $\mathfrak L$-components and then reordering the differential-form factors into the prescribed order. The resulting Koszul signs are precisely those displayed in \eqref{eq:invP}. 
\end{proof}

\subsection{Higher Chern--Simons forms in $2n+2$ dimensions}\label{subsec4.2}
We now generalize the four-dimensional HCS form of
Section~\ref{subsubsec:4dHCS} to arbitrary even dimension $2n+2$. Our construction is motivated by the Chern--Simons--Antoniadis--Savvidy
forms, namely, the extended CS forms  constructed from
non-abelian tensor gauge fields~\cite{FIP}. In the strict case $\mu_3=0$, the corresponding construction for Lie 2-algebra gauge
theory was obtained in~\cite{DHS-4}. The purpose of this section is to
extend that construction to semistrict $2$-term $L_\infty$-algebras.

Let $(A,B)$ be a $2$-connection with values in the  $2$-term $L_\infty$-algebra $\mathfrak L=\mathfrak L_{-1}\oplus\mathfrak L_0$. For a homogeneous differential form $U$ with values in $\mathfrak{L}_{-1}$ or $\mathfrak{L}_{0}$, we define a covariant derivative by
\begin{equation}\label{eq:covD}
	DU \coloneqq dU+\mu_2(A,U).
\end{equation}
By the cyclicity condition satisfied by the operation $\langle -,\dots,-;\,-\rangle$ on $\mathfrak{L}$, we have the following result.
\begin{lemma}\label{thm:d-pairing}
	Let $W_i\in\Omega^{k_i}(M)\otimes\mathfrak L_0$ for $i=1, \cdots, n$ and
	$V\in\Omega^{t}(M)\otimes\mathfrak L_{-1}$. Then 
	\begin{equation}\label{eq:d-pairing}
		\begin{aligned}
			d\langle W_1, \cdots, W_n;\,V\rangle
			&=\sum_{i=1}^{n}(-1)^{k_1+\cdots+k_{i-1}}
			\langle W_1,\cdots, DW_i,\cdots, W_n;\,V\rangle\\
			&\quad+(-1)^{k_1+\cdots+k_n}\langle W_1,\cdots, W_n;\,DV\rangle .
		\end{aligned}
	\end{equation}
\end{lemma}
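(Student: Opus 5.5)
The plan is to split $d$ into the ordinary Leibniz part and to show that the $A$-dependent terms, which are absent on the left-hand side, cancel by the invariance identity \eqref{eq:invP-mu2-forms}. By multilinearity it suffices to take homogeneous elements $W_i=\omega_i\otimes X_i$ with $|\omega_i|=k_i$ and $V=\eta\otimes Y$ with $|\eta|=t$. By Definition~\ref{def:induced-bracket-forms} we have $\langle W_1,\dots,W_n;V\rangle=\omega_1\wedge\cdots\wedge\omega_n\wedge\eta\,\langle X_1,\dots,X_n;Y\rangle$. The graded Leibniz rule for $d$ on the wedge product then gives
\begin{equation*}
d\langle W_1,\dots,W_n;V\rangle=\sum_{i=1}^n(-1)^{k_1+\cdots+k_{i-1}}\langle W_1,\dots,dW_i,\dots,W_n;V\rangle+(-1)^{k_1+\cdots+k_n}\langle W_1,\dots,W_n;dV\rangle,
\end{equation*}
where $d(\omega\otimes l)=d\omega\otimes l$. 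Comparing with \eqref{eq:d-pairing} and using $D=d+\mu_2(A,-)$ from \eqref{eq:covD}, the lemma reduces to the identity
\begin{equation*}
\sum_{i=1}^{n}(-1)^{k_1+\cdots+k_{i-1}}\langle W_1,\dots,\mu_2(A,W_i),\dots,W_n;V\rangle+(-1)^{k_1+\cdots+k_n}\langle W_1,\dots,W_n;\mu_2(A,V)\rangle=0 .
\end{equation*}

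To prove this, expand $A=\sum_a \alpha_a\otimes T_a$ with one-forms $\alpha_a$ and $T_a\in\mathfrak L_0$. It is enough to treat a single term $\alpha\otimes T$. Apply \eqref{eq:invP-mu2-forms} with $\omega=\alpha$ (so $|\omega|=1$) and $X=T$. This rewrites the last term as
\begin{equation*}
-(-1)^{k_1+\cdots+k_n}\sum_{i}(-1)^{k_i+\cdots+k_n}\langle\dots,\mu_2(\alpha\otimes T,W_i),\dots;V\rangle .
\end{equation*}
The total sign on the $i$-th summand is then $(-1)^{1+k_1+\cdots+k_{i-1}}$, which exactly cancels the corresponding summand of the first sum. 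Equivalently, one can argue directly at the level of the underlying algebra. Moving the one-form $\alpha$ from its slot to the front of the wedge product costs $(-1)^{k_1+\cdots+k_{i-1}}$ in the $i$-th term and $(-1)^{k_1+\cdots+k_n}$ in the last term. This leaves $\alpha\wedge\omega_1\wedge\cdots\wedge\eta$ multiplied by
\begin{equation*}
\sum_i\langle X_1,\dots,\mu_2(T,X_i),\dots,X_n;Y\rangle+\langle X_1,\dots,X_n;\mu_2(T,Y)\rangle,
\end{equation*}
and this vanishes by \eqref{eq:bracket-inv-mu2}. I would present this second version, since it avoids relying on the printed sign conventions of Proposition~\ref{lem:invpoly-forms}.

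The main obstacle is the sign bookkeeping. One must check that the sign $(-1)^{k_1+\cdots+k_{i-1}}$ coming from the Leibniz rule matches the Koszul sign picked up by commuting the one-form component of $A$ past $\omega_1,\dots,\omega_{i-1}$ under the convention \eqref{eq:mu-extension-convention-forms}. This convention places $\alpha$ first in $\mu_2(\alpha\otimes T,\omega_i\otimes X_i)=(\alpha\wedge\omega_i)\otimes\mu_2(T,X_i)$. Because $\alpha$ has odd degree, both signs equal $(-1)^{k_1+\cdots+k_{i-1}}$, and the same holds in the last slot with all $k_j$ summed. Since no $\mu_1$ or $\mu_3$ appears in $D$, only condition \eqref{eq:bracket-inv-mu2} is used.
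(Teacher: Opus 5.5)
Your proof is correct and follows the paper's own argument. The $d$-parts of $D$ give the graded Leibniz rule, and the leftover $\mu_2(A,\cdot)$ terms cancel by the invariance identity \eqref{eq:invP-mu2-forms}; your explicit sign check, reducing directly to \eqref{eq:bracket-inv-mu2}, spells out the bookkeeping the paper leaves implicit.
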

\begin{proof}
	Substituting \eqref{eq:covD} into the right-hand side of
	\eqref{eq:d-pairing}, the terms containing $dW_i$ and $dV$ reproduce
	the graded Leibniz rule for the exterior derivative. The remaining
	terms are
	\begin{equation*}
		\sum_{i=1}^{n}(-1)^{k_1+\cdots+k_{i-1}}
		\langle W_1,\dots, \mu_2(A,W_i),\dots, W_n;\,V\rangle
		+(-1)^{k_1+\cdots+k_n}\langle W_1,\dots, W_n;\,\mu_2(A,V)\rangle.
	\end{equation*}
	They vanish by the graded cyclicity identity
	\eqref{eq:invP-mu2-forms}.
\end{proof}

Then we denote the corresponding $2$-curvature $(F,H)$ by
\begin{align*}
	F= DA-\frac12\,\mu_2(A,A)+\mu_1(B),\qquad
	H= DB-\frac16\,\mu_3(A,A,A),
\end{align*}
and the resulting $2$-Bianchi identities by
\begin{align}\label{2-bianchi}
	DF=\mu_1(H),\qquad
	DH=\mu_2(F,B)-\frac12\,\mu_3(A,A,F).
\end{align}
For infinitesimal gauge parameters $a\in\Omega^0(M)\otimes\mathfrak L_0$ and $b\in\Omega^1(M)\otimes\mathfrak L_{-1}$, the gauge transformations can be written as
\begin{align}\label{gauge-tra-comp}
	\delta_{a,b}A
	=Da-\mu_1(b),\qquad
	\delta_{a,b}B
	=Db+\mu_2(B,a)+\frac12\,\mu_3(A,A,a).
\end{align}

We now turn to the higher analogue of the Pontryagin--Chern form, which is the main subject in this subsection.
Extended characteristic forms in non-abelian tensor gauge theory were constructed in~\cite{GS,IAGS,GGGS}; moreover, gauge-invariant, closed, and metric-independent 
$(2n+3)$-forms were exhibited in~\cite{FIP}, and the strict Lie $2$-algebra counterpart was given in~\cite{DHS-4}. Motivated by these results, we define the  \emph{higher Pontryagin--Chern form} associated with a cyclic 
$(n+1)$-linear form by
\begin{equation}\label{eq:higher-PC}
	\Gamma_{2n+3}(F,H)
	\coloneqq \bigl\langle \underbrace{F,\ldots,F}_{n\ \text{copies}};\,H\bigr\rangle \;=\; \langle F^n;H\rangle.
\end{equation}
Since $F$ and $H$ have form of degrees 2 and 3, respectively, $\Gamma_{2n+3}(F,H)$ is a differential form of degree $2n+3$. 

By Lemma \ref{thm:d-pairing}, we have the following result.
\begin{proposition}\label{prop:PC-closed}
	For every $n\ge 1$, the higher Pontryagin--Chern form $\Gamma_{2n+3}(F,H)$ defined in \eqref{eq:higher-PC}
	is closed, namely,
	\begin{equation*}
		d\Gamma_{2n+3}(F,H)=0.
	\end{equation*}
\end{proposition}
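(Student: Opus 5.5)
Apply Lemma~\ref{thm:d-pairing} with $W_1=\cdots=W_n=F$ (so $k_i=2$ and all signs $(-1)^{k_1+\cdots+k_{i-1}}$ equal $+1$) and $V=H$ (with $(-1)^{2n}=1$). This gives
\begin{equation*}
d\Gamma_{2n+3}=\sum_{i=1}^n\langle F,\dots,DF,\dots,F;H\rangle+\langle F^n;DH\rangle .
\end{equation*}
By Proposition~\ref{lem:invpoly-sym-forms}, each term of the sum can be moved so that $DF$ sits in the last $\mathfrak L_0$-slot. The sign is trivial because $F$ is an even form. Substituting the 2-Bianchi identities \eqref{2-bianchi}, namely $DF=\mu_1(H)$ and $DH=\mu_2(F,B)-\frac12\mu_3(A,A,F)$, the claim reduces to
\begin{equation*}
n\,\langle F^{n-1},\mu_1(H);H\rangle+\langle F^n;\mu_2(F,B)\rangle-\tfrac12\langle F^n;\mu_3(A,A,F)\rangle=0.
\end{equation*}

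\emph{The $\mu_1$ term.} Apply \eqref{eq:invP-mu1-forms} with $\eta=\eta'$ the $3$-form part of $H$. The sign exponent is $|\eta|\cdot 0+|\eta'|\,|\eta|=9$, so the term equals minus itself and vanishes. This is the degree-$3$ antisymmetry mechanism. To be careful, I will expand $H=\sum\eta^a\otimes Y_a$ and pair the $(a,b)$ and $(b,a)$ summands: the condition \eqref{eq:bracket-inv-mu1} symmetrises in $(a,b)$, while $\eta^a\wedge\eta^b$ is antisymmetric.

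\emph{The $\mu_2$ term.} Since $F$ is even, $\mu_2(F,B)=-\mu_2(B,F)$ by Proposition~\ref{lem:graded-antisym-forms}. I will then use \eqref{eq:invP-mu2-forms} with the first argument $\omega\otimes X$ taken as a component of $F$ and $\eta\otimes Y=B$. This rewrites $\langle F^n;\mu_2(F,B)\rangle$ as $-\sum_i\langle F,\dots,\mu_2(F,F),\dots,F;B\rangle$, with signs trivial because all the relevant forms are even. But $\mu_2(F,F)=0$: the extended bracket on two copies of the same even-degree $\mathfrak L_0$-valued form is antisymmetric by Proposition~\ref{lem:graded-antisym-forms}. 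So this term vanishes.

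\emph{The $\mu_3$ term.} This is the main obstacle. I will apply \eqref{eq:invP-mu3-forms} with $\omega\otimes X,\ \omega'\otimes X'$ coming from $A$ and $\omega''\otimes X''$ coming from $F$. Each summand has its $i$-th slot replaced by $F$ and the $\mu_3$ entry replaced by $\mu_3(A,A,F)$, so the expression is the same as the original. Every form involved other than $A$ is even, and $|\omega|+|\omega'|=2$, so the sign exponents are even and all signs are $+1$. Hence $X:=\langle F^n;\mu_3(A,A,F)\rangle$ satisfies $X=-nX$, which forces $X=0$.

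The delicate point is that $A$ appears twice inside $\mu_3$, so one cannot directly decompose into components. I will check the identity componentwise, writing $A=\sum \alpha^a\otimes T_a$ and $F=\sum \phi^b\otimes T_b$. The coefficient of the $(2n+3)$-form is then a scalar expression symmetric in the $F$-indices and contracted with $\alpha^a\wedge\alpha^c$. Applying \eqref{eq:bracket-inv-mu3} pointwise for fixed $X,X'$ yields $(1+n)X=0$ term by term. Combining the three vanishing statements gives $d\Gamma_{2n+3}(F,H)=0$.
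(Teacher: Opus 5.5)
Your proof is correct and follows essentially the same route as the paper. Lemma~\ref{thm:d-pairing} and the 2-Bianchi identities \eqref{2-bianchi} reduce $d\Gamma_{2n+3}(F,H)$ to the three terms $n\langle \mu_1(H),F^{n-1};H\rangle$, $\langle F^n;\mu_2(F,B)\rangle$ and $-\frac12\langle F^n;\mu_3(A,A,F)\rangle$. These vanish respectively by $\mu_1$-cyclicity together with the odd form degree of $H$, by $\mu_2$-cyclicity together with $\mu_2(F,F)=0$, and by $\mu_3$-cyclicity. Your componentwise sign checks and the $X=-nX$ argument make explicit what the paper leaves implicit. The intermediate rewriting $\mu_2(F,B)=-\mu_2(B,F)$ is unnecessary, since \eqref{eq:invP-mu2-forms} applies to $\mu_2(F,B)$ directly.
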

\begin{proof}
	By Lemma~\ref{thm:d-pairing},
	\begin{equation*}
		d\langle F^n;H\rangle
		=
		n\langle DF,F^{n-1};H\rangle
		+
		\langle F^n;DH\rangle.
	\end{equation*}
	Using \eqref{2-bianchi}, this becomes
	\begin{align*}
		d\langle F^n;H\rangle
		={}&
		n\langle\mu_1(H),F^{n-1};H\rangle
		+
		\langle F^n;\mu_2(F,B)\rangle-
		\frac{1}{2}
		\langle F^n;\mu_3(A,A,F)\rangle.
	\end{align*}
	The first term vanishes by the graded $\mu_1$-cyclicity relation, the
	second by $\mu_2$-cyclicity together with $\mu_2(F, F)=0$, and the
	third by the $\mu_3$-cyclicity relation. Hence,
	$d\,\Gamma_{2n+3}(F,H)=0$.
\end{proof}

Analogously to the ordinary and extended Pontryagin--Chern forms, the higher analogue  $\Gamma_{2n+3}(F,H)$ enjoys the following gauge invariance property.
\begin{proposition}\label{prop:PC-invariant}
	For every $n\ge 1$,  the form $\Gamma_{2n+3}(F,H)$ is invariant under the infinitesimal gauge
	transformations \eqref{gauge-tra-comp},  i.e.,
	\begin{equation*}
		\delta_{a,b}\Gamma_{2n+3}(F,H)=0.
	\end{equation*}
\end{proposition}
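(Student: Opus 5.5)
We will compute $\delta_{a,b}\Gamma_{2n+3}$ directly from the transformation laws of the $2$-curvature, \eqref{gauge-tra-cur}, namely $\delta_{a,b}F=\mu_2(F,a)$ and $\delta_{a,b}H=\mu_2(H,a)+\mu_2(F,b)+\mu_3(A,F,a)$. Since $F$ has even form degree, the graded symmetry of Proposition~\ref{lem:invpoly-sym-forms} lets us move the varied factor to the front without signs. By multilinearity of $\langle-,\dots,-;-\rangle$ we then obtain
\begin{equation*}
	\delta_{a,b}\langle F^n;H\rangle
	= n\langle \mu_2(F,a),F^{n-1};H\rangle
	+\langle F^n;\mu_2(H,a)\rangle
	+\langle F^n;\mu_2(F,b)\rangle
	+\langle F^n;\mu_3(A,F,a)\rangle .
\end{equation*}
We will group these four terms by gauge parameter and show that each group vanishes separately.

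For the $a$-terms coming from $\mu_2$, we rewrite $\mu_2(H,a)=-\mu_2(a,H)$ and $\mu_2(F,a)=-\mu_2(a,F)$ using Proposition~\ref{lem:graded-antisym-forms}; since $a$ is a $0$-form, no form signs appear. Applying \eqref{eq:invP-mu2-forms} with $\omega\otimes X=a$ gives $\langle F^n;\mu_2(a,H)\rangle=-n\langle \mu_2(a,F),F^{n-1};H\rangle$. Hence the first two terms cancel; this is simply the statement that $\delta_a$ acts on $(F,H)$ as a derivation under which the form is invariant.

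The $b$-term $\langle F^n;\mu_2(F,b)\rangle$ is handled exactly as in the proof of Proposition~\ref{prop:PC-closed}. Writing $\mu_2(F,b)=-\mu_2(b,F)$ up to a sign, we apply \eqref{eq:invP-mu2-forms} with the $\mathfrak L_0$-entry $F$ (form degree $2$) acting on $b$. This produces terms $\langle \dots,\mu_2(F,F),\dots;b\rangle$, and each vanishes because $\mu_2(F,F)=-\mu_2(F,F)$ by Proposition~\ref{lem:graded-antisym-forms} ($F$ being an even form).

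For the $\mu_3$-term $\langle F^n;\mu_3(A,F,a)\rangle$, we first use cyclic symmetry of $\mu_3$ (Proposition~\ref{lem:graded-antisym-forms}) to place $F$ in the last slot, giving $\pm\langle F^n;\mu_3(A,a,F)\rangle$. We then apply \eqref{eq:invP-mu3-forms} with $\omega''\otimes X''=F$, which yields $-\sum_i\langle F,\dots,F,\dots,F;\mu_3(A,a,F)\rangle$ with the $i$-th slot replaced by $F$. Each summand reproduces the original term, so the identity reads $T=-nT$ up to the Koszul signs; if these signs are all $+1$, then $(n+1)T=0$ and $T=0$. This is the step we expect to be the main obstacle. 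The sign exponent in \eqref{eq:invP-mu3-forms} involves $|\omega''|=2$, $|\omega_i|=2$ and $|\omega|+|\omega'|=1$, and all of these contributions are even, so each sign is $+1$. Verifying this carefully, together with the sign produced by reordering $\mu_3(A,F,a)$ (where $A$ is odd and $F,a$ are even), is the one delicate point. The same mechanism already appears in Proposition~\ref{prop:PC-closed} for $\mu_3(A,A,F)$, and we would mirror that argument. Summing the three groups then gives $\delta_{a,b}\Gamma_{2n+3}(F,H)=0$.
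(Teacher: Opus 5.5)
Your proposal is correct and follows the paper's proof essentially step for step. It uses the same four-term expansion from \eqref{gauge-tra-cur}, cancels the two $a$-terms by \eqref{eq:invP-mu2-forms}, removes the $b$-term via $\mu_2(F,F)=0$, and removes the $\mu_3$-term by \eqref{eq:invP-mu3-forms}. Your $(n+1)T=0$ argument (all Koszul signs are $+1$ because $F$ has even degree) validly spells out the last step, which the paper only asserts. One small point: rewriting $\mu_2(F,b)=-\mu_2(b,F)$ is unnecessary, since $\mu_2(F,b)$ already has its $\mathfrak L_0$-entry first, as \eqref{eq:invP-mu2-forms} requires.
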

\begin{proof}
	Using \eqref{gauge-tra-cur}, we have
	\begin{equation*}
		\delta_{a,b}\langle F^n;H\rangle
		=n\bigl\langle \mu_2(F,a), F^{n-1};H\bigr\rangle
		+\bigl\langle F^n;\mu_2(H,a)\bigr\rangle
		+\bigl\langle F^n;\mu_2(F,b)\bigr\rangle
		+\bigl\langle F^n;\mu_3(A,F,a)\bigr\rangle.
	\end{equation*}
	The first two terms cancel by $\mu_2$-cyclicity \eqref{eq:invP-mu2-forms}. Moreover,
	\begin{equation*}
		\langle F^n;\mu_2(F,b)\rangle
		=
		-n\langle\mu_2(F,F),F^{n-1};b\rangle
		=
		0,
	\end{equation*}
	where the last equality follows from graded antisymmetry. Finally, the
	last term vanishes by the $\mu_3$-cyclicity identity \eqref{eq:invP-mu3-forms}. Thus we have
	$\delta_{a,b}\langle F^n;H\rangle=0$.
\end{proof}

Since $\Gamma_{2n+3}(F,H)$ is closed, the Poincar\'e lemma implies that it is locally exact. Thus, locally there exists a $(2n+2)$-form $\mathfrak C^{2n+2}(A,B)$ such that 
\begin{equation*}
	\Gamma_{2n+3}(F,H)=d\,\mathfrak{C}^{2n+2}(A,B).
\end{equation*}
We will refer to such a form $\mathfrak{C}^{2n+2}(A,B)$ as  a $(2n+2)$-dimensional HCS form associated with the 2-term $L_{\infty}$-algebra $\mathfrak{L}$. We now turn to give the explicit formula for $\mathfrak{C}^{2n+2}(A,B)$.
To this end, consider arbitrary variations $\delta A$ and $\delta B$, and the corresponding curvature variations are given by
\begin{align*}
	\delta F=D(\delta A)+\mu_1(\delta B),\qquad
	\delta H=D(\delta B)+\mu_2(\delta A,B)-\frac12\,\mu_3(\delta A,A,A).
\end{align*}
Then we have
\begin{align*}
	\delta \Gamma_{2n+3}(F,H)
	=n\langle D(\delta A)+\mu_1(\delta B),F^{n-1};H\rangle
	+\big\langle F^{n};D(\delta B)+\mu_2(\delta A,B)-\frac12\,\mu_3(\delta A,A,A)\big\rangle.
\end{align*}
Using Lemma~\ref{thm:d-pairing}, the Bianchi identities \eqref{2-bianchi}, and the cyclicity relations, we obtain
\begin{align*}
	\langle D(\delta A), F^{n-1}; H\rangle
	&=d\langle \delta A, F^{n-1}; H\rangle
	+\big\langle \delta A, F^{n-1}; \mu_2(F,B)-\frac{1}{2}\mu_3(F,A,A)\big\rangle, \\
	\langle F^n; D(\delta B)\rangle
	&=d\langle F^n; \delta B\rangle -n \langle \mu_1(H), F^{n-1};\delta B\rangle.
\end{align*}
Hence, we get the infinitesimal transgression formula
\begin{align}\label{delP}
	\delta \Gamma_{2n+3}(F,H)
	=d\big(n\langle \delta A, F^{n-1};H\rangle+\langle F^{n};\delta B\rangle\big).
\end{align}

Following \cite{BZ}, we introduce a one-parameter family of higher potentials and curvatures parametrized by $t\in[0,1]$,
\begin{align*}
	A_t&\coloneqq tA,
	\qquad
	F_t\coloneqq tF+\frac{t^2-t}{2}\,\mu_2(A,A),\\
	B_t&\coloneqq tB,
	\qquad
	H_t\coloneqq tH+(t^2-t)\,\mu_2(A,B)+\frac{t-t^3}{6}\,\mu_3(A,A,A).
\end{align*}
Then, the infinitesimal variation $\delta=\delta t\,(\partial/\partial t)$  satisfies $\delta A_t=\delta t\,A$ and
$\delta B_t=\delta t\,B$. Substituting them into~\eqref{delP} and integrating along the path $t\in[0,1]$ yields 
\begin{equation*}
	\Gamma_{2n+3}(F,H)=\langle F^{n};H\rangle=d\mathfrak{C}^{2n+2}(A,B),
\end{equation*}
where 
\begin{equation}\label{2CS}
	\mathfrak{C}^{2n+2}(A,B)
	=\int_{0}^{1}\!dt\,
	\Big(\, n\,\langle A, F_t^{\,n-1};H_t\rangle+\langle F_t^{\,n};B\rangle \Big).
\end{equation}
When $\mu_3=0$, \eqref{2CS} reduces to the HCS form of the strict Lie 2-algebra theory
constructed in~\cite{DHS-4}.

For $n=1$, \eqref{2CS} reduces to
\begin{align*} 
	\mathfrak C^4(A,B) =& \int_0^1dt\, \Big( \langle A;H_t\rangle+\langle F_t;B\rangle \Big) \nonumber\\ 
	=& \langle A; \frac{1}{2}dB+\frac{1}{3}\mu_2(A,B) -\frac{1}{24}\mu_3(A,A,A) \rangle + \langle \frac{1}{2}dA+\frac{1}{6}\mu_2(A,A) +\frac{1}{2}\mu_1(B); B \rangle.
\end{align*} 
Using cyclicity and the graded Leibniz rule, we see that $\mathfrak C^4(A,B)$ differs from the four-dimensional $2$-Chern--Simons form \eqref{2CS-4d} by an exact form. In particular, on a closed four-manifold, we have
\begin{equation*}
	\int_M\mathfrak C^4(A,B) = \int_M\Big( \langle F-\frac{1}{2}\mu_1(B);B \rangle -\frac{1}{24} \langle A;\mu_3(A,A,A)\rangle \Big),
\end{equation*}
which is precisely the four-dimensional $2$-Chern--Simons action associated with a $2$-term $L_\infty$-algebra \cite{GR,OZ,Zucchini-2014-1}.

\subsection{Higher transgression forms in $2n+2$ dimensions}
In this section, we present the transgression construction associated with the higher Pontryagin--Chern form for $2$-term $L_\infty$-algebras.  We first generalize the Chern--Weiltheorem and construct a gauge-invariant $(2n+2)$-dimensional transgression form. We then examine its relation to the HCS forms constructed in the previous subsection \ref{subsec4.2}. Finally, we turn to a discussion of transgression forms as Lagrangians for gauge field theories.

\subsubsection{Higher Chern--Weil theorem}\label{HCW}
Let $(A_0,B_0)$ and $(A_1,B_1)$ be two $2$-connections with curvatures $(F_0,H_0)$ and $(F_1,H_1)$, respectively. Thus, for $i=0,1$, we have
\begin{align*}
	F_i=dA_i+\frac{1}{2}\mu_2(A_i,A_i)+\mu_1(B_i),\qquad
	H_i=dB_i+\mu_2(A_i,B_i)-\frac{1}{6}\mu_3(A_i,A_i,A_i).
\end{align*}
Set
$\Theta=A_1-A_0$ and $\Phi=B_1-B_0$, and consider the family of interpolating fields
\begin{equation}\label{inter-field}
	A_t\coloneqq A_0+t\Theta,\qquad B_t\coloneqq B_0+t\Phi,
\end{equation}
with $t\in[0,1]$.  Their corresponding curvatures then take the form
\begin{align*}
	F_t
	&=F_0+t\big(d\Theta+\mu_2(A_0,\Theta)+\mu_1(\Phi)\big)
	+\frac{t^2}{2}\mu_2(\Theta,\Theta),\\[0.2em]
	H_t
	&=H_0+t\big(d\Phi+\mu_2(A_0,\Phi)+\mu_2(\Theta,B_0)
	-\frac{1}{2}\mu_3(A_0,A_0,\Theta)\big)
	+t^2\big(\mu_2(\Theta,\Phi)-\frac{1}{2}\mu_3(A_0,\Theta,\Theta)\big)\nonumber\\
	&\quad
	-\frac{t^3}{6}\mu_3(\Theta,\Theta,\Theta).
\end{align*}

Let $D_t$ denote the covariant derivative associated with $A_t$,  defined by
\begin{equation*} 
	D_tU\coloneqq dU+\mu_2(A_t,U)
\end{equation*} 
for $U\in \Omega^{\bullet}(M, \mathfrak{L}_{-1} \ \text{or} \ \mathfrak{L}_0)$.
Then a direct computation then yields 
\begin{subequations}\label{q71}
	\begin{align*}
		\frac{\partial F_t}{\partial t}=D_t\Theta+\mu_1(\Phi),\qquad
		\frac{\partial H_t}{\partial t}=D_t\Phi+\mu_2(\Theta,B_t)
		-\frac{1}{2}\mu_3(A_t,A_t,\Theta).
	\end{align*}
\end{subequations}

Motivated by the classical and extended Chern--Weil theorems \cite{FIERPS,Salgado}, we establish a higher Chern--Weil transgression formula for $2$-term $L_\infty$-algebras. This formula shows that the difference of the $(2n+3)$-dimensional characteristic forms is the exterior derivative of a $(2n+2)$-dimensional higher transgression form.
\begin{theorem}\label{HCW-t}
	Let 
	$\Gamma_{2n+3}(F,H)$
	be the higher Pontryagin--Chern form defined by
	\eqref{eq:higher-PC}. Then
	\begin{equation}\label{eq:higher-Chern-Weil}
		\Gamma_{2n+3}(F_1,H_1)-\Gamma_{2n+3}(F_0,H_0)
		=
		d\,T^{2n+2}(A_0,B_0;A_1,B_1),
	\end{equation}
	where
	\begin{equation}\label{eq:higher-transgression}
		T^{2n+2}(A_0,B_0;A_1,B_1)
		\coloneqq 
		\int_0^1dt\,
		\Bigl(
		n\langle\Theta,F_t^{n-1};H_t\rangle
		+
		\langle F_t^n;\Phi\rangle
		\Bigr)
	\end{equation}
	is called a \emph{higher transgression form} associated with $(A_0,B_0)$ and $(A_1,B_1)$.
\end{theorem}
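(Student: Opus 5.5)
The plan is to run the infinitesimal transgression formula \eqref{delP} along the straight-line path \eqref{inter-field} and integrate in $t$. The interpolating pair $(A_t,B_t)$ is a genuine $2$-connection for every $t\in[0,1]$. Its $2$-curvature $(F_t,H_t)$ is given by the defining formulas with $A_t,B_t$ in place of $A,B$. Hence Lemma~\ref{thm:d-pairing}, the $2$-Bianchi identities \eqref{2-bianchi} (with $D$ replaced by $D_t$) and the cyclicity relations \eqref{eq:invP} all hold at each fixed $t$.

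First, by the fundamental theorem of calculus,
\begin{equation*}
\Gamma_{2n+3}(F_1,H_1)-\Gamma_{2n+3}(F_0,H_0)=\int_0^1 dt\,\frac{\partial}{\partial t}\langle F_t^n;H_t\rangle .
\end{equation*}
Using the graded symmetry of Proposition~\ref{lem:invpoly-sym-forms} (since $F_t$ has even degree, no signs appear when collecting the $n$ derivative terms), we get
\begin{equation*}
\frac{\partial}{\partial t}\langle F_t^n;H_t\rangle=n\Bigl\langle \frac{\partial F_t}{\partial t},F_t^{n-1};H_t\Bigr\rangle+\Bigl\langle F_t^n;\frac{\partial H_t}{\partial t}\Bigr\rangle .
\end{equation*}
Into this we substitute the derivatives computed just before the theorem, namely $\partial_tF_t=D_t\Theta+\mu_1(\Phi)$ and $\partial_tH_t=D_t\Phi+\mu_2(\Theta,B_t)-\tfrac12\mu_3(A_t,A_t,\Theta)$.

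These are exactly the variations $\delta F,\delta H$ of \eqref{delP} with $\delta A=\Theta$, $\delta B=\Phi$, evaluated at the connection $(A_t,B_t)$. The key step is therefore to repeat the argument leading to \eqref{delP} pointwise in $t$:
\begin{align*}
n\langle D_t\Theta,F_t^{n-1};H_t\rangle&=d\bigl(n\langle\Theta,F_t^{n-1};H_t\rangle\bigr)+n\bigl\langle\Theta,F_t^{n-1};\mu_2(F_t,B_t)-\tfrac12\mu_3(A_t,A_t,F_t)\bigr\rangle+(\text{terms with }D_tF_t),\\
\langle F_t^n;D_t\Phi\rangle&=d\langle F_t^n;\Phi\rangle-n\langle \mu_1(H_t),F_t^{n-1};\Phi\rangle .
\end{align*}
Each line follows from Lemma~\ref{thm:d-pairing} and the Bianchi identities. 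The $D_tF_t=\mu_1(H_t)$ terms from differentiating the $F_t^{n-1}$ slot in the first line either vanish or are absorbed, as in the paper's derivation of \eqref{delP}. The leftover non-exact terms must then cancel pairwise against the remaining pieces, as follows.

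\begin{itemize}
\item $n\langle\mu_1(\Phi),F_t^{n-1};H_t\rangle$ cancels $-n\langle\mu_1(H_t),F_t^{n-1};\Phi\rangle$ by \eqref{eq:invP-mu1-forms}.
\item $\langle F_t^n;\mu_2(\Theta,B_t)\rangle$ cancels $n\langle\Theta,F_t^{n-1};\mu_2(F_t,B_t)\rangle$ by \eqref{eq:invP-mu2-forms} and antisymmetry.
\item $-\tfrac12\langle F_t^n;\mu_3(A_t,A_t,\Theta)\rangle$ cancels the $\mu_3(A_t,A_t,F_t)$ term by \eqref{eq:invP-mu3-forms}; the other summands there have $\mu_3(A_t,A_t,F_t)$ structure and are matched using Proposition~\ref{lem:graded-antisym-forms}.
\end{itemize}

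Once these cancellations are in place, we have
\begin{equation*}
\partial_t\langle F_t^n;H_t\rangle=d\bigl(n\langle\Theta,F_t^{n-1};H_t\rangle+\langle F_t^n;\Phi\rangle\bigr).
\end{equation*}
Since $d$ commutes with $\int_0^1dt$, integrating over $t$ gives \eqref{eq:higher-Chern-Weil} with $T^{2n+2}$ as in \eqref{eq:higher-transgression}.

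The main obstacle is the sign bookkeeping in these cancellations, especially the $\mu_3$ term. The identity \eqref{eq:invP-mu3-forms} carries Koszul signs involving the degrees $1$ of $\Theta,A_t$, $2$ of $F_t$ and $3$ of $H_t$. I would first fix the parities of every entry and specialize \eqref{eq:invP} to these degrees, then check each cancellation separately. A useful consistency check: the case $\Theta=A$, $\Phi=B$, $A_0=B_0=0$ must reproduce \eqref{2CS}, and the case $\mu_3=0$ must reproduce the strict result of \cite{DHS-4}.
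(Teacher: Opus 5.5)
Your proposal is correct and follows essentially the same route as the paper. You differentiate $\langle F_t^n;H_t\rangle$ in $t$, rewrite the $D_t\Theta$ and $D_t\Phi$ terms via Lemma~\ref{thm:d-pairing} and the $2$-Bianchi identities at $(A_t,B_t)$, and cancel the leftover $\mu_1$, $\mu_2$, $\mu_3$ terms by cyclicity; this amounts to \eqref{delP} with $\delta A=\Theta$, $\delta B=\Phi$. The one hedge to remove is that $(n-1)\langle\Theta,\mu_1(H_t),F_t^{n-2};H_t\rangle$ is not ``absorbed'': it vanishes outright by \eqref{eq:invP-mu1-forms} and the odd degree of $H_t$.
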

\begin{proof}
	First by the fact that $(F_t, H_t)$ interpolates between $(F_0, H_0)$ and $(F_1, H_1)$, we have
	\begin{equation*}
		\Gamma_{2n+3}(F_1,H_1)-\Gamma_{2n+3}(F_0,H_0)
		=\langle F_1^{\,n};H_1\rangle-\langle F_0^{\,n};H_0\rangle
		=\int_0^1\!dt\,\frac{\partial}{\partial t}\,\langle F_t^{\,n};H_t\rangle .
	\end{equation*}
	By \eqref{q71}, we obtain
	\begin{align}
		\frac{\partial}{\partial t}\,\langle F_t^{\,n};H_t\rangle
		&=n\big\langle F_t^{\,n-1},\frac{\partial F_t}{\partial t};H_t\big\rangle
		+\big\langle F_t^{\,n};\frac{\partial H_t}{\partial t}\big\rangle\nonumber\\
		&=n\langle F_t^{\,n-1},D_t\Theta+\mu_1(\Phi);H_t\rangle
		+\langle F_t^{\,n};D_t\Phi+\mu_2(\Theta,B_t)-\tfrac12\mu_3(A_t,A_t,\Theta)\rangle.
		\label{eq:HCW-proof-2}
	\end{align}
	
	It follows from \eqref{eq:d-pairing} that
	\begin{equation}\label{ewrq}
		d\langle\Theta,F_t^{n-1};H_t\rangle
		=
		\langle D_t\Theta,F_t^{n-1};H_t\rangle
		-(n-1)
		\langle\Theta,D_tF_t,F_t^{n-2};H_t\rangle
		-\langle\Theta,F_t^{n-1};D_tH_t\rangle.
	\end{equation}
	Using the 2-Bianchi identities
	\begin{equation*}
		D_t F_t=\mu_1(H_t),\qquad D_tH_t=\mu_2(F_t, B_t)-\dfrac{1}{2}\mu_3(A_t, A_t, F_t),
	\end{equation*}
	we find that the term involving $\mu_1(H_t)$ in \eqref{ewrq} vanishes by $\mu_1$-cyclicity and the odd form degree of $H_t$. Hence
	\begin{align*}
		d\langle\Theta,F_t^{n-1};H_t\rangle
		=\langle F_t^{n-1},D_t\Theta;H_t\rangle-
		\langle\Theta,F_t^{n-1};\mu_2(F_t,B_t)\rangle+
		\frac{1}{2}
		\langle\Theta,F_t^{n-1};
		\mu_3(F_t,A_t,A_t)\rangle.
	\end{align*}
	The cyclicity identities \eqref{eq:invP-mu2-forms} and \eqref{eq:invP-mu3-forms} imply
	\begin{align*}
		n\langle \Theta, F_t^{\,n-1};\mu_2(F_t,B_t)\rangle
		&=-\langle F^n_t;\mu_2(\Theta, B_t)\rangle, \\
		n\,\langle \Theta, F_t^{\,n-1};\mu_3(F_t,A_t,A_t)\rangle
		&=-\langle F_t^{\,n};\mu_3(A_t,A_t,\Theta)\rangle. 
	\end{align*}
	Consequently,
	\begin{equation}\label{eq:HCW-proof-6}
		n\langle F_t^{\,n-1},D_t\Theta;H_t\rangle
		=
		n\,d\langle \Theta, F_t^{\,n-1};H_t\rangle
		-n\langle F^n_t;\mu_2(\Theta, B_t)\rangle
		+\frac12\langle F_t^{\,n};\mu_3(A_t,A_t,\Theta)\rangle .
	\end{equation}
	
	Similarly, applying the invariance property once more to the second term of \eqref{eq:HCW-proof-2} gives
	\begin{equation}\label{eq:HCW-proof-7}
		\langle F_t^{\,n};D_t\Phi\rangle
		=d\langle F_t^{\,n};\Phi\rangle
		-n\langle F_t^{\,n-1},\mu_1(\Phi);H_t\rangle.
	\end{equation}
	
	Finally, inserting \eqref{eq:HCW-proof-6} and \eqref{eq:HCW-proof-7} into \eqref{eq:HCW-proof-2}, all non-exact terms cancel, and yields
	\begin{equation*}
		\frac{\partial}{\partial t}\,\langle F_t^{\,n};H_t\rangle
		=d\Big(n\langle \Theta, F_t^{\,n-1};H_t\rangle+\langle F_t^{\,n};\Phi\rangle\Big).
	\end{equation*}
	Thus, integration over $t\in [0, 1]$ proves \eqref{eq:higher-Chern-Weil}.
\end{proof}

The HCS form  \eqref{2CS} can be recovered by taking
$(A_0,B_0)=(0,0)$ and $(A_1,B_1)=(A,B)$.
In this case, 
$\Theta=A$ and $ \Phi =B$,
and the higher transgression form \eqref{eq:higher-transgression} reduces to
\begin{equation}\label{T-HCS}
	T^{2n+2}(0,0;A,B)
	=
	\int_0^1dt\,
	\Bigl(
	n\langle A,F_t^{n-1};H_t\rangle
	+
	\langle F_t^n;B\rangle
	\Bigr)
	=
	\mathfrak C^{2n+2}(A,B).
\end{equation}
Thus, the $(2n+2)$-dimensional  HCS form  is a
transgression form associated with the zero reference 2-connection.
For this specific case, the Chern--Weiltheorem implies
\begin{equation*}
	d \mathfrak C^{2n+2}(A,B)=\Gamma_{2n+3}(F,H).
\end{equation*}
Since $\Gamma_{2n+3}(F,H)$ is gauge invariant, it follows that 
$d \delta \mathfrak C^{2n+2}(A,B)=0$. This means that the gauge variation $\delta\mathfrak C^{2n+2}(A,B)$ is locally exact: it can be written as $d\Omega^{2n+1}$ for some $(2n+1)$-form $\Omega^{2n+1}$. However, a 2-connection cannot be set to zero globally unless the underlying higher bundle is topologically trivial \cite{RZ-AKSZ}. Consequently, the HCS form is only locally defined. By contrast, the higher transgression form is globally well-defined in principle. As we will see, it is also invariant under gauge transformations. These features closely parallel the familiar properties of ordinary CS theory.


Finally, we show that when $\mu_3=0$, the higher transgression form \eqref{eq:higher-transgression} reduces precisely to the strict 2-transgression form constructed in~\cite{DHS-4}. Moreover, its local expression possesses the same formal structure as the Antoniadis--Savvidy transgression form given in~\cite{FIP}. The present construction therefore provides a semistrict generalization of these earlier results to cyclic 2-term $L_\infty$-algebras.

\subsubsection{Higher transgression  as a lagrangian}
Building on the transgression gauge theories introduced in~\cite{FIERPS},
we now construct a higher gauge theory whose lagrangian is given by a higher
transgression form. While the special case of strict Lie $2$-algebras was studied in~\cite{DHS-4}, the present work extends the construction to the more general setting of semistrict $2$-term $L_\infty$-algebras.

Let $M$ be an oriented manifold of dimension $2n+2$, possibly with
boundary. Given two endpoint 2-connections $(A_0, B_0)$ and $(A_1, B_1)$, we define the higher transgression action by
\begin{align}\label{stre-ac}
	S^{2n+2}_{\mathrm T}(A_0,B_0;A_1,B_1)
	\coloneqq \int_M T^{2n+2}(A_0,B_0;A_1,B_1) = \int_M \int_0^1\!dt\,
	\Big( n\,\langle \Theta, F_t^{\,n-1};H_t\rangle+\langle F_t^{\,n};\Phi\rangle\Big).
\end{align}
The two endpoint 2-connections enter the action symmetrically.
More precisely, reversing the interpolation parameter $t \longrightarrow 1-t$ yields
\begin{equation*}
	S^{2n+2}_{\text{T}}(A_0,B_0;A_1,B_1)=-	S^{2n+2}_{\text{T}}(A_1,B_1;A_0,B_0).
\end{equation*}

We now derive the field equations by varying the transgression Lagrangian. The computation is lengthy but straightforward; we therefore highlight only the main steps. We first treat the endpoint 2-connections as independent dynamical fields and consider the infinitesimal variations
\begin{equation*}
	A_i\longmapsto A_i+\delta A_i,
	\qquad
	B_i\longmapsto B_i+\delta B_i,
	\qquad i=0,1.
\end{equation*}
By \eqref{inter-field},
we have
\begin{align*}
	\delta A_t&=\delta A_0+t \delta \Theta, \qquad \delta\Theta=\delta A_1-\delta A_0,  \\
	\delta B_t&=\delta B_0+t\delta \Phi, \qquad \delta\Phi =\delta B_1-\delta B_0.
\end{align*}
The variations of the corresponding curvatures are given by
\begin{align}\label{q31}
	\delta F_t
	=D_t\delta A_t+\mu_1(\delta B_t),
	\qquad
	\delta H_t
	=D_t\delta B_t+\mu_2(\delta A_t,B_t)
	-\frac12\mu_3(A_t,A_t,\delta A_t).
\end{align}

Substituting \eqref{q31}  into the variation of the transgression
form gives
\begin{align}\label{eq01}
	&\delta T^{2n+2}(A_0,B_0;A_1,B_1)\nonumber\\
	=&\int_{0}^{1}dt \Big( n\langle \delta \Theta, F^{n-1}_t;H_t\rangle +n(n-1)\langle \Theta, D_t \delta A_t +\mu_1(\delta B_t), F^{n-2}_t; H_t\rangle +n\langle \Theta, F^{n-1}_t; D_t\delta B_t  \nonumber\\
	&+ \mu_2(\delta A_t, B_t)-\dfrac{1}{2}\mu_3(A_t, A_t, \delta A_t)\rangle+n\langle D_t \delta A_t +\mu_1(\delta B_t), F^{n-1}_t;\Phi\rangle +\langle F^n_t;\delta \Phi\rangle
	\Big).
\end{align}
Using Lemma~\ref{thm:d-pairing}, the graded Leibniz rule for $D_t$,
the higher Bianchi identities, and the cyclicity properties of the
pairing,  we have the following identities:
	\begin{align*}
		\langle \Theta,D_t\delta A_t,F_t^{n-2};H_t\rangle
		=&
		\langle D_t\Theta,\delta A_t,F_t^{n-2};H_t\rangle+
		\langle\Theta,\delta A_t,F_t^{n-2};
		\mu_2(F_t,B_t)-\frac12\mu_3(A_t,A_t,F_t)\rangle
		\nonumber\\
		&-
		d\langle\Theta,\delta A_t,F_t^{n-2};H_t\rangle,
		\\
		\langle\Theta,F_t^{n-1};D_t\delta B_t\rangle
		=&
		\langle D_t\Theta,F_t^{n-1};\delta B_t\rangle
		-(n-1)\langle\Theta,\mu_1(H_t),F_t^{n-2};\delta B_t\rangle
		-
		d\langle\Theta,F_t^{n-1};\delta B_t\rangle,
		\\
		\langle D_t\delta A_t,F_t^{n-1};\Phi\rangle
		=&
		(n-1)\langle\delta A_t,\mu_1(H_t),F_t^{n-2};\Phi\rangle+
		\langle\delta A_t,F_t^{n-1};D_t\Phi\rangle
		+d\langle\delta A_t,F_t^{n-1};\Phi\rangle.
	\end{align*}
Thus, the remaining $\mu_2$- and $\mu_3$-dependent terms in \eqref{eq01} are rearranged by
using the invariance identities
\eqref{eq:invP-mu2-forms} and \eqref{eq:invP-mu3-forms}:
	\begin{align*}
		\langle \Theta, \delta A_t, F^{n-2}_t; \mu_2(F_t, B_t)\rangle =&-\langle \mu_2(F_t, \Theta), \delta A_t, F^{n-2}_t;  B_t\rangle-\langle \Theta, \mu_2(F_t, \delta A_t), F^{n-2}_t; B_t\rangle,\\
		(n-1)\langle \Theta, \delta A_t, F^{n-2}_t; \mu_3(A_t, A_t, F_t)\rangle=&  \langle \delta A_t, F^{n-1}_t; \mu_3(A_t, A_t, \Theta)\rangle-\langle \Theta, F^{n-1}_t; \mu_3(A_t, A_t, \delta A_t)\rangle,\\
		\langle \delta A_t, F^{n-1}_t; \mu_2(\Theta, B_t)\rangle=&\langle \mu_2(\Theta, \delta A_t), F^{n-1}_t; B_t\rangle -(n-1)\langle \delta A_t, \mu_2(\Theta, F_t), F^{n-2}_t; B_t\rangle, \\
		\langle \Theta, F^{n-1}_t;\mu_2(\delta A_t, B_t)\rangle=&\langle \mu_2(\delta A_t, \Theta), F^{n-1}_t; B_t\rangle-(n-1)\langle \Theta, \mu_2(\delta A_t, F_t), F^{n-2}_t; B_t\rangle.
	\end{align*}
	Substituting the above identities into \eqref{eq01} and simplifying terms, we obtain
		\begin{align}\label{TTT1}
			\delta T^{2n+2}(A_0,B_0;A_1,B_1)
			&=\int_{0}^{1}\!\!dt \Big(n\langle \delta \Theta, F^{n-1}_t; H_t\rangle 
			+n(n-1)\langle \delta A_t, D_t \Theta+\mu_1(\Phi), F^{n-2}_t; H_t\rangle \nonumber\\
			&\qquad+n\langle \delta A_t, F^{n-1}_t; D_t \Phi +\mu_2(\Theta, B_t)
			-\tfrac12\mu_3(A_t, A_t, \Theta)\rangle \nonumber\\
			&\qquad+n\langle D_t \Theta+\mu_1(\Phi), F^{n-1}_t;\delta B_t\rangle 
			+\langle F^n_t, \delta \Phi\rangle\Big)-d\hat{\Pi},
		\end{align}
	where 
	\begin{equation}\label{Pi-hat}
		\hat{\Pi}= n  \int_{0}^{1} dt \Big((n-1) \langle \Theta, \delta A_t, F^{n-2}_t;H_t \rangle 
		+ \langle \Theta, F^{n-1}_t; \delta B_t \rangle 
		+ \langle \delta A_t, F^{n-1}_t; \Phi \rangle\Big).
	\end{equation}
	
	The bulk integrand in \eqref{TTT1} is observed to be a total derivative with respect to the interpolation parameter $t$. A direct computation then yields
	\begin{align*}
		\dfrac{\partial}{\partial t}\delta A_t&=\delta \Theta,	\qquad \dfrac{\partial}{\partial t}F_t=D_t\Theta +\mu_1(\Phi),\\
		\dfrac{\partial}{\partial t}\delta B_t&=\delta \Phi,\qquad  \dfrac{\partial}{\partial t}H_t=D_t\Phi + \mu_2(\Theta, B_t)-\dfrac{1}{2}\mu_3(A_t, A_t, \Theta).
	\end{align*}
	Substituting these relations into the derivatives of the following two terms yields
	\begin{subequations}
		\begin{align*}
			\dfrac{\partial}{\partial t}\langle \delta A_t, F^{n-1}_t; H_t\rangle =&\langle \delta\Theta, F^{n-1}_t; H_t\rangle+(n-1)\langle\delta A_t,  D_t \Theta+\mu_1(\Phi),F^{n-2}_t; H_t\rangle\nonumber\\ 
			&+\langle \delta A_t, F^{n-1}_t;D_t \Phi+\mu_2(\Theta, B_t)-\dfrac{1}{2}\mu_3(A_t, A_t, \Theta) \rangle, \\ 
			\dfrac{\partial}{\partial t}\langle F^n_t; \delta B_t\rangle =&n\langle D_t \Theta +\mu_1(\Phi),F^{n-1}_t; \delta B_t\rangle +\langle F^{n}_t; \delta \Phi \rangle.
		\end{align*}
	\end{subequations}
	Consequently, the variation of the transgression form admits the compact expression
	\begin{align*}
		\delta T^{2n+2}(A_0, B_0;A_1, B_1)
		=&\int_{0}^{1}dt \Big(n \dfrac{\partial}{\partial t}\langle \delta A_t, F^{n-1}_t; H_t\rangle + \dfrac{\partial}{\partial t}\langle F^n_t; \delta B_t\rangle \Big)- d \hat{\Pi}\\
		=&n \langle \delta A_1, F^{n-1}_1; H_1\rangle + \langle F^n_1;\delta B_1\rangle - n \langle \delta A_0, F^{n-1}_0; H_0\rangle- \langle F^n_0; \delta B_0\rangle - d \hat{\Pi}.
	\end{align*}
	
	Accordingly, the variation of the transgression action is
		\begin{align}
			\delta S^{2n+2}_{\mathrm T}(A_0,B_0;A_1,B_1)
			=&\int_M \delta T^{2n+2}(A_0,B_0;A_1,B_1)\nonumber\\
			=&\int_M\Big( n \langle \delta A_1, F^{n-1}_1; H_1\rangle + \langle F^n_1;\delta B_1\rangle- n \langle \delta A_0, F^{n-1}_0; H_0\rangle\nonumber\\
			& - \langle F^n_0; \delta B_0\rangle \Big)-\int_{\partial M}  \hat{\Pi}.\label{eq-TT}
		\end{align}
	This result (but for the precise form of the boundary term $ \hat{\Pi}$) can be intuitively anticipated directly from the higher Chern--Weiltheorem.  An explicit computation confirms this expectation and yields equation \eqref{Pi-hat}.
	Then the field equations are given by
	\begin{align*}
		\langle X, F^{n-1}_i; H_i\rangle &=0, \qquad 
		\forall\, X\in \mathfrak{L}_0,\\
		\langle F^{n}_i; Y\rangle&=0, \qquad 
		\forall\, Y\in \mathfrak{L}_{-1},
	\end{align*}
	where $i=0,1$. 
	The boundary conditions are obtained by requiring the vanishing of $ \hat{\Pi}$ on $\partial M$:
	\begin{equation*}
		\int_{0}^{1} dt \Big((n-1) \langle \Theta, \delta A_t, F^{n-2}_t;H_t \rangle 
		+ \langle \Theta, F^{n-1}_t; \delta B_t \rangle 
		+ \langle \delta A_t, F^{n-1}_t; \Phi \rangle\Big)\Big|_{\partial M}=0.
	\end{equation*}
	Thus, two independent HCS theories defined on the manifold $M$ become inextricably coupled at the boundary.  
	This feature is identical to the corresponding features in ordinary transgression theories \cite{FIERPS} and strict HCS theories \cite{DHS-4}.

	We now turn to the gauge invariance of the higher transgression action, which is the subject of the following proposition.
	\begin{proposition}\label{prop:gauge-inv-closed}
		Let $M$ be a manifold without boundary. Then the transgression action $S^{2n+2}_{\mathrm T}(A_0,B_0; A_1,B_1)$ defined in \eqref{stre-ac} is invariant under the infinitesimal gauge transformations \eqref{eq:gauge-tra-comp}:
		\begin{equation}\label{stars}
			\delta_{a,b} S^{2n+2}_{\mathrm T}(A_0,B_0;A_1,B_1)=0.
		\end{equation}
	\end{proposition}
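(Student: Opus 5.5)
Our plan is to reuse the general variation formula \eqref{eq-TT}, specialized to the case where both endpoint 2-connections undergo the same gauge transformation $\delta_{a,b}$ with common parameters $a\in\Omega^0(M)\otimes\mathfrak L_0$, $b\in\Omega^1(M)\otimes\mathfrak L_{-1}$. Because $M$ has no boundary, the term $\int_{\partial M}\hat\Pi$ drops out, and we only need the endpoint contributions to vanish:
\begin{equation*}
\delta_{a,b}S^{2n+2}_{\mathrm T}=\int_M\Big(n\langle\delta_{a,b}A_1,F_1^{n-1};H_1\rangle+\langle F_1^n;\delta_{a,b}B_1\rangle\Big)-\int_M\Big(n\langle\delta_{a,b}A_0,F_0^{n-1};H_0\rangle+\langle F_0^n;\delta_{a,b}B_0\rangle\Big).
\end{equation*}
Each bracket has the same form, namely the first variation of $\langle F^n;H\rangle$ in the form \eqref{delP} with $\delta=\delta_{a,b}$. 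It therefore suffices to show, for a single 2-connection $(A,B)$, that $n\langle\delta_{a,b}A,F^{n-1};H\rangle+\langle F^n;\delta_{a,b}B\rangle$ is exact.

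We substitute \eqref{gauge-tra-comp}, i.e.\ $\delta A=Da-\mu_1(b)$ and $\delta B=Db+\mu_2(B,a)+\tfrac12\mu_3(A,A,a)$, and treat the $b$-part and the $a$-part separately. For the $b$-terms, Lemma~\ref{thm:d-pairing} together with $DF=\mu_1(H)$ gives $\langle F^n;Db\rangle=d\langle F^n;b\rangle-n\langle\mu_1(H),F^{n-1};b\rangle$. The $\mu_1$-cyclicity \eqref{eq:invP-mu1-forms} exchanges this last term with $n\langle\mu_1(b),F^{n-1};H\rangle$, which cancels $-n\langle\mu_1(b),F^{n-1};H\rangle$; we will check the signs using the form degrees $1$ and $3$. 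For the $a$-terms, we write $n\langle Da,F^{n-1};H\rangle=d\langle a,F^{n-1};H\rangle-n\langle a,F^{n-1};DH\rangle$, where the $DF$ contributions vanish exactly as in the proof of Proposition~\ref{prop:PC-closed}. We then insert $DH=\mu_2(F,B)-\tfrac12\mu_3(A,A,F)$. The $\mu_2$-invariance \eqref{eq:invP-mu2-forms} should turn $-n\langle a,F^{n-1};\mu_2(F,B)\rangle$ into $\langle F^n;\mu_2(a,B)\rangle$ up to sign, which cancels against $\langle F^n;\mu_2(B,a)\rangle$ by Proposition~\ref{lem:graded-antisym-forms}. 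Likewise, \eqref{eq:invP-mu3-forms} should cancel the $\mu_3(A,A,F)$ term against $\tfrac12\langle F^n;\mu_3(A,A,a)\rangle$, using the identity $n\langle\Theta,F^{n-1};\mu_3(F,A,A)\rangle=-\langle F^n;\mu_3(A,A,\Theta)\rangle$ already used in the proof of Theorem~\ref{HCW-t}, now with $\Theta$ replaced by the 0-form $a$.

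Each endpoint contribution is then $d\big(n\langle a,F^{n-1};H\rangle+\langle F^n;b\rangle\big)$, which integrates to zero on the closed manifold $M$. Hence \eqref{stars} holds. An alternative route to the same conclusion is to note that $\Gamma_{2n+3}$ is gauge invariant (Proposition~\ref{prop:PC-invariant}), which gives $d(\delta_{a,b}T)=0$, though that only yields closedness and not exactness. We therefore prefer the direct argument above.

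The main obstacle is the sign bookkeeping in the $\mu_3$ cancellation. That cancellation needs a combination of \eqref{eq:invP-mu3-forms}, the graded symmetry of $\mu_3$, and the correct factor $\tfrac12$ against $n$. We would first verify it for $n=1$, where it reduces to \eqref{eq:4dHCS-gauge-var-expanded} and the 2-Bianchi identities, and then use symmetry of the $F$-slots for general $n$.
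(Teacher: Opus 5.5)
Your proposal is correct and takes essentially the same route as the paper. You reduce via \eqref{eq-TT} to showing that $n\langle\delta_{a,b}A,F^{n-1};H\rangle+\langle F^n;\delta_{a,b}B\rangle$ is exact, then use Lemma~\ref{thm:d-pairing}, the 2-Bianchi identities and the $\mu_1$-, $\mu_2$-, $\mu_3$-invariance relations to reach $d\big(n\langle a,F^{n-1};H\rangle+\langle F^n;b\rangle\big)$. The $\mu_3$ relation does hold with the 0-form $a$ in place of $\Theta$, because only even-degree blocks get permuted. One small slip: the exact term from $n\langle Da,F^{n-1};H\rangle$ should be $n\,d\langle a,F^{n-1};H\rangle$, as your final expression already has it.
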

	\begin{proof}
		Under the transformation \eqref{eq:gauge-tra-comp}, the fields $A_i$ and $B_i$ ($i=0,1$) transform as
		\begin{align}\label{81}
			\delta_{a,b}A_i = da + \mu_2(A_i,a) - \mu_1(b), \qquad
			\delta_{a,b}B_i = db + \mu_2(A_i,b) - \mu_2(a,B_i) + \frac12\mu_3(a,A_i,A_i).
		\end{align}
		By the variational formula \eqref{eq-TT}, it suffices for proving \eqref{stars} to show that the integrand
		\begin{equation}\label{eq:I-expand}
			n\langle \delta_{a,b}A, F^{n-1};H\rangle + \langle F^n;\delta_{a,b}B\rangle
		\end{equation}
		is an exact differential form. Substituting \eqref{81} into \eqref{eq:I-expand} and expanding gives
		\begin{align}\label{eq:I-expandd}
			n\langle da+\mu_2(A,a)-\mu_1(b),F^{n-1};H\rangle 
			+ \langle F^n;db+\mu_2(A,b)-\mu_2(a,B)+\tfrac12\mu_3(a,A,A)\rangle.
		\end{align}
		
		Combining \eqref{eq:d-pairing} with the 2-Bianchi identities \eqref{2-bianchi}, the terms in \eqref{eq:I-expandd} that involve $da$ and $db$ reduce to
		\begin{align}
			\langle da,F^{n-1};H\rangle
			=\;&d\langle a,F^{n-1};H\rangle
			-\langle \mu_2(A,a),F^{n-1};H\rangle -\big\langle a,F^{n-1};\mu_2(F,B)-\frac12\mu_3(A,A,F)\big\rangle,\label{eq:q0}\\
			\langle F^n;db\rangle
			=\;&d\langle F^n;b\rangle
			-n\langle \mu_1(H),F^{n-1};b\rangle
			-\langle F^n;\mu_2(A,b)\rangle.\label{eq:q1}
		\end{align}
		Furthermore, the invariance identities \eqref{eq:invP-mu2-forms} and \eqref{eq:invP-mu3-forms} imply the following relations
		\begin{align}
			\langle a,F^{n-1};\mu_2(F,B)\rangle
			&=-\langle \mu_2(F,a),F^{n-1};B\rangle,\label{eq:q2}\\
			\langle F^n;\mu_2(a,B)\rangle
			&=n\langle \mu_2(F,a),F^{n-1};B\rangle,\label{eq:q3}\\
			n\langle a,F^{n-1};\mu_3(A,A,F)\rangle
			&=-\langle F^n;\mu_3(A,A,a)\rangle.\label{eq:q4}
		\end{align}
		Substituting \eqref{eq:q0}--\eqref{eq:q4} into \eqref{eq:I-expandd}, we obtain
		\begin{equation*}
			n \,\langle \delta_{a,b}A, F^{\,n-1}; H\rangle
			+ \langle F^{\,n}; \delta_{a,b}B\rangle =nd\langle a, F^{n-1};H\rangle +d\langle F^n; b\rangle.
		\end{equation*}
		
		Finally, applying this identity to both $(A_0,B_0)$ and $(A_1,B_1)$, integrating over $\partial M$, and using the variational formula \eqref{eq-TT} yields
		\begin{align*}
			\delta_{a,b}S^{2n+2}_{\mathrm T}(A_0,B_0;A_1,B_1)
			=\int_{\partial M}\Big(
			n\langle a,F_1^{n-1};H_1\rangle+\langle F_1^n;b\rangle
			-n\langle a,F_0^{n-1};H_0\rangle-\langle F_0^n;b\rangle
			\Big)=0.
		\end{align*}
		Thus, the transgression action is invariant under infinitesimal gauge transformations.
	\end{proof}

	Consequently, within the framework of 2-term 
	$L_\infty$-algebras, the higher transgression action is invariant under infinitesimal gauge transformations on manifolds without boundary. In particular, the strict Lie 
	2-algebra transgression gauge theory~\cite{DHS-4} emerges as a special case and therefore inherits the same infinitesimal gauge invariance. Since HCS theories are themselves concrete realizations of higher transgression gauge theories, their infinitesimal gauge invariance on closed manifolds follows immediately. We omit the details here.

	\section{The Extended Cartan Homotopy Formula in $2n+2$ Dimensions}\label{6-s}
	In ordinary and generalized Chern--Simons theories, the ECHF provides a unified derivation of the Chern--Weil theorem and the triangle equation~\cite{FEP,FIERPS,FIPSPSS}, and clarifies the relation between transgression forms and CS forms~\cite{PRRJ}. In this section, we show that the constructions and results for the ECHF can be generalized to the case in which the invariant form is  the $(2n+3)$-dimensional higher Pontryagin--Chern form based on a 2-term $L_{\infty}$-algebra. We shall briefly state the relevant results and give complete proofs for the analogous case of $(2n+2)$-dimensional strict HCS theory~\cite{DHS-5}.

	Let $\{(A_i,B_i)\}_{i=0}^{r+1}$ be a collection of $\mathfrak{L}$-valued $2$-connections, where $\mathfrak{L}$ is a $2$-term $L_\infty$-algebra.  Let $\triangle_{r+1}$ denote the
	oriented $(r+1)$-simplex with coordinates $\{t^i\}_{i=0}^{r+1}$,  satisfying
	\begin{equation*}
		\sum_{i=0}^{r+1} t^i = 1,
		\qquad
		t^i\ge 0,
		\qquad i=0,\dots,r+1.
	\end{equation*}
	We introduce the interpolating fields
	\begin{equation*}
		A_t \coloneqq  \sum_{i=0}^{r+1} t^i A_i, 
		\qquad 
		B_t \coloneqq  \sum_{i=0}^{r+1} t^i B_i,
	\end{equation*}
	which define a family of $2$-connection  parametrized by $\triangle_{r+1}$. 
	The corresponding curvatures are
	\begin{align}\label{eq:FtHt}
		F_t=dA_t+\frac12\,\mu_2(A_t,A_t)+\mu_1(B_t),
		\qquad
		H_t=dB_t+\mu_2(A_t,B_t)-\frac16\,\mu_3(A_t,A_t,A_t).
	\end{align}
	Notice that, in the semistrict case, the interpolant need not transform as a $2$-connection under the gauge transformations, owing to the nonlinear $\mu_3$-term in the transformation of $B$.  
	
	We can picture each $(A_i,B_i)$ as associated to the $i$-th vertex of $\triangle_{r+1}$. For example for $r=1$, the simplex $\triangle_{2}$ and its oriented boundary orientation are represented by
	\begin{center}
		$\triangle_{2}=$  
		\begin{tikzcd}[
			row sep=2.6em,
			column sep=1.6em,
			arrows={-},
			execute at end picture={
				\begin{pgfonlayer}{background}
					\fill[gray!15] (A.center) -- (B.center) -- (C.center) -- cycle;
				\end{pgfonlayer}
			}
			]
			& |[alias=C]| (A_2, B_2) \arrow[dr] \arrow[dl] & \\
			|[alias=A]| (A_0, B_0) \arrow[rr] & & |[alias=B]| (A_1, B_1)
			\arrow[from=2-1, to=2-3, phantom,
			pos=.5, overlay,
			"\raisebox{3.9em}{\Large$\circlearrowleft$}"]
		\end{tikzcd}
		\qquad
		$\partial \triangle_{2}=$  
		\begin{tikzcd}[row sep=2.6em, column sep=1.6em]
			& (A_2, B_2) \arrow[dl, no head] & \\
			(A_0, B_0) \arrow[rr, no head] & & (A_1, B_1) \arrow[ul, no head]
			\arrow[no head, from=1-2, to=2-1, draw, overlay,
			decoration={markings, mark=at position .5 with {\arrow{Stealth}}},
			postaction=decorate]
			\arrow[no head, from=2-1, to=2-3, draw, overlay,
			decoration={markings, mark=at position .5 with {\arrow{Stealth}}},
			postaction=decorate]
			\arrow[no head, from=2-3, to=1-2, draw, overlay,
			decoration={markings, mark=at position .5 with {\arrow{Stealth}}},
			postaction=decorate]
		\end{tikzcd}
	\end{center}
	where the arrows on $\partial \triangle_{2}$ indicate the induced
	orientation.
	We accordingly denote the simplex as
	\begin{equation*}
		\triangle_{r+1}=(A_0,B_0;\dots;A_{r+1},B_{r+1})
	\end{equation*}
	and its oriented boundary as
	\begin{equation*}
		\partial \triangle_{r+1}
		=\sum_{i=0}^{r+1}(-1)^i\,\triangle_r^{(i)}(A_0,B_0;\dots;\widehat{A_i},\widehat{B_i};\dots;A_{r+1},B_{r+1}),
	\end{equation*}
	where the hat denotes omission of the corresponding pair.
	
	With the preceding notation, the ECHF is given by~\cite{Zumino}
	\begin{equation}\label{eq:ECHF-int}
		\int_{\partial \triangle_{r+1}}\frac{l_t^{\,p}}{p!}\,\Pi
		=
		\int_{\triangle_{r+1}}\frac{l_t^{\,p+1}}{(p+1)!}\,d\Pi
		+(-1)^{r}\,
		d\int_{\triangle_{r+1}}\frac{l_t^{\,p+1}}{(p+1)!}\,\Pi .
	\end{equation}
	Here $\Pi$ denotes a polynomial in the forms  $\{A_t,B_t,F_t,H_t,d_tA_t,d_tB_t,d_tF_t,d_tH_t\}$ which is also an $m$-form on $M$ and a $q$-form on $\triangle_{r+1}$, with $m\ge p$ and $p+q=r$.
	We denote by $d$ and $d_t$ the exterior derivatives on $M$ and $\triangle_{r+1}$, respectively. The homotopy derivation $l_t$ is defined by the mapping
	\begin{equation*}
		l_t:\Omega^a(M)\times \Omega^b(\triangle_{r+1})\longrightarrow \Omega^{a-1}(M)\times \Omega^{b+1}(\triangle_{r+1})
	\end{equation*}
	and it satisfies Leibniz's rule. Together with $d$ and $d_t$, these operators satisfy the graded algebra relations
	\begin{equation}\label{eq:graded-alg}
		d^2=d_t^2=0,\qquad
		d\,d_t+d_t\,d=0,\qquad
		l_t d-d\,l_t=d_t,\qquad
		l_t d_t=d_t l_t.
	\end{equation}

	Following \cite{FIPSPSS, DHS-5}, we define the action of  $l_t$ on the
	interpolating fields by
	\begin{equation*}
		l_tA_t=0,
		\qquad
		l_tB_t=0,
	\end{equation*}
	which together with \eqref{eq:FtHt} and \eqref{eq:graded-alg}  implies
	\begin{equation}\label{eq:lt-on-curvatures}
		l_tF_t=d_tA_t=dt\dfrac{\partial A_t}{\partial t},
		\qquad
		l_tH_t=d_tB_t=dt\dfrac{\partial B_t}{\partial t}.
	\end{equation}
	
	We now specialize to the particular case of \eqref{eq:ECHF-int} in which the invariant polynomial is given by
	\begin{equation*}
		\Pi\coloneqq \big\langle F_t^{\,n},\,H_t\big\rangle,
	\end{equation*}
	which is closed (that is, $d\Pi=0$) by Proposition \ref{prop:PC-closed}.
	Moreover, $\Pi$ is a 0-form on $\triangle_{p+1}$, i.e., $q=0$, and $\Pi$ is a $(2n+3)$-form on $M$, i.e., $m=2n+3$. Hence, setting $r=p$ in \eqref{eq:ECHF-int}, we obtain
	\begin{equation}\label{eq:higher-descent}
		\int_{\partial \triangle_{p+1}}\frac{l_t^{\,p}}{p!}\,
		\big\langle F_t^{\,n};H_t\big\rangle
		=
		(-1)^p\,d\int_{\triangle_{p+1}}\frac{l_t^{\,p+1}}{(p+1)!}\,
		\big\langle F_t^{\,n};H_t\big\rangle.
	\end{equation}
	Formally, the degree constraint allows $p=0, \dots, 2n+3$.
	These identities constitute the complete set of descent equations for the HCS gauge theory associated with $2$-term $L_\infty$-algebras. 
	They provide the common structural origin of the higher Chern--Weil
	theorem and the triangle equation, which will be derived in the next
	subsection.

	\subsection[Case p=0:  higher Chern--Weil theorem] {Case $p=0$:  higher Chern--Weil theorem}\label{hECHF1}
	We first consider the case $p=0$  of the descent equation \eqref{eq:higher-descent}. For the oriented $1$-simplex $\triangle_1=(A_0,B_0;A_1,B_1)$,
	the descent relation reads
	\begin{equation}\label{dec1}
		\int_{\partial \triangle_1}\langle F_t^{\,n};H_t\rangle
		=
		d\int_{\triangle_1} l_t\langle F_t^{\,n};H_t\rangle.
	\end{equation}
	Here $(F_t,H_t)$ denotes the $2$-curvature associated with the
	interpolation
	\begin{equation*}
		A_t=(1-t)A_0+tA_1,
		\qquad
		B_t=(1-t)B_0+tB_1,
		\qquad
		0\leq t\leq1.
	\end{equation*}
	The induced orientation of the boundary is
	$\partial\triangle_1=(A_1,B_1)-(A_0,B_0)$.
	Therefore, we have
	\begin{equation}\label{dec1-l}
		\int_{\partial\triangle_1}
		\langle F_t^{\,n};H_t\rangle
		=
		\langle F_1^{\,n};H_1\rangle
		-
		\langle F_0^{\,n};H_0\rangle.
	\end{equation}
	
	By defining the difference fields
	$\Theta_{01}= A_1-A_0$ and
	$\Phi_{01}= B_1-B_0$, and applying \eqref{eq:lt-on-curvatures}, we obtain
	\begin{equation*}
		l_tF_t=dt\,\Theta_{01},
		\qquad
		l_tH_t=dt\,\Phi_{01}.
	\end{equation*}
	On the other hand, the symmetric nature of $\langle F_t^{\,n};H_t\rangle$ implies that
	\begin{equation}\label{dec1-r}
		l_t\langle F_t^{\,n};H_t\rangle
		=
		n\langle l_tF_t,F_t^{\,n-1};H_t\rangle
		+
		\langle F_t^{\,n};l_tH_t\rangle.
	\end{equation}
	
	By \eqref{dec1} and \eqref{dec1-l}, integrating \eqref{dec1-r} over the 1-simplex $\triangle_1$ immediately yields the higher Chern--Weil transgression formula (Theorem~\ref{HCW-t}):
	\begin{equation*}
		\langle F_1^{\,n};H_1\rangle
		-
		\langle F_0^{\,n};H_0\rangle
		=
		d\,T^{2n+2}(A_0,B_0;A_1,B_1),
	\end{equation*}
	where the transgression form is explicitly given by
	\begin{align}\label{trans-1}
		T^{2n+2}(A_0,B_0;A_1,B_1)
		\coloneqq
		\int_{\triangle_1}
		l_t\langle F_t^{\,n};H_t\rangle
		=
		\int_0^1dt\,
		\Big( 
		n\langle \Theta_{01},F_t^{\,n-1};H_t\rangle
		+
		\langle F_t^{\,n};\Phi_{01}\rangle
		\Big).
	\end{align}
	This concludes our derivation of the higher Chern--Weil theorem as a corollary of the ECHF.

	\subsection[Case p=1:  higher triangle equation] {Case $p=1$:  higher triangle equation}\label{hECHF2}
	We next consider the case
	$p=1$  of  the descent equation \eqref{eq:higher-descent}. For the oriented 2-simplex
	$\triangle_2=(A_0,B_0;\,A_1,B_1;\,A_2,B_2)$,
	the descent relation becomes
	\begin{equation}\label{eq:hte}
		\int_{\partial \triangle_{2}} l_t \big\langle F_t^{\,n}; H_t\big\rangle
		=
		-\,d \int_{\triangle_{2}} \frac{l_t^{2}}{2}\,\big\langle F_t^{\,n}; H_t\big\rangle.
	\end{equation}
	The interpolating fields are given by
	\begin{align*}
		A_t=t^0A_0+t^1A_1+t^2A_2,\qquad
		B_t=t^0B_0+t^1B_1+t^2B_2,
	\end{align*}
	where $t^0+t^1+t^2=1$  ($t^i\geq 0$, $i=0, 1, 2$).
	
	In this case, the oriented boundary of $\triangle_2$ is
	$\partial \triangle_2 = \triangle^{(0)}_1 - \triangle^{(1)}_1 + \triangle^{(2)}_1$,
	where
	\begin{equation*}
		\triangle^{(0)}_1=(A_1,B_1;\,A_2,B_2),\qquad
		\triangle^{(1)}_1=(A_0,B_0;\,A_2,B_2),\qquad
		\triangle^{(2)}_1=(A_0,B_0;\,A_1,B_1).
	\end{equation*}
	Thus, by \eqref{trans-1}, we get
	\begin{align}\label{eq:LHS-triangle}
		\int_{\partial \triangle_{2}} l_t \big\langle F_t^{\,n}; H_t\big\rangle
		=T^{2n+2}(A_1,B_1;\,A_2,B_2)
		- T^{2n+2}(A_0,B_0;\,A_2,B_2)+ T^{2n+2}(A_0,B_0;\,A_1,B_1).
	\end{align}
	
	To evaluate the right-hand side of \eqref{eq:hte}, we further introduce the new
	coordinates
	\begin{equation*}
		t'\coloneqq 1-t^0,
		\qquad
		s\coloneqq t^2.
	\end{equation*}
	These satisfy
	\begin{equation*}
		t^0=1-t',
		\qquad
		t^1=t'-s,
		\qquad
		t^2=s,
	\end{equation*}
	and transform the simplex into the region $0\leq s\leq t'\leq 1$.
	Using these coordinates, we can write the interpolating fields as
	\begin{align*}
		A_{t}
		&=
		A_0+t'\Theta_{01}+s\Theta_{12},\qquad \Theta_{01}=A_1-A_0, \qquad \Theta_{12}=A_2-A_1,
		\nonumber\\
		B_{t}
		&=
		B_0+t'\Phi_{01}+s\Phi_{12}, \qquad \Phi_{01}=B_1-B_0, \qquad \Phi_{12}=B_2-B_1.
	\end{align*}
	By  \eqref{eq:lt-on-curvatures}, we obtain
	\begin{equation*}
		l_tF_{t}
		=
		dt'\,\Theta_{01}+ds\,\Theta_{12},
		\qquad
		l_tH_{t}
		=
		dt'\,\Phi_{01}+ds\,\Phi_{12}.
	\end{equation*}
	
	On the other hand,  Leibniz's rule for $l_t$ yields
	\begin{align*}
		\frac{l_t^2}{2!}\langle F_{t}^{\,n};H_{t}\rangle
		=
		\frac{n(n-1)}{2}
		\langle (l_tF_{t})^2,F_{t}^{\,n-2};H_{t}\rangle
		-
		n\langle l_tF_{t},F_{t}^{\,n-1};l_tH_{t}\rangle.
	\end{align*}
	By integrating the above identity over the 2-simplex $\triangle_{2}$, we get
	\begin{align}\label{eq:def-secondary-transgression}
		T^{2n+1}(A_0,B_0;A_1,B_1;A_2,B_2)
		\coloneqq&
		\int_{\triangle_2}
		\frac{l_t^2}{2!}\langle F_t^{\,n};H_t\rangle\nonumber\\
		=&
		\int_0^1dt'\int_0^{t'} ds\,
		\Big(
		n(n-1)
		\langle \Theta_{12},\Theta_{01},
		F_{t}^{\,n-2};H_{t}\rangle\nonumber\\
		&
		-n\langle \Theta_{01},F_{t}^{\,n-1};\Phi_{12}\rangle
		+n\langle \Theta_{12},F_{t}^{\,n-1};\Phi_{01}\rangle
		\Big),
	\end{align}
	which is a $(2n+1)$-form on $M$ associated with the oriented $2$-simplex $\triangle_2$.
	
	Combining \eqref{eq:hte}, \eqref{eq:LHS-triangle}, and
	\eqref{eq:def-secondary-transgression}, we obtain the higher triangle
	equation
	\begin{equation}\label{eq:higher-triangle}
		\begin{aligned}
			T^{2n+2}(A_0,B_0;A_2,B_2)
			=&
			T^{2n+2}(A_1,B_1;A_2,B_2)
			+
			T^{2n+2}(A_0,B_0;A_1,B_1)\\
			&+
			dT^{2n+1}(A_0,B_0;A_1,B_1;A_2,B_2).
		\end{aligned}
	\end{equation}
	Furthermore, this triangle identity \eqref{eq:higher-triangle} suggests a recursive decomposition of higher transgression forms, analogous to the subspace-separation procedure in ordinary CS theory
	\cite{FIERPS}. In the present work, we do not provide a detailed implementation of this construction in the semistrict setting, leaving it for future research.
	We emphasize here that the ECHF has enabled us to determine the precise form of the boundary contribution $T^{2n+1}(A_0,B_0;A_1,B_1;A_2,B_2)$ given by \eqref{eq:def-secondary-transgression}. This further reveals that the ECHF is the common origin of both the higher Chern--Weil theorem and the higher triangle equation.
	
	Finally,  specializing $(A_1, B_1)=(0, 0)$ in
	\eqref{eq:higher-triangle} and using \eqref{T-HCS} yields
	\begin{align*}
		T^{2n+2}(A_0,B_0;\,A_2,B_2)
		&= \mathfrak{C}^{2n+2}(A_2,B_2) - \mathfrak{C}^{2n+2}(A_0,B_0) 
		+ d\,T^{2n+1}(A_0,B_0;\,0,0;\,A_2,B_2).
	\end{align*}
	This shows that the higher transgression form is the difference of two HCS forms up to an exact term. The same relation can also be obtained directly by applying the Cartan homotopy formula to the HCS forms. The argument closely follows the one used in strict HCS theory \cite{DHS-5}, with the main difference being that the corresponding construction is based on semistrict $2$-term $L_\infty$-algebras.

	\section{Conclusion and outlook}\label{7-s}
	In this work we construct a semistrict HCS theory associated with balanced $2$-term $L_\infty$-algebras. Working within the homotopy MC formalism, we generalize  the four-dimensional  HCS theory to $2n+2$ dimensions. More specifically, we have established the following results:
	\begin{itemize}
		\item[(i)] Using \eqref{ei}, we construct the higher Pontryagin--Chern form associated with balanced $2$-term $L_\infty$-algebras. We prove that this form is closed and gauge invariant, and derive from it an explicit formula for the $(2n+2)$-dimensional HCS form.
		\item[(ii)] We generalize the Chern--Weil theorem to semistrict $2$-term $L_\infty$-algebras by means of a transgression construction that parallels the strict Lie $2$-algebra case. The associated higher transgression form admits an explicit integral expression, obtained by extending the linear-homotopy method of \cite{DHS-4,FIERPS}, and reduces to the HCS form for a trivial reference $2$-connection.
		\item[(iii)] Varying the higher transgression action reveals that two independent HCS theories defined on the manifold $M$ become inextricably coupled at the boundary, in analogy with ordinary CS theory and its strict higher counterparts. Moreover, when $M$ has no boundary, the higher transgression action is invariant under infinitesimal gauge transformations.
		\item[(iv)] We establish that the ECHF is compatible with the semistrict structure and provides a unified derivation of both the higher Chern--Weil theorem and the higher triangle equation. In the special case where the intermediate fields vanish, i.e., $(A_1,B_1)=(0,0)$, the higher triangle equation implies that the higher transgression form differs from the difference of the two corresponding HCS forms by an exact term.
	\end{itemize}

	Several natural directions remain open for future investigation. It is natural to ask whether the homotopy MC framework used in this paper admits an extension to $3$-gauge theories described by $2$-crossed modules and related models of strict higher gauge theory \cite{Faria_Martins_2011,doi:10.1063/1.4870640,Song,TRMV1,Radenkovic:2019qme}. Such a generalization would require the construction of appropriate higher invariant polynomials and the corresponding invariant forms for the associated $3$-term $L_\infty$-algebraic data. 
	Another question is whether the five-dimensional HCS theory constructed by Song et al. in the strict setting \cite{DHS-JHEP} possesses higher-dimensional analogues within the semistrict homotopy-theoretic framework introduced here. We leave these questions for future work.
	Although the significance of higher-dimensional HCS theory within mathematical physics remains to be fully understood, we hope that the present work makes some first steps toward its development.

	\section*{Acknowledgements}
	Danhua Song was supported by the China Postdoctoral Science Foundation (Grant No. 2026M793335).


\begin{thebibliography}{99}
	\bibitem{RZ-AKSZ}
	Zucchini, R.: AKSZ models of semistrict higher gauge theory. J. High Energy Phys. \textbf{2013}(3), 14 (2013)
	
	\bibitem{Fiorenza-Rogers-Schreiber}
	Fiorenza, D., Rogers, C.L., Schreiber,  U.: A higher Chern--Weil derivation of AKSZ $\sigma$-models. Int. J. Geom. Methods Mod. Phys. \textbf{10}(1), 1250078 (2013) 
	
	\bibitem{PRCS-2016}
	Ritter, P., Saemann, C.: $L_{\infty}$-algebra models and higher Chern--Simons theories. Rev. Math. Phys. \textbf{28}(9), 1650021 (2016)
	
	\bibitem{GR}
	Gagliardo, G., Rist, D., Saemann, C., Wolf, M.:	Adjusting higher Chern--Simons theory. \href{https://doi.org/10.48550/arXiv.2507.02082}{arXiv:2507.02082} (2025)
	
	\bibitem{BLCM}Jur\v{c}o, B., Raspollini, L., Saemann, C., Wolf, M.: $L_{\infty}$-algebras of classical field theories and the Batalin--Vilkovisky formalism. Fortsch. Phys. \textbf{67}(7), 1900025 (2019) 
	
	\bibitem{DHS-4} Song, D.H., Wu, K., Yang, J.: Higher Chern--Simons--Antoniadis--Savvidy forms based on crossed modules. Phys. Lett. B \textbf{848}, 138374 (2024) 
	
	\bibitem{DHS-5}Song, D.H.: Extended Cartan homotopy formula for higher Chern--Simons--Antoniadis--Savvidy theory. Phys. Lett. B \textbf{865}, 139471 (2025)
	
	
	\bibitem{IAGS}
	Antoniadis, I., Savvidy, G.: New gauge anomalies and topological invariants invarious dimensions. Eur. Phys. J. C \textbf{72}(9), 2140 (2012)
	
	\bibitem{FIPSPSS}
	Izaurieta, F., Salgado,  P., Salgado, S.: Chern--Simons--Antoniadis--Savvidy forms and standard supergravity. Phys. Lett. B \textbf{767}, 360--365 (2017)
	
	
	\bibitem{FIP} Izaurieta, F., Mu\~noz, I., Salgado, P.: A Chern--Simons gravity action in $d=4$. Phys. Lett. B \textbf{750}, 39--44 (2015)
	
	
	\bibitem{Girelli-Pfeiffer}
	Girelli, F., Pfeiffer, H.: Higher gauge theory-differential versus integral formulation. J. Math. Phys. \textbf{45}(10), 3949--3971 (2004)
	
	\bibitem{Baez-2007}
	Baez, J.C., Schreiber, U.: Higher gauge theory. In: Davydov, A. et al. (eds.) Categories in Algebra, Geometry and Mathematical Physics, pp. 7--30. AMS, Providence, Rhode Island (2007)%
	
	\bibitem{Baez.2010}
	Baez, J.C., Huerta, J.: An invitation to higher gauge theory. Gen. Relativity Gravitation \textbf{43}(9), 2335--2392 (2011) 
	
	\bibitem{Cagnacci-19}
	Cagnacci, Y., Codina, T., Marques, D.: $L_{\infty}$ algebras and tensor hierarchies in exceptional field theory and gauged supergravity. J. High Energy Phys. \textbf{2019}(1), 117 (2019)
	
	\bibitem{Zeitlin-09}
	Zeitlin, A.M.: String field theory-inspired algebraic structures in gauge theories. J. Math. Phys. \textbf{50}(6), 063501 (2009)
	
	\bibitem{Polchinski}
	Polchinski, J.: String theory. Vol. II. Cambridge Monographs on Mathematical Physics, Cambridge Univ. Press, Cambridge (1998)%
	
	\bibitem{Becker-2007}
	Becker, K., Becker, M., Schwarz, J.H.: String theory and M-theory. Cambridge Univ. Press, Cambridge (2007)%
	
	\bibitem{Baez-543}
	Baez, J.C.: An introduction to spin foam models of  BF theory and quantum gravity. In:  Gausterer, H., Grosse, H., Pittner, L.  (eds.) Geometry and quantum physics, pp. 25--93. Springer, Berlin (1999)
	
	\bibitem{CR-455}
	Rovelli, C.: Quantum gravity. Cambridge Monographs on Mathematical Physics, Cambridge Univ. Press, Cambridge (2004)%
	
	\bibitem{MSJS}
	Schlessinger, M., Stashe, J.: The Lie algebra structure of tangent cohomology and deformation theory. J. Pure Appl. Algebra \textbf{38}, 313--322 (1985) 
	
	\bibitem{TLJS}
	Lada, T., Stashe, J.: Introduction to SH Lie algebras for physicists.  Internat. J. Theoret. Phys.  \textbf{32}(7), 1087--1103 (1993)
	
	\bibitem{MP}
	Penkava, M.: $L_{\infty}$-algebras and their cohomology. \href{https://doi.org/10.48550/arXiv.q-alg/9512014}{arXiv:q-alg/9512014} (1995)
	
	\bibitem{Kraft-Schnitzer}
	Kraft, A., Schnitzer, J.: An introduction to $L_\infty $-algebras and their homotopy theory for the working mathematician. Rev. Math. Phys. \textbf{36}(1), 2330006 (2024) 
	
	\bibitem{B. Zwiebach-1993}
	Zwiebach, B.: Closed string field theory: quantum action and the BV master equation.
	Nuclear Phys. B \textbf{390}, 33--152 (1993)
	
	\bibitem{JCB-ASC}
	Baez, J.C., Crans, A.S.: Higher-dimensional algebra VI: Lie 2-algebras. Theor. Appl. Categor. \textbf{12},  492--528 (2004)
	
	\bibitem{Zucchini-2014-1}
	Soncini, E., Zucchini, R.: 4-D semistrict higher Chern--Simons theory I. J. High Energy Phys. \textbf{2014}(10), 79 (2014)
	
	\bibitem{Zucchini-2016}
	Zucchini, R.: A Lie based 4-dimensional higher Chern--Simons theory. J. Math. Phys. \textbf{57}(5),  052301 (2016)
	
	\bibitem{HC-2024}
	Chen, H., Liniado, J.: Higher gauge theory and integrability. Phys. Rev. D \textbf{110}(8), 086017  (2024)
	
	\bibitem{Zucchini-2021}
	Zucchini, R.: 4-d Chern--Simons theory: higher gauge symmetry and holographic aspects. J. High Energy Phys. \textbf{2021}(6), 25 (2021) 
	
	\bibitem{Zucchini-I} 
	Zucchini, R.: On higher holonomy invariants in higher gauge theory I. Int. J. Geom. Methods Mod. Phys. \textbf{13}(7), 1650090 (2016)  
	
	\bibitem{Zucchini-II}
	Zucchini, R.: On higher holonomy invariants in higher gauge theory II. Int. J. Geom. Methods Mod. Phys. \textbf{13}(7), 1650091 (2016)  
	
	\bibitem{Zucchini-2019}  
	Zucchini, R.: Wilson Surfaces for Surface Knots: a field theoretic route to higher knots. Fortschr. Phys.  \textbf{67}(8-9), 1910026 (2019)
	
	\bibitem{Schenkel-Vicedo}
	Schenkel, A., Vicedo, B.: 5d 2-Chern--Simons theory and 3d integrable field theories. Comm. Math. Phys. \textbf{405}(12),  293 (2024)
	
	\bibitem{DHS-JHEP}
	Song, D.H., Wu, M.Y., Wu, K., Yang, J.: Higher Chern--Simons based on (2-)crossed modules. J. High Energy Phys. \textbf{2023}(7), 207 (2023)
	
	\bibitem{Salgado}
	Salgado, S.: On the $L_{\infty}$ formulation of Chern--Simons theories. J. High Energy Phys. \textbf{2022}(4), 142 (2022)
	
	\bibitem{SS}
	Salgado, S.: Gauge-invariant theories and higher-degree forms. J. High Energy Phys. \textbf{2021(10)}, 66 (2021)
	
	\bibitem{PSSS}
	Salgado, P., Salgado, S.: Extended gauge theory and gauged free differential algebras. Nuclear Phys. B \textbf{926}, 179--199 (2018) 
	
	\bibitem{GS}
	Savvidy, G.: Topological mass generation in four-dimensional gauge theory. Phys. Lett. B \textbf{694(1)}, 65--73 (2010) 
	
	\bibitem{GS1}
	Savvidy, G.: Extension of Chern--Simons forms and new gauge anomalies. Internat. J. Modern Phys. A \textbf{29}(3-4), 1450027 (2014)
	
	\bibitem{SKGS}
	Konitopoulos, S., Savvidy, G.: Extension of Chern--Simons forms. J. Math. Phys. \textbf{55}(6), 062304 (2014) 
	
	\bibitem{FEP}
	Izaurieta, F., Rodr\'{i}guez, E., Salgado, P.: The extended Cartan homotopy formula and a subspace separation method for Chern--Simons theory. Lett. Math. Phys. \textbf{80}(2), 127--138 (2007)
	
	
	\bibitem{Zumino}
	Ma$\tilde{n}$es, J., Stora,  R., Zumino, B.: Algebraic study of chiral anomalies. Comm. Math. Phys. \textbf{102}(1), 157--174 (1985)
	
	\bibitem{PRRJ}
	Mora, P.,  Olea, R., Troncoso, R., Zanelli, J.: Transgression forms and extensions of Chern--Simons gauge theories. J. High Energy Phys.  \textbf{2006}(2), 067 (2006)
	
	\bibitem{BTLCM}
	Jur\v{c}o, B., Macrelli, T., Raspollini, L., Saemann, C., Wolf, M.: $L_{\infty}$-algebras, the BV formalism, and classical fields. Fortschr. Phys. \textbf{67}(8-9), 1910025 (2019)%
	
	\bibitem{ASCFS}
	Cattaneo, A.S.,  Sch\"atz, F.: Introduction to supergeometry. Rev. Math. Phys. \textbf{23}(6), 669--690 (2011)
	
	\bibitem{NW}
	Woodhouse, N.M.J.: Introduction to analytical dynamics. Springer Undergraduate Mathematics Series, Springer, London (2009)%
	
	\bibitem{OZ} 
	Ortolani, F., Zucchini, R.: Higher Chern--Simons gauge theory. PhD Thesis, Universit{\`a} di Bologna (2015)%
	
	\bibitem{Kontsevich-92}
	Kontsevich, M.: Feynman diagrams and low-dimensional topology. In: Joseph, A., Mignot, F., Murat, F., Prum, B., Rentschler, R. (eds.) First European Congress of Mathematics Paris, pp. 97--121. Birkh\"auser, Basel (1994)
	
	\bibitem{BJCSMW} 
	Jurco, B., Saemann, C., Wolf, M.: Higher groupoid bundles, higher spaces, and self-dual tensor field equations. Fortschr. Phys. \textbf{64}(8-9), 674--717 (2016)
	
	\bibitem{WS}
	Wu, M.Y., Song, D.H.: Higher descent equations based on 2-term $L_{\infty}$-algebras. \href{https://doi.org/10.48550/arXiv.2603.27588}{arXiv: 2603.27588} (2026)
	
	\bibitem{GGGS}
	Georgiou, G., Savvidy, G.: Mixed symmetry tensor fields and new topological invariants. \href{https://doi.org/10.48550/arXiv.1212.5228}{arXiv:1212.5228} (2012)
	
	\bibitem{BZ}
	Zumino, B.: Chiral anomalies and differential geometry. In: Sibold, K. (eds) Relativity, groups and topology, II, pp. 1291--1322. North-Holland, Amsterdam (1984)
	
	\bibitem{FIERPS}
	Izaurieta, F., Rodr\'{i}guez,  E., Salgado, P.: On transgression forms and Chern--Simons (super)gravity. \href{https://doi.org/10.48550/arXiv.hep-th/0512014}{arXiv:hep-th/0512014} (2005)
	
	
	\bibitem{Faria_Martins_2011} 
	Martins, J.F., Picken, R.: The fundamental Gray 3-groupoid of a smooth manifold and local 3-dimensional holonomy based on a 2-crossed module.  Differential Geom. Appl.  \textbf{29}(2), 179--206 (2011)
	
	\bibitem{doi:10.1063/1.4870640}  
	Wang, W.: J. Math. Phys. On 3-gauge transformations, 3-curvatures, and gray-categories. J. Math. Phys. \textbf{55}(4), 043506 (2014)
	
	\bibitem{Song}
	Song, D.H.,  Lou, K., Wu, K., Yang, J., Zhang, F.H.: 3-form Yang--Mills based on 2-crossed modules. J. Geom. Phys. \textbf{178}, 104537 (2022)
	
	\bibitem{TRMV1}  
	Radenkovi{\'c}, T., Vojinovi{\'c}, M.: Gauge symmetry of the $3$BF theory for a generic Lie three-group. Classical Quantum Gravity \textbf{39}(13), 135009 (2022)
	
	\bibitem{Radenkovic:2019qme}  
	Radenkovi{\'c}, T., Vojinovi{\'c}, M.: Higher gauge theories based on 3-groups. J. High Energy Phys. \textbf{2019}(10), 222  (2019) 
	
	
	
	
	
	
	
	
	
	
	
	
	
	
	
	
	
	
	
	
	
	
	
\end{thebibliography}
\end{document}